\newcommand*{\extra}[1]{}
\theoremstyle{plain}
\newtheorem{theorem}{Theorem}[section]
\newtheorem{lemma}[theorem]{Lemma}
\newtheorem{claim}[theorem]{Claim}
\newtheorem{corollary}[theorem]{Corollary}
\newtheorem{proposition}[theorem]{Proposition}
\theoremstyle{definition}
\newtheorem{definition}[theorem]{Definition}
\newtheorem{remark}[theorem]{Remark}
\newcommand*{\cD}{\mathcal{D}}
\newcommand*{\cE}{\mathcal{E}}
\newcommand*{\cI}{\mathcal{I}}
\newcommand*{\cM}{\mathcal{M}}
\newcommand*{\cS}{\mathcal{S}}
\newcommand*{\cT}{\mathcal{T}}
\newcommand*{\cX}{\mathcal{X}}
\newcommand*{\cY}{\mathcal{Y}}
\newcommand*{\cZ}{\mathcal{Z}}
\newcommand*{\eps}{\varepsilon}
\newcommand*{\id}{\mathrm{id}}
\newcommand*{\ident}{\mathrm{id}}
\newcommand*{\tr}{\mathrm{tr}}
\newcommand{\sket}[1]{{\ensuremath{\lvert#1\rangle}}}
\newcommand{\lket}[1]{{\ensuremath{\left\lvert#1\right\rangle}}}
\newcommand{\ket}[1]{\mathchoice{\lket{#1}}{\sket{#1}}{\sket{#1}}{\sket{#1}}}
\newcommand{\sbra}[1]{{\ensuremath{\langle#1\rvert}}}
\newcommand{\lbra}[1]{{\ensuremath{\left\langle#1\right\rvert}}}
\newcommand{\bra}[1]{\mathchoice{\lbra{#1}}{\sbra{#1}}{\sbra{#1}}{\sbra{#1}}}
\newcommand{\sketbra}[2]{{\ensuremath{\lvert #1\rangle\langle #2\rvert}}}
\newcommand{\lketbra}[2]{{\ensuremath{\left\lvert #1\middle\rangle\middle\langle #2\right\rvert}}}
\newcommand{\ketbra}[2]{\mathchoice{\lketbra{#1}{#2}}{\sketbra{#1}{#2}}{\sketbra{#1}{#2}}{\sketbra{#1}{#2}}}
\newcommand{\proj}[1]{\ketbra{#1}{#1}}
\newcommand{\Pos}{\mathrm{Pos}}
\newcommand{\Supp}{\mathrm{supp}}
\newcommand{\mbP}{\mathbb{P}}
\newcommand*{\freq}[1]{\mathsf{freq}(#1)}
\tikzstyle{porte} = [fill=blue!20, draw]
\begin{document}

\title{Entropy accumulation}

\date{}

\author[1,2,5]{Frédéric Dupuis}
\author[3]{Omar Fawzi}
\author[4]{Renato Renner}

\affil[1]{Université de Lorraine, CNRS, Inria, LORIA, F-54000 Nancy, France}
\affil[2]{Faculty of Informatics, Masaryk University, Brno, Czech Republic}
\affil[3]{Laboratoire de l'Informatique du Parall\'elisme, ENS de Lyon, France}
\affil[4]{Institute for Theoretical Physics, ETH Zurich, Zurich, Switzerland}
\affil[5]{Département d'informatique et de recherche opérationnelle, Université de Montréal, Montréal, Québec, Canada}

\maketitle

\begin{abstract}
 We ask the question whether entropy accumulates, in the sense that the operationally relevant total uncertainty about an $n$-partite system $A = (A_1, \ldots A_n)$ corresponds to the sum of the entropies of its parts~$A_i$. The  Asymptotic Equipartition Property implies that this is indeed the case to first order in~$n$ | under the assumption that the parts $A_i$ are identical and independent of each other. Here we show that entropy accumulation occurs more generally, i.e., without an independence assumption, provided one quantifies the uncertainty about the individual systems $A_i$ by the von Neumann entropy of suitably chosen conditional states. The  analysis of a large system can hence be reduced to the study of its parts.  This is relevant for applications. In device-independent cryptography, for instance, the approach yields essentially optimal security bounds valid for general attacks, as shown by Arnon-Friedman \emph{et al.}~\cite{AFRV19}. 
  \end{abstract}

\section{Introduction}

In classical information theory, the \emph{uncertainty} one has about a variable $A$ given access to side information $B$ can be operationally quantified by the number of bits one would need to learn, in addition to $B$, in order to reconstruct $A$. While this number generally fluctuates, it is | except with probability of order $\eps > 0$ | not larger than the \emph{$\eps$-smooth max-entropy}, $H_{\max}^\eps(A|B)_\rho$, evaluated for the joint distribution~$\rho$ of $A$ and $B$~\cite{RennerWolfb}.\footnote{There is some freedom in how to count the number of bits, but the statement always holds up to additive terms of the order $\log(1/\eps)$.}  Conversely, it is in the same way not smaller than the \emph{$\eps$-smooth min-entropy}, $H_{\min}^\eps(A|B)_\rho$. This may be summarised by saying that the number of bits needed to reconstruct $A$ from $B$ is with probability at least $1-O(\eps)$ contained in the interval
\begin{align} \label{eq_entropyinterval}
  I =  \bigl[H_{\min}^\eps(A|B)_\rho, H_{\max}^\eps(A|B)_\rho \bigr]  \ ,
\end{align}
whose boundaries are defined by the smooth entropies. We refer to Definition~\ref{def_smooth_min_max_entropy} below for a precise definition of these quantities.

This approach to quantifying uncertainty can be extended to the case where $A$ and $B$ are quantum systems. The conclusion remains the same: the operationally relevant uncertainty interval is $I$ as defined by~\eqref{eq_entropyinterval}. The only difference is that $\rho$ is now  a density operator, which describes the joint state of $A$ and $B$~\cite{RennerWolf,Renner05,Tom12}. 

Finding the boundaries of the interval $I$ is a central task of information theory. However, the smooth entropies of a large system $A$ are often difficult to calculate. It is therefore rather common to introduce certain assumptions to render this task more feasible. One extremely popular approach in standard information theory is to assume that the system consists of many mutually independent  and identically distributed (IID) parts. More precisely,  the \emph{IID Assumption} demands that the system be of the form  $A = A_1^n = A_1 \otimes \cdots \otimes A_n$, that the side information have an analogous structure $B = B_1^n = B_1 \otimes \cdots \otimes B_n$, and that the joint state of these systems be of the form $\rho_{A_1 B_1 \cdots A_n B_n} = \nu_{A B}^{\otimes n}$, for some density operator $\nu_{A B}$. A fundamental result from information theory, the \emph{Asymptotic Equipartition Property (AEP)}~\cite{Sha48} (see~\cite{TCR09} for the quantum version), then asserts that the uncertainty interval satisfies
\begin{align} \label{eq_AEP}
  I \subset \left[n \left(H(A | B)_{\nu} - \frac{c_\eps}{\sqrt{n}}\right), \,  n \left(H(A | B)_{\nu}  + \frac{c_\eps}{\sqrt{n}} \right) \right] \ ,
\end{align}
where $c_\eps$ is a constant (independent of $n$) and where $H(A|B)_{\nu}$ is the  conditional von Neumann entropy evaluated for the state $\nu_{A B}$. In other words, for large $n$, the operationally relevant total uncertainty one has about $A_1^n$ given $B_1^n$  is well approximated by $n H(A|B)_{\nu} = \sum_{i} H(A_i | B_i)_{\rho}$. In this sense, the entropy of the individual systems $A_i$ accumulates to the entropy of the total system $A_1^n$.\footnote{We note that the value of $c_{\eps}$ governing the second order term is well understood; see~\cite{TH13} and the book~\cite{Tom15} for more details on this.}

In this work, we generalise this statement to the case where the individual pairs $A_i B_i$ are no longer independent of each other, i.e., where the IID assumption does not hold. Without loss of generality one may think of the pairs $A_i  B_i$ as being generated by a sequence of processes $\cM_i$, as shown in Figure~\ref{fig:rho-structure}. Each process $\cM_i$ may pass information on to the next one via a ``memory'' register $R_i$. The state of the ``future'' pairs can thus depend on the ``past'' ones.\footnote{The IID assumption corresponds to the special case where the systems $R_{i}$ are trivial (ensuring the mutual independence of the pairs $A_i B_i$) and where the maps $\cM_i$ are all the same (ensuring that the state of each pair $A_i B_i$ is identical to all others).} The only assumption we make is that, given the side information $B_1^i$ generated until step~$i$, the systems $A_1^i$ are independent of the next piece of side information $B_{i+1}$. This is captured by the requirement that $A_1^{i} \leftrightarrow B_1^{i} \leftrightarrow B_{i+1}$ forms a quantum Markov chain.\footnote{The necessity of this condition is discussed in Appendix~\ref{app_Markov}.} Entropy accumulation is then the claim that
\begin{align} \label{eq_entropyaccumulation}
 I
 \subset  \left[\sum_{i =1}^n \left(  \inf_{\omega_{R_{i-1} R}} H(A_i | B_i R )_{\cM_i(\omega)} - \frac{c_\eps}{\sqrt{n}} \right), \, \sum_{i =1}^n \left(\sup_{\omega_{R_{i-1} R}} H(A_i | B_i R)_{\cM_i(\omega)} + \frac{c_\eps}{\sqrt{n}} \right) \right]
 \ ,
\end{align}
where, in the $i$th term of each sum, the infimum or supremum ranges over joint states $\omega_{R_{i-1} R}$ of the  memory $R_{i-1}$ and a system $R$ isomorphic to it, and the conditional von Neumann entropy  is evaluated for the state  $({\cM_i \otimes \cI_R})(\omega_{R_{i-1} R})$, abbreviated by  $\cM_i(\omega)$, which describes the output pair $A_i B_i$ generated by $\cM_i$ jointly with~$R$.

To illustrate~\eqref{eq_entropyaccumulation} it is useful to think of a communication scenario with two parties, Alice and Bob, who are receiving information $A_1^n$ and $B_1^n$, respectively. Suppose that a source with memory  $R_i$ generates this information sequentially in $n$ steps, described by maps $\cM_i$ as depicted in Figure~\ref{fig:rho-structure}. Suppose furthermore that Bob would like to infer all $n$ values $A_i$ (which, for the purpose of this example, we  assume to be classical). As discussed above, for this he would require $N$ additional classical bits from Alice, where $N$ fluctuates (up to probability~$\varepsilon$)  within an interval $I$ with boundaries given by the entropies $H_{\min}^{\varepsilon}(A_1^n|B_1^n)$ and $H_{\max}^{\varepsilon}(A_1^n|B_1^n)$, which quantify Bob's uncertainty about $A_1^n$. While these entropies depend on the joint state $\rho_{A_1^n B_1^n}$ of the entire information generated by the source over all $n$ steps, they can, according to~\eqref{eq_entropyaccumulation}, be lower (or upper) bounded by a sum of terms that merely depend on  the individual steps~$\cM_i$. Specifically, the minimum (or maximum) number $N$ of bits that Alice needs to send to Bob so that he can infer her values~$A_i$  grows for each such value by the von Neumann entropy $H(A_i|B_i R)$, minimised (or maximised) over all possible states the memory $R_{i-1}$ could have been in right before the pair $A_i B_i$ was produced, and conditioned on~$B_i$ as well as any information $R$ about this memory.\footnote{In the special case where the source is  IID, the memory register is trivial and no minimisation (or maximisation) is necessary. Expression~\eqref{eq_entropyaccumulation} then reduces to~\eqref{eq_AEP}, and one retrieves the (well known) result that the number of bits that Alice needs to communicate to Bob per value $A_i$ is, up to second order terms, given by $H(A_i|B_i)$.}   

\begin{figure}
    \begin{center}
    \begin{tikzpicture}[thick]
        \tikzstyle{porte} = [draw=blue!50, fill=blue!20]
        \draw
            (0, 0) node[porte] (m1) {$\cM_1$}
            ++(2, 0) node[porte] (m2) {$\cM_2$}
            ++(2, 0) node (dotdotdot) {$\cdots$}
            ++(2, 0) node[porte] (mn) {$\cM_n$}
            (m1) ++(-.5, -1.2) node (a1) {$A_1$}
            (m1) ++(.5, -1.2) node (b1) {$B_1$}
            (m2) ++(-.5, -1.2) node (a2) {$A_2$}
            (m2) ++(.5, -1.2) node (b2) {$B_2$}
            (mn) ++(-.5, -1.2) node (an) {$A_n$}
            (mn) ++(.5, -1.2) node (bn) {$B_n$}
            ;
       \draw 
            (m1) ++(-1.5, 0) edge[->] node[midway, above] {$R_0$} (m1)
            (m1) edge[->] node[midway, above] {$R_1$} (m2)
            (m2) edge[->] node[midway, above] {$R_2$} (dotdotdot)
            (dotdotdot) edge[->] node[midway, above] {$R_{n-1}$} (mn)
            (m1) edge[->] (a1)
            (m1) edge[->] (b1)
            (m2) edge[->] (a2)
            (m2) edge[->] (b2)
            (mn) edge[->] (an)
            (mn) edge[->] (bn)
            ;
    \end{tikzpicture}
    \end{center}
    \caption{Circuit diagram illustrating the decomposition of states $\rho_{A_1^n B_1^n}$ relevant for our main theorem. One starts with a state $\rho^0_{R_0}$, and each of the pairs $A_i B_i$ is generated sequentially, one after the other, by the process $\cM_i$. The map $\cM_i$ takes as input a state on $R_{i-1}$ and outputs a state on $R_{i} \otimes A_i \otimes B_i$.}
    \label{fig:rho-structure}
\end{figure}

The main result we derive in this work is actually a bit more general than~\eqref{eq_entropyaccumulation}, allowing one to take into account global information about the statistics of $A_1^n$ and $B_1^n$. This is relevant for applications. In quantum key distribution, for instance,  $\cM_i$ models the generation of the $i$th bit of the raw key. However, in this cryptographic scenario, $\cM_i$ can depend on the attack strategy of an adversary, and is thus partially unknown. Hence, in order to bound the entropy  (which characterises an adversary's uncertainty) of the raw key bits, one must as well take into account global statistical properties. These are inferred by tests carried out by the quantum key distribution protocol on a small sample of the generated bits. To incorporate such statistical information in the analysis, we consider for each~$i$ an additional classical value $X_i$ derived from $A_i$ and $B_i$, as depicted by Figure~\ref{fig:rho-structure-X}. Specifically, $X_i$  shall tell us whether position $i$ was included in the statistical test, and if so, the outcome of the test performed at step $i$. For this extended scenario, \eqref{eq_entropyaccumulation} still holds, but now the infimum and supremum are taken over a restricted set, containing only those states~$\omega$ for which the resulting probability distribution on $X_i$ corresponds to the observed statistics. 

\begin{figure}
    \begin{center}
    \begin{tikzpicture}[thick]
        \tikzstyle{porte} = [draw=blue!50, fill=blue!20]
        \draw
            (0, 0) node[porte] (m1) {$\cM_1$}
            ++(2, 0) node[porte] (m2) {$\cM_2$}
            ++(2, 0) node (dotdotdot) {$\cdots$}
            ++(2, 0) node[porte] (mn) {$\cM_n$}
            (m1) ++(-.5, -1.2) node (a1) {$A_1$}
            (m1) ++(.5, -1.2) node (b1) {$B_1$}
            (m1) ++(0, -2) node (x1) {$X_1$}
            (m2) ++(-.5, -1.2) node (a2) {$A_2$}
            (m2) ++(.5, -1.2) node (b2) {$B_2$}
            (m2) ++(0, -2) node (x2) {$X_2$}
            (mn) ++(-.5, -1.2) node (an) {$A_n$}
            (mn) ++(.5, -1.2) node (bn) {$B_n$}
            (mn) ++(0, -2) node (xn) {$X_n$}
            ;
       \draw 
            (m1) ++(-1.5, 0) edge[->] node[midway, above] {$R_0$} (m1)
            (m1) edge[->] node[midway, above] {$R_1$} (m2)
            (m2) edge[->] node[midway, above] {$R_2$} (dotdotdot)
            (dotdotdot) edge[->] node[midway, above] {$R_{n-1}$} (mn)
            (m1) edge[->] (a1)
            (m1) edge[->] (b1)
            (m2) edge[->] (a2)
            (m2) edge[->] (b2)
            (mn) edge[->] (an)
            (mn) edge[->] (bn)
            (a1) edge[->, dotted] (x1)
            (b1) edge[->, dotted] (x1)
            (a2) edge[->, dotted] (x2)
            (b2) edge[->, dotted] (x2)
            (an) edge[->, dotted] (xn)
            (bn) edge[->, dotted] (xn)
            ;
    \end{tikzpicture}
    \end{center}
    \caption{Circuit diagram illustrating the decomposition of states $\rho_{A_1^n B_1^n X_1^n}$ relevant for the full version of our main theorem, which can take into account statistical information $X_1^n$. The individual pieces   $X_i$ of this statistical information are classical values that can be determined from $A_i$ and $B_i$ without disturbing them. When $A_i$ and $B_i$ are themselves classical, this means that $X_i$ is a deterministic function of $A_i$ and $B_i$. For a precise definition in the general case we refer to Section~\ref{sec_accumulation}.}
    \label{fig:rho-structure-X}
\end{figure}

Entropy accumulation has a number of theoretical and practical implications.  For example, it serves as a technique to turn cryptographic security proofs that were restricted to collective attacks to security proofs against general attacks. This application is demonstrated in~\cite{AFRV19} for the case of a fully device-independent quantum key distribution and a randomness expansion protocol.
The resulting security bounds are essentially tight, implying that device-independent cryptography is  possible with state-of-the-art technology. To illustrate the basic ideas behind such applications, we will present two concrete examples in more detail. The first is a proof of security of a variant of the \emph{E91 Quantum Key Distribution} protocol. This new security proof has two advantages. First, its structure is modular and it may therefore be adapted to other cryptographic schemes (see also the discussion in Section~\ref{sec_conclusions}). In addition, it achieves a strong level of security where no assumption is made on Bob's devices. This is sometimes referred to as one-sided measurement device independence and this level of security was partially achieved in~\cite{TR11} (they used a memoryless devices assumption which we do not need) and later fully in~\cite{TFKW13} though with sub-optimal rates.
 The second example is the derivation of an upper bound on the fidelity achievable by \emph{Fully Quantum Random Access Codes}.

The proof of the main result, Eq.~\eqref{eq_entropyaccumulation}, has a similar structure as the proof of the Quantum Asymptotic Equipartition Property~\cite{TCR09}, which we can retrieve as a special case (see Corollary~\ref{cor_QAEP}). The idea is to first bound the smooth entropy of the entire sequence $A_1^n$ conditioned on $B_1^n$ by a conditional R\'enyi entropy of order $\alpha$, then decompose this entropy into a sum of conditional R\'enyi entropies for the individual terms $A_i$, and finally bound these  in terms of  von Neumann entropies. However, in contrast to previous arguments, we use a recently introduced version of conditional R\'enyi entropies, termed ``sandwiched R\'enyi entropies''~\cite{WWY13,MDSFT13}.  For these entropies, we derive a novel chain rule, which forms the core technical part of our proof.  In addition, some of the concepts used in this work generalise techniques proposed in the recent security proofs for device-independent cryptography presented in~\cite{MS14a,MS14b}. 
In particular, the dominant terms of the lower bound on the amount of  randomness obtained in~\cite{MS14b}, called \emph{rate curves}, are similar to the \emph{tradeoff functions} considered here (cf.\ Definition~\ref{def_tradeoff}).\footnote{While the tradeoff functions considered in this work are defined in terms of conditional von Neumann entropies, the rate curves of~\cite{MS14b} are equal to a difference of $(1+\varepsilon)$-R\'enyi entropies (see \cite[Section~6]{MS14b}). The latter cannot be larger than the tradeoff functions, which yield asymptotically optimal randomness extraction rates (as shown in~\cite{AFRV19}).}

\paragraph{Paper organisation: } We begin with preliminaries and notation in Section~\ref{sec:preliminaries}.  Section~\ref{sec_chain} is devoted to the central technical ingredient of our argument, a chain rule for R\'enyi entropies. The main result, the theorem on entropy accumulation, is then stated and proved in Section~\ref{sec_accumulation}. In Section~\ref{sec_applications} we present the two sample applications mentioned above, before concluding with  remarks and suggestions for future work in Section~\ref{sec_conclusions}.

\section{Preliminaries}\label{sec:preliminaries}

\subsection{Notation} \label{sec_Notation}
In the table below, we summarise some of the notation used throughout the paper:
\begin{center}
    \begin{tabular}{|c|l|}
        \hline
        \emph{Symbol} & \multicolumn{1}{c|}{\emph{Definition}}\\
        \hline
        $A, B, C, \dots$ & Quantum systems, and their associated Hilbert spaces\\
        $\mathcal{L}(A,B)$ & Set of linear operators from $A$ to $B$\\
        $\mathcal{L}(A)$ & $\mathcal{L}(A,A)$\\
        $X_{AB}$ & Operator in $\mathcal{L}(A \otimes B)$\\
        $X_{B \leftarrow A}$ & Operator in $\mathcal{L}(A, B)$\\
        $\mathrm{D}(A)$ & Set of normalised density operators on $A$\\
        $\mathrm{D}_{\leqslant}(A)$ & Set of sub-normalised density operators on $A$\\
        $\Pos(A)$ & Set of positive semidefinite operators on $A$\\
        $X^{-1}$ for $X \in \Pos(A)$ & Generalised inverse, such that $XX^{-1}X = X$ holds\\
        $X_A \geqslant Y_A$ & $X_A - Y_A \in \Pos(A)$\\
        $A_i^j$ (with $j \geqslant i$) & Given $n$ systems $A_1,\dots,A_n$, this is a shorthand for $A_i,\dots,A_j$\\
        $A^n$ & Often used as shorthand for $A_1,\dots,A_n$\\
        $\log(x)$ & Logarithm of $x$ in base 2\\
        \hline
    \end{tabular}
\end{center}

Throughout the paper, we restrict ourselves to finite-dimensional Hilbert spaces. Furthermore, we use the following notation for classical-quantum states $\rho_{X A} \in \mathrm{D}(X \otimes A)$ with respect to the basis $\{\ket{x}\}_{x \in \cX}$ of the system $X$. For any $x \in \cX$, we let $\rho_{A,x} = \bra{x} \rho_{AX} \ket{x}$ so that $\rho_{XA} = \sum_{x \in \cX}  \proj{x} \otimes \rho_{A, x}$. To refer to the conditional state, we write $\rho_{A|x} = \frac{\rho_{A,x}}{\tr(\rho_{A,x})}$. An event $\Omega \subseteq \cX$ in this paper refers to a subset of $\cX$ and we can similarly define $\rho_{XA|\Omega} = \frac{1}{\rho[\Omega]} \sum_{x \in \Omega} \proj{x} \otimes \rho_{A,x} $, where we introduced the notation $\rho[\Omega] = \sum_{x \in \Omega} \tr(\rho_{A, x})$. We also use the usual notation for the partial trace for conditional states, e.g., $\rho_{XA|\Omega} = \tr_{B}(\rho_{XAB|\Omega})$.

For a density operator $\rho_{A B} \in \mathrm{D}(A \otimes B)$ on a bipartite Hilbert space $A \otimes B$ we define the operator\footnote{Note that this operator is well defined, for the support of $\rho_{A B}$ is contained in the support of $\id_A \otimes \rho_B$.}
\begin{align*}
  \rho_{A | B} = (\id_A \otimes \rho_B)^{-\frac{1}{2}} \rho_{A B} (\id_A \otimes \rho_B)^{-\frac{1}{2}}  \ ,
\end{align*}
which may be interpreted as the \emph{state of $A$ conditioned on $B$}, analogous to a conditional probability distribution. This operator was previously defined and studied in~\cite{AKMS05,Lei07}. In the following, we will usually drop identity operators from the notation when they are clear from the context. We would thus write, for instance, 
\begin{align*}
  \rho_{A | B} = \rho_B^{-\frac{1}{2}} \rho_{A B} \rho_B^{-\frac{1}{2}}  \ .
\end{align*}

\begin{remark}
  Let $A$ and $\bar{A}$ be two isomorphic Hilbert spaces with orthonormal bases $\{\ket{i}_A\}$ and $\{\ket{i}_{\bar{A}}\}$ and define
  \begin{align*}
    \ket{\Theta} = \sum_{i} \ket{i}_A \otimes \ket{i}_{\bar{A}} \ .
  \end{align*}
  Then any trace-non-increasing map $\cM = \cM_{B \leftarrow \bar{A}}$  from $\mathcal{L}(\bar{A})$ to $\mathcal{L}(B)$ can be represented as a ``conditional state'' (also known as the Choi-Jamiolkowski state) $M_{B | A}$ on $A \otimes B$ with the property that 
  \begin{align} \label{eq_condsubstate}
      M_{B | A} \geqslant 0 \qquad \text{and} \qquad \tr_B[M_{B|A}] \leqslant \id_A
  \end{align}
  and such that 
    \begin{align} \label{eq_mapMrepresentation}
    \cM(\rho_A^{\frac{1}{2}} \proj{\Theta} \rho_A^{\frac{1}{2}}) = \rho_A^{\frac{1}{2}} M_{B|A} \rho_A^{\frac{1}{2}} \qquad \forall \rho_A 
  \end{align}
  holds.   Specifically, for any map $\cM$ one may define
  \begin{align} \label{eq_mapMdefinition}
    M_{B|A} = \cM(\proj{\Theta})  \ ; 
  \end{align}
  it is then straightforward to verify the properties above. 

  Conversely, for any $M_{B | A}$ such that~\eqref{eq_condsubstate} holds the map defined by
  \begin{align*}
    \cM(\ket{i}_{\bar{A}} \bra{j}_{\bar{A}}) = \bra{i}_A M_{B|A} \ket{j}_A
  \end{align*}
  satisfies~\eqref{eq_mapMdefinition} and hence~\eqref{eq_mapMrepresentation}. It is also easy to verify that it is completely positive and trace non-increasing.
\end{remark}

We mention here a slight abuse of terminology: for a completely positive map $\cM_{B \leftarrow A}$ from $\mathcal{L}(A)$ to $\mathcal{L}(B)$, we often use a shorthand to indicate the systems it acts on and simply say that it maps $A$ to $B$.

\subsection{Background on quantum Markov chains}
The concept of quantum Markov chains will be used throughout the paper, and here we give some relevant basic facts about them. Let $\{a_j\}_{j \in J}$ and $\{c_j\}_{j \in J}$ be families of Hilbert spaces and let $B$ be a Hilbert space such that\footnote{$\oplus$ denotes the orthogonal direct sum.}
\begin{align} \label{eq_decomposition}
  B \cong \bigoplus_{j \in J} a_j \otimes c_j \ ,
\end{align}
holds. Let us furthermore denote by~$V = \bigoplus_{j \in J} V_{a_j c_j \leftarrow B}$  the corresponding isomorphism. It is convenient to treat $\bigoplus_j a_j \otimes c_j$ as a subspace of the product  $a \otimes c$ of the spaces
\begin{align*}
  a = \bigoplus_{j \in J} a_j \qquad \text{and} \qquad
  c = \bigoplus_{j \in J} c_j \ .
\end{align*}
 The mapping $V$ may then be viewed as an embedding of $B$ into $a \otimes c$. Given a density operator $\rho_B$, we denote by $\rho_{a c}$ the density operator $V \rho_B V^{\dagger}$. More generally, for a multi-partite density operator $\rho_{A B}$, we write $\rho_{A a c}$ for $V \rho_{A B} V^{\dagger}$. Furthermore, for any $j \in J$, we denote by $\rho_{A a_j c_j}$ the projection of $\rho_{A a c}$ onto the subspace defined by $a_j \otimes c_j$, i.e., 
\begin{align}  \label{eq_projop}
 \rho_{A a_j c_j} = V_{a_j c_j \leftarrow B} \rho_{A B} V_{a_j c_j \leftarrow B}^{\dagger} \ .
 \end{align}

A tri-partite density operator $\rho_{A B C}$ is said to obey the \emph{Markov chain condition} $A \leftrightarrow B \leftrightarrow C$ if there exists a decomposition of $B$ of the form~\eqref{eq_decomposition} such that
\begin{align} \label{eq_Markovcondtrad}
  \rho_{A B C} \cong \rho_{A a c C} = \bigoplus_{j \in J}  q_j \hat{\rho}_{A a_j} \otimes \hat{\rho}_{c_j C} 
\end{align}
where $\{q_j\}_{j \in J}$ is a probability distribution and $\{\hat{\rho}_{A a_j}\}_{j \in J}$ and $\{\hat{\rho}_{c_j C}\}_{j \in J}$ are families of density operators~\cite{Pet88,KI02,HJPW04}. It follows from this decomposition that a state $\rho_{ABC}$ obeying the Markov chain condition can be reconstructed from $\rho_{AB}$ with a map $\cT_{BC \gets B}$ acting only on $B$~\cite{Pet88}:
\begin{align}\label{eq_recovery}
\rho_{A B C} = \cI_{A} \otimes \cT_{BC \gets B}(\rho_{AB}) \ .
\end{align}
Another useful characterization of the Markov chain condition for $\rho_{ABC}$ is given by the entropic equality $I(A:C|B)_{\rho} = 0$~\cite{Pet88,KI02,HJPW04}. The conditional mutual information is defined as $I(A:C|B)_{\rho} = H(AB)_{\rho} + H(BC)_{\rho} - H(B)_{\rho} - H(ABC)_{\rho}$ where $H(A)_{\rho} = -\tr(\rho_{A} \log \rho_{A})$ is the von Neumann entropy.
   
\subsection{Entropic quantities}

The formulation of the main claim refers to smooth entropies, which can be defined as follows. 

\begin{definition}
\label{def_smooth_min_max_entropy}
  For any density operator $\rho_{A B}$ and for $\eps \in [0,1]$ the \emph{$\eps$-smooth min- and max-entropies of $A$ conditioned on $B$} are 
  \begin{align*}
    H_{\min}^\eps(A|B)_{\rho} & =  - \log \inf_{\tilde{\rho}_{A B}} \inf_{\sigma_B}  \left\| \tilde{\rho}_{A B}^{\frac{1}{2}} \sigma_B^{-\frac{1}{2}} \right\|_{\infty}^2 \\
    H_{\max}^\eps(A|B)_{\rho} & = \log \inf_{\tilde{\rho}_{A B}}  \sup_{\sigma_B}  \left\| \tilde{\rho}_{A B}^{\frac{1}{2}} \sigma_B^{\frac{1}{2}} \right\|_1^2 \ ,
  \end{align*}
  respectively, where $\tilde{\rho}$ is any non-negative operator with trace at most $1$ that is $\eps$-close to $\rho$ in terms of the purified distance\footnote{The purified distance is defined as $P(\rho, \tilde{\rho}) = \sqrt{1 - \big(\tr \big| \sqrt{\rho} \sqrt{\tilde{\rho}} \big| \big)^2 }$ whenever either $\rho$ or $\tilde{\rho}$ is normalized.}~\cite{TCR10,Tom12}, and where $\sigma_B$ is any density operator on $B$.
\end{definition}   
   
The proof we present here heavily relies on the sandwiched relative R\'enyi entropies introduced in~\cite{WWY13,MDSFT13}.  These relative entropies can be used to define a conditional entropy.\footnote{We note that there are at least two common variants for how to define a conditional entropy based on a relative entropy. We refer to Appendix~\ref{app_sandwich} for more details. }

\begin{definition} \label{def_sandwichedentropy}
  For any density operator $\rho_{A B}$ and for $\alpha \in (0, 1) \cup (1, \infty)$ the \emph{sandwiched $\alpha$-R\'enyi entropy of $A$ conditioned on $B$} is defined as
  \begin{align*}
    H_\alpha(A|B)_{\rho} = - \frac{1}{\alpha'} \log \left\| \rho_{A B}^{\frac{1}{2}} \rho_B^{\frac{-\alpha'}{2}}  \right\|_{2 \alpha}^2 \ ,
      \end{align*}
      where $\alpha' = \frac{\alpha - 1}{\alpha}$ and where $\| X \|_\alpha = \tr\bigl( (X^{\dagger} X)^{\frac{\alpha}{2}} \bigr)^{\frac{1}{\alpha}}$. Note that $\alpha'$ is the inverse of the H\"older conjugate of $\alpha$.
\end{definition}

We note that, while the function $X \mapsto \| X \|_\alpha$ is a norm for $\alpha \geqslant 1$, this is not the case for $\alpha < 1$ since it does not satisfy the triangle inequality. Some key properties of this function are summarised in Appendix~\ref{app_alpha}.  Using them, the sandwiched R\'enyi entropies may be rewritten as
\begin{align*}
    H_\alpha(A|B)_{\rho}  
    & =  \frac{\alpha}{1-\alpha} \log \left\| \rho_B^{\frac{1-\alpha}{2\alpha}} \rho_{A B}  \rho_B^{\frac{1-\alpha}{2\alpha}}  \right\|_\alpha  \\
    & =    \frac{1}{1-\alpha} \log \tr\left( \bigl( \rho_B^{\frac{1-\alpha}{2\alpha}} \rho_{A B} \rho_B^{\frac{1-\alpha}{2\alpha}} \bigr)^{\alpha} \right) \ .
  \end{align*}

It turns out that there are multiple ways of defining conditional entropies from relative entropies. Another variant that will be needed in this work is the following:
\begin{definition} \label{def_sandwichedentropy_uparrow}
  For any density operator $\rho_{A B}$ and for $\alpha \in (0, 1) \cup (1, \infty)$, we define
  \begin{align*}
      H^{\uparrow}_\alpha(A|B)_{\rho} = - \inf_{\sigma_B} \frac{1}{\alpha'} \log \left\| \rho_{A B}^{\frac{1}{2}} \sigma_B^{\frac{-\alpha'}{2}}  \right\|_{2 \alpha}^2 \ ,
      \end{align*}
  where the infimum is over all sub-normalised density operators on $B$.
\end{definition}

  Other relevant facts about the sandwiched R\'enyi entropy and the corresponding notion of relative entropy can be found in Appendix \ref{app_sandwich}.
  
\section{Chain rule for R\'enyi entropies} \label{sec_chain}

As explained in the introduction, our main result can be regarded as a generalisation of the  Quantum Asymptotic Equipartition Property~\cite{TCR09}, corresponding to~\eqref{eq_AEP}.  The approach used for the proof of the latter is to bound both the smooth min-entropy and the von Neumann entropy by R\'enyi entropies with an appropriate parameter $\alpha$. The IID assumption is then used to decompose the R\'enyi entropy into a sum of $n$ terms. However, since our main claim, Eq.~\eqref{eq_entropyaccumulation}, is supposed to hold for general non-IID states,  we do not have this luxury, and we must somehow decompose the R\'enyi entropy into $n$ terms using other means. The tool we will use for this purpose is a chain rule for R\'enyi entropies, which we present as a separate theorem (Theorem~\ref{thm_chainq_exact}). We start by stating a more general version that will be useful in the proof of the main theorem.

\begin{lemma} \label{lem_chainprep}
  Let $\rho_{A_1 A_2 B}$ and $\sigma_B$ be density operators and let $\alpha \in (0, \infty)$. Then
     \begin{align*} 
             D_{\alpha}(\rho_{A_1 B} \| \id_{A_1} \otimes  \sigma_B) - D_\alpha(\rho_{A_1 A_2 B}\| \id_{A_1 A_2} \otimes  \sigma_B) &=  H_{\alpha}(A_2 | A_1 B)_{\nu} \ ,
  \end{align*}
where 
\begin{align}
\label{eq_alphastate}
\nu_{A_1A_2B} = \nu^{\frac12}_{A_1B} \rho_{A_2|A_1B} \nu^{\frac12}_{A_1B} \: \text{ with } \: \nu_{A_1B} = \frac{\left(\rho_{A_1B}^{\frac12} \sigma^{\frac{1-\alpha}{\alpha}}_{B} \rho_{A_1B}^{\frac12}\right)^{\alpha}}{\tr\left( \rho_{A_1B}^{\frac12} \sigma^{\frac{1-\alpha}{\alpha}}_{B} \rho_{A_1B}^{\frac12}\right)^{\alpha}} \ .
\end{align}
\end{lemma}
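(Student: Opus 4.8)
The plan is to unfold both sandwiched R\'enyi relative entropies on the left-hand side directly from their definitions and show that the difference telescopes into the conditional entropy $H_\alpha(A_2|A_1B)_\nu$ after recognising the operator $\nu_{A_1B}$ as precisely the state obtained by normalising the ``sandwiched'' operator that appears inside $D_\alpha(\rho_{A_1B}\|\id_{A_1}\otimes\sigma_B)$. Concretely, I would write $D_\alpha(\rho_{A_1B}\|\id_{A_1}\otimes\sigma_B) = \frac{1}{\alpha-1}\log\tr\bigl(\sigma_B^{\frac{1-\alpha}{2\alpha}}\rho_{A_1B}\sigma_B^{\frac{1-\alpha}{2\alpha}}\bigr)^\alpha$ (using that the $A_1$-marginal of $\id_{A_1}$ is just the identity, so only the $\sigma_B$ factors enter), and similarly for the three-party term with $\id_{A_1A_2}$. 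The key algebraic observation is that $\nu_{A_1B}$ has, by construction, the property that $\tr(\nu_{A_1B}) = 1$ and that the numerator of $\nu_{A_1B}$ equals $\bigl(\sigma_B^{\frac{1-\alpha}{2\alpha}}\rho_{A_1B}\sigma_B^{\frac{1-\alpha}{2\alpha}}\bigr)^\alpha$ up to the (trivial, since $\id_{A_1}$) reshuffling — so the normalising constant in~\eqref{eq_alphastate} is exactly $e^{(\alpha-1)D_\alpha(\rho_{A_1B}\|\id\otimes\sigma_B)}$ in appropriate units.

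The heart of the argument is then to compute $D_\alpha(\rho_{A_1A_2B}\|\id_{A_1A_2}\otimes\sigma_B)$ in terms of $\nu$. Using the substitution $\rho_{A_1A_2B} = \bigl(\rho_{A_2|A_1B}$ sandwiched by $\rho_{A_1B}^{1/2}\bigr)$ — that is, the identity $\rho_{A_1A_2B} = \rho_{A_1B}^{1/2}\rho_{A_2|A_1B}\rho_{A_1B}^{1/2}$ from the conditional-state notation in Section~\ref{sec_Notation} — one should be able to massage $\sigma_B^{\frac{1-\alpha}{2\alpha}}\rho_{A_1A_2B}\sigma_B^{\frac{1-\alpha}{2\alpha}}$ so that the $\sigma_B$ dressing only ``sees'' the $\rho_{A_1B}$ factors (since $\sigma_B$ acts trivially on $A_2$ and the conditional state $\rho_{A_2|A_1B}$ is, informally, already ``divided by'' $\rho_{A_1B}$). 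The cleanest route is probably to verify the claimed identity by rewriting everything through the operators $G := \sigma_B^{\frac{1-\alpha}{2\alpha}}\rho_{A_1B}^{1/2}$, so that $\bigl(\sigma_B^{\frac{1-\alpha}{2\alpha}}\rho_{A_1B}\sigma_B^{\frac{1-\alpha}{2\alpha}}\bigr)^\alpha = (GG^\dagger)^\alpha$ and $\nu_{A_1B} = (GG^\dagger)^\alpha/\tr(GG^\dagger)^\alpha$; then using the polar-decomposition/functional-calculus fact $(GG^\dagger)^\alpha = G (G^\dagger G)^{\alpha-1} G^\dagger$ to relate $\nu_{A_1B}$ to $\rho_{A_1B}$ conjugated by powers of $G$, and finally expanding $D_\alpha(\rho_{A_1A_2B}\|\cdots)$ and $H_\alpha(A_2|A_1B)_\nu = \frac{\alpha}{1-\alpha}\log\|\nu_{A_1B}^{\frac{1-\alpha}{2\alpha}}\nu_{A_1A_2B}\nu_{A_1B}^{\frac{1-\alpha}{2\alpha}}\|_\alpha$ (Definition~\ref{def_sandwichedentropy}, second form) and checking that the traces agree. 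Since $\nu_{A_1A_2B} = \nu_{A_1B}^{1/2}\rho_{A_2|A_1B}\nu_{A_1B}^{1/2}$ by definition, the quantity inside $H_\alpha(A_2|A_1B)_\nu$ becomes $\nu_{A_1B}^{\frac{1-\alpha}{2\alpha}}\nu_{A_1B}^{1/2}\rho_{A_2|A_1B}\nu_{A_1B}^{1/2}\nu_{A_1B}^{\frac{1-\alpha}{2\alpha}} = \nu_{A_1B}^{\frac{1}{2\alpha}}\rho_{A_2|A_1B}\nu_{A_1B}^{\frac{1}{2\alpha}}$, which is manifestly of the same shape as the operator one gets from $D_\alpha(\rho_{A_1A_2B}\|\id\otimes\sigma_B)$ after pulling out the normalisation of $\nu_{A_1B}$; the logs of the normalising constant contribute exactly $-D_\alpha(\rho_{A_1B}\|\id\otimes\sigma_B)$, completing the telescoping.

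The main obstacle I anticipate is purely operator-algebraic bookkeeping: one must be careful that $\sigma_B$, $\rho_{A_1B}$, $\rho_{A_1A_2B}$ and their fractional powers do not commute, so the manipulations relating $\bigl(\sigma_B^{\frac{1-\alpha}{2\alpha}}\rho_{A_1A_2B}\sigma_B^{\frac{1-\alpha}{2\alpha}}\bigr)^\alpha$ to an expression built from $\nu_{A_1B}$ and $\rho_{A_2|A_1B}$ require using the identity $f(XY) = X f(YX) X^{-1}$-type rewrites (or, equivalently, the fact that $XY$ and $YX$ have the same nonzero spectrum, applied at the level of traces $\tr\,(XY)^\alpha = \tr\,(YX)^\alpha$) together with the support conditions guaranteed in the footnote on $\rho_{A|B}$. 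A secondary subtlety is the case $\alpha<1$ versus $\alpha>1$: the definition of $H_\alpha$ and of $D_\alpha$ carry a sign flip through $\alpha' = \frac{\alpha-1}{\alpha}$ and through $\frac{1}{1-\alpha}$, so I would keep the exponents symbolic throughout and only at the end check that the signs line up consistently on both sides for $\alpha\in(0,1)$ and $\alpha\in(1,\infty)$ (and note the $\alpha=1$ case is excluded). Everything else — well-definedness of the generalised inverses, subnormalisation not being an issue here since $\sigma_B$ is assumed a density operator — is routine given the preliminaries.
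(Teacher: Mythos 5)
Your proposal is correct, and it reaches the identity by a genuinely different route from the paper. The paper purifies $\rho_{A_1A_2B}$ to a vector $\ket{\psi}_{A_1A_2BE}$ and runs everything through the variational vector formula of Lemma~\ref{lem_Halphavector} (the supremum over $\tau_E$), factoring $\rho_{A_1B}^{-1/2}\ket{\psi}$ out and recognising $\ket{\nu}=\nu_{A_1B}^{1/2}\rho_{A_1B}^{-1/2}\ket{\psi}$ as a purification of $\nu_{A_1A_2B}$; you instead stay entirely at the level of traces of sandwiched operators. Your computation does close: writing $G=\rho_{A_1B}^{1/2}\sigma_B^{\frac{1-\alpha}{2\alpha}}$ one has $\nu_{A_1B}=Z^{-1}(GG^\dagger)^\alpha$ with $Z=\tr(G^\dagger G)^\alpha=2^{(\alpha-1)D_\alpha(\rho_{A_1B}\|\id\otimes\sigma_B)}$, so $\nu_{A_1B}^{\frac{1}{2\alpha}}=Z^{-\frac{1}{2\alpha}}(GG^\dagger)^{\frac12}$, and then two applications of $\tr(W^\dagger W)^\alpha=\tr(WW^\dagger)^\alpha$ (first with $W=\rho_{A_2|A_1B}^{1/2}(GG^\dagger)^{1/2}$, then with $V=\sigma_B^{\frac{1-\alpha}{2\alpha}}\rho_{A_1B}^{1/2}\rho_{A_2|A_1B}^{1/2}$) turn $\tr\bigl(\nu_{A_1B}^{\frac{1}{2\alpha}}\rho_{A_2|A_1B}\nu_{A_1B}^{\frac{1}{2\alpha}}\bigr)^\alpha$ into $Z^{-1}\tr\bigl(\sigma_B^{\frac{1-\alpha}{2\alpha}}\rho_{A_1A_2B}\sigma_B^{\frac{1-\alpha}{2\alpha}}\bigr)^\alpha$, which is exactly the claimed telescoping; you do not even need the polar-decomposition identity $(GG^\dagger)^\alpha=G(G^\dagger G)^{\alpha-1}G^\dagger$. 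What your route buys is a more elementary, purification-free argument whose sign bookkeeping is automatic (the prefactor $\frac{1}{1-\alpha}$ versus $\frac{1}{\alpha-1}$ handles both $\alpha<1$ and $\alpha>1$ uniformly); what the paper's route buys is reuse of Lemma~\ref{lem_Halphavector}, which it needs elsewhere anyway, and a purification $\ket{\nu}$ of the output state that is convenient in the downstream argument (e.g.\ the pure-state refinement in Corollary~\ref{cor_conditionalmap}). Two small points to tidy up: the lemma as stated includes $\alpha=1$ (handled in the paper by continuity/the von Neumann chain rule, with $\nu=\rho$ there), so you should not declare it excluded; and you should record, as the paper does, that $\tr_{A_2}(\nu_{A_1A_2B})=\nu_{A_1B}$ (which holds because $\supp\nu_{A_1B}\subseteq\supp\rho_{A_1B}$), since otherwise the expression $H_\alpha(A_2|A_1B)_\nu$ you expand is not literally the one from Definition~\ref{def_sandwichedentropy}.
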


We note that $\nu_{A_1B} = \tr_{A_2}(\nu_{A_1A_2B})$, which justifies the notation. 

\begin{proof}
When $\alpha = 1$, this equality follows directly from the definition of the entropies.  To prove the equality for $\alpha \in (0,1) \cup (1, \infty)$ we consider a purification $\ket{\psi}_{A_1A_2BE}$ of $\rho_{A_1A_2B}$. Using Lemma~\ref{lem_Halphavector} and setting $\alpha' = \frac{\alpha-1}{\alpha}$ we have
\begin{align*}
&D_\alpha(\rho_{A_1 A_2 B}\| \id_{A_1 A_2} \otimes \sigma_B) \\
&=  \sup_{\tau_E} \frac{1}{\alpha'} \log \left\| \sigma_{B}^{-\frac{\alpha'}{2}} \otimes \tau_{E}^{\frac{\alpha'}{2}} \ket{\psi} \right\|^2 \\
&= \sup_{\tau_E} \frac{1}{\alpha'} \log \left\| \left(\sigma_{B}^{-\frac{\alpha'}{2}} \rho_{A_1B}^{\frac{1}{2}} \otimes \tau_{E}^{\frac{\alpha'}{2}}\right) \rho_{A_1B}^{-\frac{1}{2}} \ket{\psi} \right\|^2 \\
&= \sup_{\tau_E} \frac{1}{\alpha'} \log \tr\left( \left(\sigma_{B}^{-\frac{\alpha'}{2}} \rho_{A_1B}^{\frac{1}{2}} \otimes \tau_{E}^{\frac{\alpha'}{2}} \right) \rho_{A_1B}^{-\frac{1}{2}} \proj{\psi} \rho_{A_1B}^{-\frac{1}{2}} \left(\rho_{A_1B}^{\frac{1}{2}} \sigma_{B}^{-\frac{\alpha'}{2}}   \otimes \tau_{E}^{\frac{\alpha'}{2}} \right) \right) \\
&= \sup_{\tau_E} \frac{1}{\alpha'} \log \tr\left( \left(\rho_{A_1B}^{\frac{1}{2}} \sigma_{B}^{-\alpha'} \rho_{A_1B}^{\frac{1}{2}} \otimes \tau_{E}^{\frac{\alpha'}{2}} \right) \rho_{A_1B}^{-\frac{1}{2}} \proj{\psi} \rho_{A_1B}^{-\frac{1}{2}} \left(\id_{A_1B} \otimes \tau_{E}^{\frac{\alpha'}{2}} \right) \right) \ .
\end{align*}
By the definition of $\nu_{A_1B}$, we get
\begin{align*}
&D_\alpha(\rho_{A_1 A_2 B}\| \id_{A_1 A_2} \otimes \sigma_B) \\
&= \frac{1}{\alpha'} \log \left(\tr\left( \rho_{A_1B}^{\frac{1}{2}} \sigma^{-\alpha'}_{B} \rho_{A_1B}^{\frac{1}{2}}\right)^{\alpha}\right)^{\frac{1}{\alpha}} \\
&\;\;\;\;\;\;\; + \sup_{\tau_E} \frac{1}{\alpha'} \log \tr\left( \left(\nu_{A_1B}^{\frac{1}{\alpha}} \otimes \tau_{E}^{\frac{\alpha'}{2}}\right) \rho_{A_1B}^{-\frac{1}{2}} \proj{\psi} \rho_{A_1B}^{-\frac{1}{2}} \left(\id_{A_1A_2B} \otimes \tau_{E}^{\frac{\alpha'}{2}} \right) \right) \\
&= D_\alpha(\rho_{A_1 B} \| \id_{A_1} \otimes \sigma_B) \\
&\;\;\;\;\;\;\; + \sup_{\tau_E} \frac{1}{\alpha'} \log \tr\left( \left(\nu_{A_1B}^{-\frac{\alpha'}{2}} \otimes \tau_{E}^{\frac{\alpha'}{2}} \right) \nu_{A_1B}^{\frac{1}{2}} \rho_{A_1B}^{-\frac{1}{2}} \proj{\psi} \rho_{A_1B}^{-\frac{1}{2}} \nu_{A_1B}^{\frac{1}{2}} \left(\nu_{A_1B}^{-\frac{\alpha'}{2}} \otimes \tau_{E}^{\frac{\alpha'}{2}} \right) \right) \\
&= D_\alpha(\rho_{A_1 B} \| \id_{A_1} \otimes \sigma_B) + \sup_{\tau_E} \frac{1}{\alpha'} \log \left\| \nu_{A_1B}^{-\frac{\alpha'}{2}} \otimes \tau_{E}^{\frac{\alpha'}{2}} \ket{\nu} \right\|^2 \ ,
\end{align*}
where we defined the pure state $\ket{\nu}_{A_1A_2BE} = \nu_{A_1B}^{\frac12} \rho_{A_1B}^{-\frac12} \ket{\psi}$, which is a purification of $\nu_{A_1A_2B}$.
To conclude we use the fact that $\nu_{A_1B} = \tr_{A_2}(\nu_{A_1A_2B})$ and Lemma~\ref{lem_Halphavector}.
\end{proof}

By choosing $\sigma_B = \rho_{B}$ in Lemma~\ref{lem_chainprep}, we directly obtain a chain rule for the R\'enyi entropies:

\begin{theorem} \label{thm_chainq_exact}
  Let $\rho_{A_1 A_2 B}$ be a density operator and let $\alpha \in (0, \infty)$. Then
     \begin{align} 
             H_\alpha(A_1 A_2 | B)_{\rho} &= H_{\alpha}(A_1 | B)_{\rho} + H_{\alpha}(A_2 | A_1 B)_{\nu} \ ,
  \end{align}
where 
\begin{align}
\label{eq_alphastate2}
\nu_{A_1A_2B} = \nu^{\frac12}_{A_1B} \rho_{A_2|A_1B} \nu^{\frac12}_{A_1B} \: \text{ with } \: \nu_{A_1B} = \frac{\left(\rho_{A_1B}^{\frac12} \rho^{\frac{1-\alpha}{\alpha}}_{B} \rho_{A_1B}^{\frac12}\right)^{\alpha}}{\tr\left( \rho_{A_1B}^{\frac12} \rho^{\frac{1-\alpha}{\alpha}}_{B} \rho_{A_1B}^{\frac12}\right)^{\alpha}} \ .
\end{align}
\end{theorem}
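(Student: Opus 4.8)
The plan is to obtain Theorem~\ref{thm_chainq_exact} as an immediate specialisation of Lemma~\ref{lem_chainprep}: all the analytic work has already been done there, so the only thing left is to translate the statement, which is phrased in terms of the sandwiched relative R\'enyi divergence $D_\alpha$, into the language of conditional R\'enyi entropies.

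First I would record the elementary identity relating Definition~\ref{def_sandwichedentropy} to $D_\alpha$, namely that for every $\rho_{AB}$ one has
\[
  H_\alpha(A|B)_\rho = - D_\alpha(\rho_{AB} \,\|\, \id_A \otimes \rho_B) \ .
\]
This is read off directly by comparing Definition~\ref{def_sandwichedentropy} (and its rewriting as $\frac{1}{1-\alpha}\log\tr\bigl((\rho_B^{\frac{1-\alpha}{2\alpha}}\rho_{AB}\rho_B^{\frac{1-\alpha}{2\alpha}})^\alpha\bigr)$) with the definition of the sandwiched divergence recalled in Appendix~\ref{app_sandwich}; here one only needs the fact that $\Supp(\rho_{AB}) \subseteq \Supp(\id_A\otimes\rho_B)$ always holds, so that the divergence is finite and the generalised inverses behave as expected. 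Applying the same identity to the bipartition $A_1A_2 \,|\, B$ gives $H_\alpha(A_1A_2|B)_\rho = -D_\alpha(\rho_{A_1A_2B}\|\id_{A_1A_2}\otimes\rho_B)$.

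Next I would invoke Lemma~\ref{lem_chainprep} with $\sigma_B = \rho_B$. Its left-hand side becomes
\[
  D_\alpha(\rho_{A_1B}\|\id_{A_1}\otimes\rho_B) - D_\alpha(\rho_{A_1A_2B}\|\id_{A_1A_2}\otimes\rho_B) = -\,H_\alpha(A_1|B)_\rho + H_\alpha(A_1A_2|B)_\rho \ ,
\]
while its right-hand side is exactly $H_\alpha(A_2|A_1B)_\nu$ with $\nu$ as in~\eqref{eq_alphastate}, and substituting $\sigma_B = \rho_B$ there turns~\eqref{eq_alphastate} into~\eqref{eq_alphastate2}. Rearranging yields the claimed chain rule. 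The case $\alpha = 1$ is already covered by Lemma~\ref{lem_chainprep}, and in any case reduces to additivity of the conditional von Neumann entropy together with $\nu_{A_1A_2B} = \rho_{A_1A_2B}$ at $\alpha = 1$.

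Since the genuinely analytic content — manipulating a purification $\ket\psi$ of $\rho_{A_1A_2B}$, commuting the factor $\rho_{A_1B}^{-1/2}$ through, and checking that $\ket\nu = \nu_{A_1B}^{1/2}\rho_{A_1B}^{-1/2}\ket\psi$ is a purification of $\nu_{A_1A_2B}$ — is packaged inside Lemma~\ref{lem_chainprep}, there is no real obstacle left in Theorem~\ref{thm_chainq_exact}. The only point that warrants a moment's care is bookkeeping around generalised inverses and supports: one should verify that $\nu_{A_1B}$ is well defined and that replacing $\sigma_B$ by $\rho_B$ creates no support mismatch in the denominator of~\eqref{eq_alphastate2}, which it does not because $\Supp(\rho_{A_1B}) \subseteq \Supp(\id_{A_1}\otimes\rho_B)$. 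If one instead wished to prove the whole package from scratch, the hard step would be Lemma~\ref{lem_chainprep} itself, and within it the use of the variational pure-state formula for $D_\alpha$ (Lemma~\ref{lem_Halphavector}) together with the fact that $\|\cdot\|_{2\alpha}$ is not a norm for $\alpha<1$ — which is precisely why that argument is routed through traces rather than through norm inequalities.
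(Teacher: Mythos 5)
Your proposal is correct and is exactly the paper's own proof: Theorem~\ref{thm_chainq_exact} is obtained by setting $\sigma_B = \rho_B$ in Lemma~\ref{lem_chainprep} and translating $D_\alpha$ into $H_\alpha$ via the identity $H_\alpha(A|B)_\rho = -D_\alpha(\rho_{AB}\|\id_A\otimes\rho_B)$. The sign bookkeeping and the specialisation of~\eqref{eq_alphastate} to~\eqref{eq_alphastate2} are handled correctly, and your remarks on supports are accurate though not strictly needed.
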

One drawback of the above result is that we are seldom interested in the particular state $\nu$ defined in the theorem statement. It is therefore generally more useful to present the result in a slightly weaker form, where the state $\nu$ is chosen to be the worst case over an appropriate class of density operators. When $\rho$ obeys the Markov chain condition $A_1 \leftrightarrow B_1 \leftrightarrow B_2$, we obtain the following result.

\begin{theorem}
\label{thm_chainq}
Let $\rho_{A_1 B_1 A_2 B_2}$ be a density operator such that  the Markov chain condition $A_1 \leftrightarrow B_1 \leftrightarrow B_2$ holds and let $\alpha \in (0, \infty)$. Then
   \begin{align} 
          \inf_{\nu} H_{\alpha}(A_2 | B_2 A_1 B_1)_{\nu} 
  \leqslant H_\alpha(A_1 A_2 | B_1 B_2)_{\rho} - H_{\alpha}(A_1 | B_1)_{\rho} \leqslant \sup_{\nu} H_{\alpha}(A_2 | B_2 A_1 B_1)_{\nu} \label{eq_chaincor_2_bounds} 
  \end{align}
    where the supremum and infimum range over density operators $\nu$ such that $\nu_{A_2 B_2 | A_1 B_1} = \rho_{A_2 B_2  | A_1 B_1}$ holds. 
\end{theorem}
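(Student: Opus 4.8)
The plan is to deduce Theorem~\ref{thm_chainq} from the exact chain rule, Theorem~\ref{thm_chainq_exact}, applied with the two systems being $A_1$ and $A_2$ and the conditioning system being the \emph{pair} $B_1 B_2$. Concretely, Theorem~\ref{thm_chainq_exact} (with $B$ replaced by $B_1 B_2$) gives
\[
 H_\alpha(A_1 A_2 \mid B_1 B_2)_\rho = H_\alpha(A_1 \mid B_1 B_2)_\rho + H_\alpha(A_2 \mid A_1 B_1 B_2)_\nu \, ,
\]
where $\nu = \nu_{A_1 A_2 B_1 B_2}$ is the explicit state of~\eqref{eq_alphastate2} (with $B \to B_1 B_2$), and of course $H_\alpha(A_2 \mid A_1 B_1 B_2)_\nu = H_\alpha(A_2 \mid B_2 A_1 B_1)_\nu$ since the conditioning system is the same. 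Everything then reduces to two claims: (i) the Markov condition lets us replace the first term $H_\alpha(A_1 \mid B_1 B_2)_\rho$ by $H_\alpha(A_1 \mid B_1)_\rho$; and (ii) this particular $\nu$ belongs to the feasible set, i.e.\ $\nu_{A_2 B_2 \mid A_1 B_1} = \rho_{A_2 B_2 \mid A_1 B_1}$. Granting both, the middle quantity in~\eqref{eq_chaincor_2_bounds} equals $H_\alpha(A_2 \mid B_2 A_1 B_1)_\nu$ for a feasible $\nu$, and hence lies between the infimum and the supremum, which is the assertion.

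For (i), I would use the block decomposition~\eqref{eq_Markovcondtrad}: writing $B_1 \cong \bigoplus_j b_j^L \otimes b_j^R$ with $\rho_{A_1 B_1 B_2} \cong \bigoplus_j q_j\, \bar\rho_{A_1 b_j^L} \otimes \bar\rho_{b_j^R B_2}$, one substitutes this into the identity $H_\alpha(A_1 \mid B_1 B_2)_\rho = \frac{1}{1-\alpha}\log\tr\bigl(\bigl(\rho_{B_1 B_2}^{\frac{1-\alpha}{2\alpha}}\rho_{A_1 B_1 B_2}\rho_{B_1 B_2}^{\frac{1-\alpha}{2\alpha}}\bigr)^\alpha\bigr)$. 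Since every factor is block diagonal and product-form within a block, the $B_2$-dependence collapses (the $\bar\rho_{b_j^R B_2}$ part contributes a factor of trace $1$), and one is left with $\frac{1}{1-\alpha}\log\sum_j q_j \tr\bigl(\bigl(\bar\rho_{b_j^L}^{\frac{1-\alpha}{2\alpha}}\bar\rho_{A_1 b_j^L}\bar\rho_{b_j^L}^{\frac{1-\alpha}{2\alpha}}\bigr)^\alpha\bigr)$, which is exactly what the same computation produces for $H_\alpha(A_1 \mid B_1)_\rho$. (Alternatively, for $\alpha \geqslant 1/2$ one may argue purely by the data-processing inequality for the conditional sandwiched R\'enyi entropy, cf.\ Appendix~\ref{app_sandwich}: tracing out $B_2$ gives $H_\alpha(A_1 \mid B_1 B_2)_\rho \leqslant H_\alpha(A_1 \mid B_1)_\rho$, while applying the Markov recovery map $\cT_{B_1 B_2 \gets B_1}$ of~\eqref{eq_recovery} to $\rho_{A_1 B_1}$ gives the reverse inequality.)

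For (ii), the same block decomposition shows that $\nu_{A_1 B_1 B_2}$ — which by~\eqref{eq_alphastate2} is proportional to $\bigl(\rho_{A_1 B_1 B_2}^{1/2}\rho_{B_1 B_2}^{\frac{1-\alpha}{\alpha}}\rho_{A_1 B_1 B_2}^{1/2}\bigr)^\alpha$ — is block diagonal in the same decomposition, each block factorizing as an operator on $A_1 b_j^L$ tensored with $\bar\rho_{b_j^R B_2}$. Consequently the operator $S := \nu_{A_1 B_1 B_2}^{1/2}\rho_{A_1 B_1 B_2}^{-1/2}$, for which $\nu_{A_1 A_2 B_1 B_2} = S\,\rho_{A_1 A_2 B_1 B_2}\,S^\dagger$ by the construction of $\nu$ in Theorem~\ref{thm_chainq_exact}, factorizes — on the support of $\rho_{B_1 B_2}$, which is all that matters when conjugating $\rho_{A_1 A_2 B_1 B_2}$ — as an operator on $A_1 B_1$ times the support projector on $B_1 B_2$. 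Dropping the (redundant) projector yields $\nu_{A_1 A_2 B_1 B_2} = \tilde S\,\rho_{A_1 A_2 B_1 B_2}\,\tilde S^\dagger$ with $\tilde S = \nu_{A_1 B_1}^{1/2}\rho_{A_1 B_1}^{-1/2}$ acting only on $A_1 B_1$, and (again from the block form) $\Supp(\nu_{A_1 B_1}) = \Supp(\rho_{A_1 B_1})$. Plugging $\nu_{A_1 A_2 B_1 B_2} = \tilde S \rho_{A_1 A_2 B_1 B_2}\tilde S^\dagger$ and $\nu_{A_1 B_1} = \tilde S \rho_{A_1 B_1}\tilde S^\dagger$ into $\nu_{A_2 B_2 \mid A_1 B_1} = \nu_{A_1 B_1}^{-1/2}\nu_{A_1 A_2 B_1 B_2}\nu_{A_1 B_1}^{-1/2}$, the $\tilde S$ factors cancel and one recovers $\rho_{A_1 B_1}^{-1/2}\rho_{A_1 A_2 B_1 B_2}\rho_{A_1 B_1}^{-1/2} = \rho_{A_2 B_2 \mid A_1 B_1}$.

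I expect (ii) to be the delicate step: one must check that $S$ genuinely reduces to an operator on $A_1 B_1$ only — this is precisely where the Markov structure is indispensable, since in general $S$ acts on all of $A_1 B_1 B_2$ — and that $\Supp(\nu_{A_1 B_1}) = \Supp(\rho_{A_1 B_1})$ so that the cancellation of $\tilde S$ is legitimate; both rest on the fact that conjugation by $\rho_{B_1 B_2}^{(1-\alpha)/\alpha}$ and the operation $X \mapsto X^\alpha$ respect the direct-sum structure of~\eqref{eq_Markovcondtrad}. The case $\alpha = 1$ is immediate: there one may take $\nu = \rho$, which is feasible, and $H(A_1 \mid B_1 B_2)_\rho = H(A_1 \mid B_1)_\rho$ is the usual Markov identity $I(A_1 : B_2 \mid B_1)_\rho = 0$.
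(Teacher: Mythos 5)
Your proposal is correct and follows essentially the same route as the paper: apply Theorem~\ref{thm_chainq_exact} with $B = B_1 B_2$, use the block decomposition~\eqref{eq_Markovcondtrad} to show both that $H_\alpha(A_1|B_1B_2)_\rho = H_\alpha(A_1|B_1)_\rho$ and that the explicit state $\nu$ of~\eqref{eq_alphastate2} satisfies $\nu_{A_2B_2|A_1B_1} = \rho_{A_2B_2|A_1B_1}$. The only cosmetic difference is in step (ii), where the paper verifies the identity $\nu_{A_1B_1}^{-1/2}\nu_{A_1B_1B_2}^{1/2} = \rho_{A_1B_1}^{-1/2}\rho_{A_1B_1B_2}^{1/2}$ directly from the block form rather than isolating and cancelling your operator $\tilde S$, but the two computations rest on exactly the same facts.
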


\begin{proof}
We apply Theorem~\ref{thm_chainq_exact} with $B = B_1B_2$. The Markov chain condition implies that $H_\alpha(A_1|B_1B_2)_{\rho} = H_\alpha(A_1|B_1)_{\rho}$. To see this for $\alpha \in (\frac12, \infty)$, we could use the recoverability condition~\eqref{eq_recovery} for Markov chains together with the monotonicity of $D_{\alpha}$ under quantum channels~\cite{MDSFT13,WWY13,Beigi,FrankLieb}. We can also see it for all $\alpha \in (0, \infty)$ using the structure of a Markov chain stated in~\eqref{eq_Markovcondtrad}. Namely, there exists a decomposition $\bigoplus_{j} a_j \otimes b_j$ of the system $B_1$ such that
   \begin{align} \label{eq_rhoMcond2} 
  \rho_{A_1 B_1 B_2} & \cong \bigoplus_{j}  q_j \,  \hat{\rho}_{A_1 a_j} \otimes \hat{\rho}_{b_j B_2} 
  \end{align} 
  holds, where $\{q_j\}$ is a probability distribution and where $\{\hat{\rho}_{A a_j}\}$ and $\{\hat{\rho}_{b_j B}\}$ are families of density operators. Then, 
  \begin{align*}
  H_{\alpha}(A_1 | B_1 B_2)_{\rho} &= \frac{1}{1-\alpha} \log \tr\left( \bigl( \rho_{B_1 B_2}^{\frac{-\alpha'}{2}} \rho_{A_1 B_1 B_2} \rho_{B_1 B_2}^{\frac{-\alpha'}{2}} \bigr)^{\alpha} \right) \\
  &= \frac{1}{1-\alpha} \log  \tr\left( \bigoplus_{j} q_j \bigl( \hat{\rho}_{a_j}^{\frac{-\alpha'}{2}} \otimes \hat{\rho}_{b_j B_2}^{\frac{-\alpha'}{2}} \left( \hat{\rho}_{A_1 a_j} \otimes \hat{\rho}_{b_j B_2} \right)  \hat{\rho}_{a_j}^{\frac{-\alpha'}{2}} \otimes \hat{\rho}_{b_j B_2}^{\frac{-\alpha'}{2}} \bigr)^{\alpha} \right) \\
  &= \frac{1}{1-\alpha} \log  \tr\left( \bigoplus_{j} q_j \bigl( \hat{\rho}_{a_j}^{\frac{-\alpha'}{2}}   \hat{\rho}_{A_1 a_j}   \hat{\rho}_{a_j}^{\frac{-\alpha'}{2}} \bigr)^{\alpha} \otimes \hat{\rho}_{b_j B_2} \right) = H_{\alpha}(A_1 | B_1)_{\rho} \ .
  \end{align*}
 To prove~\eqref{eq_chaincor_2_bounds}, it only remains to show that the state $\nu_{A_1A_2B_1B_2}$ defined in~\eqref{eq_alphastate2} satisfies $\nu_{A_2B_2|A_1B_1} = \rho_{A_2B_2|A_1B_1}$. For that, we again use the fact that $\rho_{A_1B_1B_2}$ forms a Markov chain.  As we will be using this statement later in other contexts, we state it as a claim.
 \begin{claim}
 \label{claim_markov_cond}
 Let $\rho_{A_1 B_1 A_2 B_2}$ be a density operator such that  the Markov chain condition $A_1 \leftrightarrow B_1 \leftrightarrow B_2$ holds, let $\alpha \in (0, \infty)$ and let $\nu_{A_1 B_1 A_2 B_2}$ be as in~\eqref{eq_alphastate2} with $B \rightarrow B_1 B_2$.
 Then $\nu_{A_2 B_2 | A_1 B_1} = \rho_{A_2 B_2 | A_1 B_1}$.
 \end{claim}
 Letting $Z = \tr\left( \rho_{A_1B_1B_2}^{\frac{1}{2}} \rho^{-\alpha'}_{B_1B_2} \rho_{A_1B_1B_2}^{\frac{1}{2}}\right)^{\alpha}$,  the decomposition~\eqref{eq_rhoMcond2} allows us to write
\begin{align}
    \nu_{A_1B_1B_2} &= \frac{1}{Z} \left(\rho_{A_1B_1B_2}^{\frac{1}{2}} \rho_{B_1B_2}^{-\alpha'} \rho_{A_1B_1B_2}^{\frac{1}{2}} \right)^{\alpha} \\
&= \frac{1}{Z} \left(\bigoplus_{j} q_j^{1-\alpha'} \hat{\rho}_{A_1 a_j}^{\frac{1}{2}} \hat{\rho}_{a_j}^{-\alpha'} \hat{\rho}_{A_1 a_j}^{\frac{1}{2}} \otimes \hat{\rho}_{b_j B_2}^{1-\alpha'} \right)^{\alpha} \\
&= \frac{1}{Z} \bigoplus_{j} q_j \left(\hat{\rho}_{A_1 a_j}^{\frac{1}{2}} \hat{\rho}_{a_j}^{-\alpha'} \hat{\rho}_{A_1 a_j}^{\frac{1}{2}} \right)^{\alpha} \otimes \hat{\rho}_{b_j B_2} \ .
\end{align}  
It follows that 
\begin{align}
\nu_{A_1B_1}^{-\frac{1}{2}} \nu_{A_1B_1B_2}^{\frac{1}{2}} &= \bigoplus_{j} \hat{\rho}^0_{A_1 a_j} \otimes \hat{\rho}_{b_j}^{-\frac{1}{2}} \hat{\rho}_{b_j B_2}^{\frac{1}{2}}  \\
&= \rho_{A_1B_1}^{-\frac{1}{2}} \rho_{A_1B_1B_2}^{\frac{1}{2}} \ ,
\end{align}  
where $\hat{\rho}^0_{A_1 a_j}$ is the projector onto the support of $\hat{\rho}_{A_1 a_j}$. We used for the first equality the fact that the support of the operator $\left(\hat{\rho}_{A_1 a_j}^{\frac{1}{2}} \hat{\rho}_{a_j}^{-\alpha'} \hat{\rho}_{A_1 a_j}^{\frac{1}{2}} \right)^{\alpha}$ is the same as the support of $\hat{\rho}_{A_1 a_j}$.
As a result, we find
\begin{align}
\nu_{A_2B_2|A_1B_1} &=  \nu_{A_1B_1}^{-\frac{1}{2}} \nu_{A_1B_1B_2}^{\frac{1}{2}} \rho_{A_2|A_1B_1B_2} \nu_{A_1B_1B_2}^{\frac{1}{2}} \nu_{A_1B_1}^{-\frac{1}{2}} \\
&= \rho_{A_1B_1}^{-\frac{1}{2}} \rho_{A_1B_1B_2}^{\frac{1}{2}} \rho_{A_2|A_1B_1B_2} \rho_{A_1B_1B_2}^{\frac{1}{2}} \rho_{A_1B_1}^{-\frac{1}{2}} \\
&= \rho_{A_2B_2 | A_1B_1} \ .
\end{align}
This concludes the proof of Claim~\ref{claim_markov_cond} and gives the desired statement.
\end{proof}

The following simple corollary expresses the above chain rules in terms of quantum channels, i.e., trace preserving completely positive  (TPCP) maps, rather than conditional states.

\begin{corollary} \label{cor_conditionalmap}
Let $\rho^0_{R A_1 B_1}$ be a density operator on $R \otimes A_1 \otimes B_1$, $\cM = \cM_{A _2 B_2 \leftarrow R}$ be a TPCP map and $\alpha \in (0, \infty)$. Assuming that $\rho_{A_1 B_1 A_2 B_2} = \cM(\rho^0_{RA_1B_1})$ satisfies the Markov condition $A_1 \leftrightarrow B_1 \leftrightarrow B_2$, we have
   \begin{align*} 
          \inf_{\omega} H_{\alpha}(A_2 | B_2 A_1 B_1)_{\cM(\omega)} 
  \leqslant H_\alpha(A_1 A_2 | B_1 B_2)_{\cM(\rho^0)} - H_{\alpha}(A_1 | B_1)_{\rho^0} \leqslant \sup_{\omega} H_{\alpha}(A_2 | B_2 A_1 B_1)_{\cM(\omega)} 
 \end{align*}
    where the supremum and infimum range over density operators $\omega_{R A_1 B_1}$ on $R \otimes A_1 \otimes B_1$. Moreover, if $\rho^0_{R A_1 B_1}$ is pure then we can optimise over pure states $\omega_{R A_1 B_1}$.
\end{corollary}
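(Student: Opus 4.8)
\textbf{Proof plan for Corollary~\ref{cor_conditionalmap}.}

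The plan is to derive this directly from Theorem~\ref{thm_chainq} by applying $\cM$ to produce the four-partite state and then reinterpreting the optimization over conditional states $\nu$ as an optimization over input states $\omega_{RA_1B_1}$. First I would set $\rho_{A_1B_1A_2B_2} = \cM(\rho^0_{RA_1B_1})$ and note that, since $\cM$ acts only on $R$, we have $\rho_{A_1B_1} = \rho^0_{A_1B_1}$, so $H_\alpha(A_1|B_1)_{\rho} = H_\alpha(A_1|B_1)_{\rho^0}$; this is what lets us replace $H_\alpha(A_1|B_1)_\rho$ by $H_\alpha(A_1|B_1)_{\rho^0}$ in the statement. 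By hypothesis the Markov condition $A_1 \leftrightarrow B_1 \leftrightarrow B_2$ holds for $\rho$, so Theorem~\ref{thm_chainq} applies and gives
\begin{align*}
\inf_{\nu} H_{\alpha}(A_2 | B_2 A_1 B_1)_{\nu} \leqslant H_\alpha(A_1 A_2 | B_1 B_2)_{\cM(\rho^0)} - H_{\alpha}(A_1 | B_1)_{\rho^0} \leqslant \sup_{\nu} H_{\alpha}(A_2 | B_2 A_1 B_1)_{\nu}\ ,
\end{align*}
where $\nu$ ranges over density operators with $\nu_{A_2B_2|A_1B_1} = \rho_{A_2B_2|A_1B_1}$.

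The crux is then to show that the set of states $\nu_{A_1B_1A_2B_2}$ satisfying $\nu_{A_2B_2|A_1B_1} = \rho_{A_2B_2|A_1B_1}$ is exactly $\{\cM(\omega_{RA_1B_1}) : \omega \in \mathrm{D}(R\otimes A_1 \otimes B_1)\}$ as far as the quantity $H_\alpha(A_2|B_2A_1B_1)$ is concerned. For the ``$\supseteq$'' direction: given any input $\omega_{RA_1B_1}$, the output $\cM(\omega)$ has the property that its conditional state on $A_2B_2$ given $A_1B_1$ equals $\rho_{A_2B_2|A_1B_1}$ — this is precisely the content of the ``conditional state'' representation of a channel in the Remark around~\eqref{eq_mapMrepresentation}: writing $M_{A_2B_2|R} = \cM(\proj\Theta)$, one has $\cM(\xi_R) = \xi_R^{1/2} M_{A_2B_2|R}\xi_R^{1/2}$ up to the $\Theta$-conjugation, from which the conditional state $\cM(\xi)_{A_2B_2|R} = M_{A_2B_2|R}$ is seen to be independent of $\xi_R$; the same holds with spectator systems $A_1B_1$ attached, giving $\cM(\omega)_{A_2B_2|RA_1B_1} = M_{A_2B_2|R}$ and hence, after tracing out $R$, that $\cM(\omega)_{A_2B_2|A_1B_1}$ is the same operator for every $\omega$ with a fixed $\cM$ — in particular it equals $\rho_{A_2B_2|A_1B_1}$ by taking $\omega = \rho^0$. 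For the ``$\subseteq$'' direction: given any $\nu$ with $\nu_{A_2B_2|A_1B_1} = \rho_{A_2B_2|A_1B_1}$, we need to exhibit an input $\omega_{RA_1B_1}$ with $\cM(\omega)_{A_1B_1A_2B_2} = \nu_{A_1B_1A_2B_2}$. Here I would use that $\cM$ is a fixed channel: decompose $\cM$ via a Stinespring dilation, and since the conditional state $\nu_{A_2B_2|A_1B_1}$ is forced to coincide with the one induced by $\cM$, one can find $\omega$ reproducing $\nu$; concretely, the constraint $\nu_{A_2B_2|A_1B_1} = \rho_{A_2B_2|A_1B_1}$ says $\nu_{A_1B_1A_2B_2} = \nu_{A_1B_1}^{1/2}\,\rho_{A_2B_2|A_1B_1}\,\nu_{A_1B_1}^{1/2}$, which is exactly the form of $\cM$ applied to an input whose reduced state on $A_1B_1$ is $\nu_{A_1B_1}$ and which is otherwise set up (on $R$) to feed $\cM$ correctly. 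Since $\cM$ only touches $R$, and any state on $A_1B_1$ can be achieved as a reduced state of some $\omega_{RA_1B_1}$, the two sets coincide. Substituting this equivalence of optimization domains into the displayed chain-rule bounds yields the claimed inequalities.

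For the last sentence, suppose $\rho^0_{RA_1B_1}$ is pure. I would argue that restricting to pure $\omega$ does not change either optimum. The key point is that purifying does not decrease the relevant flexibility: given any mixed $\omega_{RA_1B_1}$, introduce a purifying system $R'$ and regard $\tilde R = RR'$ as the new side-information-carrying register feeding $\cM$ (extended trivially on $R'$); the output state on $A_1B_1A_2B_2$ is unchanged, so $H_\alpha(A_2|B_2A_1B_1)_{\cM(\omega)}$ is unchanged, but now the input is pure. One must check this is legitimate, i.e., that the Markov condition and all hypotheses are insensitive to enlarging $R$ — which they are, since $A_1B_1A_2B_2$ and the map $\cM$ restricted to the relevant output are unaffected. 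Hence the pure-state optimum equals the general one.

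\textbf{Main obstacle.} The delicate step is the ``$\subseteq$'' inclusion: verifying carefully that \emph{every} $\nu$ with the prescribed conditional state genuinely arises as $\cM(\omega)$ for some input $\omega_{RA_1B_1}$, rather than merely having the same value of the entropy. This requires unpacking the channel-as-conditional-state correspondence from the Remark and checking that the conditional state $\rho_{A_2B_2|A_1B_1}$ determines, together with an arbitrary choice of $\omega_{A_1B_1}$, a valid preimage under $\cM$ — in other words that the map $\omega_{RA_1B_1} \mapsto \cM(\omega)_{A_1B_1A_2B_2}$ surjects onto exactly the constrained set. The IID-style structure of $\cM$ (acting only on $R$) is what makes this work, but the bookkeeping with generalised inverses and supports (so that $\nu_{A_1B_1}^{-1/2}$ etc.\ behave) needs care.
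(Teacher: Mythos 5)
Your overall strategy matches the paper's: apply Theorem~\ref{thm_chainq} to $\rho = \cM(\rho^0)$, note $\rho_{A_1B_1} = \rho^0_{A_1B_1}$, and then argue that the constrained set of states $\nu$ with $\nu_{A_2B_2|A_1B_1} = \rho_{A_2B_2|A_1B_1}$ is contained in $\{\cM(\omega)\}$. You also correctly identify that only this one inclusion is needed and that it is the crux. However, there are two genuine problems. First, your ``$\supseteq$'' claim --- that $\cM(\omega)_{A_2B_2|A_1B_1}$ is the same operator for every input $\omega$ --- is false: the conditional state of the output depends on the correlations between $R$ and $A_1B_1$ in $\omega$ (take $\cM$ the identity channel $R \to A_2$, $\rho^0$ maximally entangled between $R$ and $A_1$, and $\omega$ a product state; the two output conditional states differ). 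Conditioning does not commute with tracing out $R$, so the step ``hence, after tracing out $R$'' is a non sequitur. This claim is harmless only because that direction is not needed, but you then lean on the same false intuition in the crux step, asserting that $\nu_{A_1B_1}^{1/2}\,\rho_{A_2B_2|A_1B_1}\,\nu_{A_1B_1}^{1/2}$ ``is exactly the form of $\cM$ applied to an input whose reduced state on $A_1B_1$ is $\nu_{A_1B_1}$''. It is not: an arbitrary input with that marginal will generally produce a different conditional state. Matching the marginal on $A_1B_1$ is not sufficient; you must also reproduce the correct correlations with $R$.

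The missing idea is the explicit preimage. Given $\nu$ with the prescribed conditional state, set
\begin{align*}
\omega_{R A_1 B_1} = \nu_{A_1 B_1}^{\frac{1}{2}} \rho_{A_1 B_1}^{-\frac{1}{2}} \, \rho^0_{R A_1 B_1} \, \rho_{A_1 B_1}^{-\frac{1}{2}} \nu_{A_1 B_1}^{\frac{1}{2}} \ .
\end{align*}
Since $\cM$ acts only on $R$ while the conjugating operators act only on $A_1 B_1$, they commute through $\cM$, giving $\cM(\omega) = \nu_{A_1B_1}^{1/2} \rho_{A_2B_2|A_1B_1} \nu_{A_1B_1}^{1/2} = \nu$. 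No Stinespring dilation is needed. This construction also settles the last sentence of the corollary in a way your purification argument does not: conjugating a pure $\rho^0$ by $\nu_{A_1B_1}^{1/2}\rho_{A_1B_1}^{-1/2}$ yields a pure $\omega$ on the \emph{same} space $R \otimes A_1 \otimes B_1$, whereas your proposal enlarges $R$ to $RR'$, which changes the optimisation domain in the statement rather than restricting it to pure states.
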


\begin{proof} 
We apply Theorem~\ref{thm_chainq} to $\rho_{A_1 B_1 A_2 B_2}$. It suffices to show that the optimisation over $\nu_{A_1 B_1 A_2 B_2}$ satisfying $\nu_{A_2 B_2 | A_1 B_1} = \rho_{A_2 B_2 | A_1 B_1}$ is contained in the optimisation over $\omega_{RA_1B_1}$. For this, let  $\nu_{A_1 B_1 A_2 B_2}$ be any density operator satisfying $\nu_{A_2 B_2 | A_1 B_1} = \rho_{A_2 B_2 | A_1 B_1}$, i.e., 
  \begin{align} \label{eq_nurhorelation}
    \nu_{A_1 B_1 A_2 B_2} 
   =  \nu_{A_1 B_1}^{\frac{1}{2}} \rho_{A_2 B_2 | A_1 B_1} \nu_{A_1 B_1}^{\frac{1}{2}}  \ .
  \end{align}  
Now we choose 
  \begin{align*}
    \omega_{R A_1 B_1} = \nu_{A_1 B_1}^{\frac{1}{2}} \rho_{A_1 B_1}^{-\frac{1}{2}} \rho^0_{R A_1 B_1}  \rho_{A_1 B_1}^{-\frac{1}{2}}  \nu_{A_1 B_1}^{\frac{1}{2}} \ .
  \end{align*}
We then see that 
\begin{align*}
\cM(\omega_{R A_1 B_1}) &= \nu_{A_1 B_1}^{\frac{1}{2}} \rho_{A_1 B_1}^{-\frac{1}{2}} \cM(\rho^0_{R A_1 B_1})  \rho_{A_1 B_1}^{-\frac{1}{2}}  \nu_{A_1 B_1}^{\frac{1}{2}} \\
&= \nu_{A_1 B_1}^{\frac{1}{2}} \rho_{A_2 B_2 | A_1 B_1 }  \nu_{A_1 B_1}^{\frac{1}{2}} \\
&= \nu_{A_1 B_1 A_2 B_2} \ .
\end{align*}
\end{proof} 

\section{Entropy accumulation}\label{sec_accumulation}

This section is devoted to the main result on entropy accumulation.  The statement is formulated in  its fully general form as Theorem~\ref{thm:entropyaccumulationext} and presented in a slightly simplified version as Corollary~\ref{cor:EAT}.  We also give a formulation that corresponds to statement~\eqref{eq_entropyaccumulation} of the introduction (Corollary~\ref{cor_diffi}). Finally, we show how the Quantum Asymptotic Equipartition Property follows as a special case (cf.\ Corollary~\ref{cor_QAEP}). 

For $i \in \{1,\dots,n\}$, let $\cM_i$ be a TPCP map from $R_{i-1}$ to $X_i A_i B_i R_i$, where $A_i$ is finite-dimensional and where $X_i$ represents a classical value from an alphabet $\cX$ that is  determined by $A_i$ and $B_i$ together. More precisely, we require that, $\cM_{i} = \cT_{i} \circ \cM'_i$ where $\cM'_{i}$ is an arbitrary TPCP map from $R_{i-1}$ to $A_{i} B_{i} R_{i}$  and $\cT_i$ is a TPCP map from $A_{i}B_{i}$ to $X_{i} A_i B_i$ of the form
\begin{align}
\label{eq_form_extraction}
\cT_{i}(W_{A_iB_i}) = \sum_{y \in \cY , z \in \cZ} (\Pi_{A_i,y} \otimes \Pi_{B_i,z}) W_{A_i B_i} (\Pi_{A_i,y} \otimes \Pi_{B_i,z}) \otimes \proj{t(y,z)}_{X_i} \ ,
\end{align}
where $\{\Pi_{A_i, y}\}$ and $\{\Pi_{B_i, z}\}$ are families of mutually orthogonal projectors on $A_i$ and $B_i$, and where $t : \cY \times \cZ  \to \cX$ is a deterministic function (cf.\ Figs.~\ref{fig:rho-structure} and~\ref{fig:rho-structure-X}). Special cases of interest are when $X_i$ is trivial and $\cT_{i}$ is the identity map, and when  $X_i = t(Y_i, Z_i)$ where  $Y_i$ and $Z_i$  are classical parts of $A_i$ and $B_i$, respectively. Note that the maps $\cT_i$ have the property that, for any operator $\bar{W}_{X_iA_iB_i}$, if $\bar{W}_{X_iA_iB_i} = \cT_{i}(W_{A_i B_i})$ then $\bar{W}_{X_i A_i B_i} = \cT_{i}(\bar{W}_{A_i B_i})$.

The entropy accumulation theorem stated below will hold for states of the form 
\begin{align} \label{eq_rhomap}
  \rho_{A_1^n B_1^n X_1^n E} = ({\cM_n \circ \dots \circ \cM_1} \otimes \cI_E)(\rho^0_{R_0 E})
\end{align}
where $\rho^0_{R_0 E} \in  \mathrm{D}(R_0 \otimes E)$ is a density operator on $R_0$ and an arbitrary system $E$.  In addition, we require that the Markov conditions 
\begin{align} \label{eq_Markovgen}
  A_1^{i-1} \leftrightarrow B_1^{i-1}  E \leftrightarrow B_{i}
\end{align}
be satisfied for all $i \in \{1, \ldots, n\}$. 

Let $\mbP$ be the set of probability distributions on the alphabet $\cX$ of $X_i$, and let $R$ be a system isomorphic to $R_{i-1}$. For any $q \in \mbP$ we define the set of states 
\begin{align}
\label{eq_def_set_Sigma}
  \Sigma_i(q) = \bigl\{\nu_{X_i A_i B_i R_i R} = (\cM_i \otimes \cI_R)(\omega_{R_{i-1} R}) : \quad  \omega \in \mathrm{D}(R_{i-1} \otimes R) \text{ and } \nu_{X_i} = q \bigr\}  \ ,
\end{align}
where $\nu_{X_i}$ denotes the probability distribution over $\cX$ with the probabilities given by $\bra{x} \nu_{X_i} \ket{x}$. 
\begin{definition} \label{def_tradeoff}
  A  real function $f$ on $\mbP$ is called a \emph{min-} or \emph{max-tradeoff function} for $\cM_i$ if it  satisfies 
\begin{align*}
    f(q) \leqslant \inf_{\nu \in \Sigma_i(q)} H(A_i | B_i R)_{\nu} 
  \qquad \text{or} \qquad
    f(q) \geqslant \sup_{\nu \in \Sigma_i(q)} H(A_i |B_i R)_{\nu} \ ,
\end{align*}
respectively.\footnote{If the set $\Sigma_i(q)$ is empty then the infimum and supremum are by definition equal to~$\infty$ and $-\infty$, respectively, so that the conditions are trivial.}
\end{definition}

\begin{remark} \label{rem_Rdimension}
To determine the infimum $\inf_{\nu \in \Sigma_i(q)} H(A_i | B_i R)_{\nu}$, we may assume that $\omega_{R_{i-1} R}$ in the definition of $\Sigma_i(q)$ is pure. In fact, including a purifying system in $R$ cannot increase $H(A_i | B_i R)$ because of strong subadditivity. Similarly, to calculate the supremum $\sup_{\nu \in \Sigma_i(q)} H(A_i | B_i R)_{\nu}$, we may assume that $\omega_{R_{i-1} R}$ is a product state or that $R$ is trivial. This justifies the fact that we assumed $R$ is isomorphic to $R_{i-1}$ in the definition of $\Sigma_i(q)$. 
\end{remark}

\begin{remark} \label{rem_Rdimensiona}
As we will see in the proof below, one can also impose the constraint on the set  $\Sigma_i(q)$ that the system $R$ be isomorphic to $A_1^{i-1}B_1^{i-1}E$. Furthermore,  if a part of the latter is classical in $\rho$, one can restrict $\Sigma_i(q)$ to states satisfying this property.
\end{remark}

In the following, we denote by $\nabla f$ the gradient of a function $f$. (Note that in Theorem~\ref{thm:entropyaccumulationext} and Proposition~\ref{prop:accumulation-alpha} $f$ is an affine function, so that $\nabla f$ is a constant.) We write $\freq{X_1^n}$ for the distribution on $\cX$ defined by $\freq{X_1^n}(x) = \frac{|\{i \in \{1,\dots,n\} : X_i = x\}|}{n}$.  We also recall that in this context, an event $\Omega$ is defined by a subset of $\cX^n$ and we write $\rho[\Omega] = \sum_{x_1^n \in \Omega}\tr(\rho_{A_1^n B_1^n E, x_1^n})$ for the probability of the event $\Omega$ and 
\begin{align*}
  \rho_{X_1^n A_1^n B_1^n E | \Omega} = \frac{1}{\rho[\Omega]} \sum_{x_1^n \in \Omega} \proj{x_1^n} \otimes \rho_{A_1^n B_1^n E, x_1^n}
\end{align*}
for the state conditioned on~$\Omega$ (cf.\ Section~\ref{sec_Notation}).

\begin{theorem}\label{thm:entropyaccumulationext}
    Let $\cM_1,\dots,\cM_n$ and  $\rho_{A_1^n B_1^n X_1^n E}$ be such that~\eqref{eq_rhomap} and the Markov conditions~\eqref{eq_Markovgen} hold, let  $h \in \mathbb{R}$, let $f$ be an affine min-tradeoff function for $\cM_1,\dots,\cM_n$,  and let  $\varepsilon \in (0,1)$. Then, for any event $\Omega \subseteq \cX^n$ that implies $f(\freq{X_1^n}) \geqslant h$,\footnote{We say that the event $\Omega$ implies $f(\freq{X_1^n}) \geqslant h$ if for every $x_1^n \in \Omega, f(\freq{x_1^n}) \geqslant h$.}
      \begin{align}
        \label{eqn:eat-min}H_{\min}^{\varepsilon}(A_1^n | B_1^n E)_{\rho_{|\Omega}} & > n h -  c \sqrt{n}
  \end{align}
  holds for $c = 2 \bigl(\log (1+2 d_A) + \left\lceil \| \nabla f \|_\infty \right\rceil  \bigr) \sqrt{1- 2 \log (\varepsilon \rho[\Omega])}$, where $d_A$ is the maximum dimension of the systems~$A_i$.  Similarly,
  \begin{align} 
        \label{eqn:eat-max}H_{\max}^{\varepsilon}(A_1^n | B_1^n E)_{\rho_{|\Omega}} & < n h + c \sqrt{n} 
    \end{align}
    holds if $f$ is replaced by an affine max-tradeoff function and if $\Omega$ implies $f(\freq{X_1^n}) \leqslant h$. 
    \end{theorem}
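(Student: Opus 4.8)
The plan is to follow the three-step strategy outlined in the introduction: (i) bound the smooth min-entropy $H_{\min}^\varepsilon(A_1^n|B_1^nE)_{\rho_{|\Omega}}$ from below by a sandwiched conditional Rényi entropy $H_\alpha(A_1^n|B_1^nE)$ of order $\alpha$ slightly larger than $1$, evaluated on a state that approximates $\rho_{|\Omega}$; (ii) use the chain rule of Theorem~\ref{thm_chainq} (in the channel form of Corollary~\ref{cor_conditionalmap}) iteratively to write this Rényi entropy as a telescoping sum $\sum_{i=1}^n$ of conditional Rényi entropies $H_\alpha(A_i|B_iR)$ of the individual steps, each evaluated on a worst-case state in $\Sigma_i(\cdot)$; and (iii) bound each per-step Rényi entropy in terms of the von Neumann entropy $H(A_i|B_iR)$ via a continuity/expansion estimate of the form $H_\alpha \geq H - (\alpha-1)\,K$ for an explicit constant $K$ depending only on $d_A$. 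Summing and optimising $\alpha$ as a function of $n$ (taking $\alpha - 1 \sim 1/\sqrt n$) then yields the claimed $nh - c\sqrt n$ bound, with the $\log(1+2d_A)$ term coming from the dimension dependence in the Rényi-to-von Neumann comparison.

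The subtlety that forces a more careful argument than the plain QAEP is the statistical conditioning on $\Omega$ and the tradeoff function. The chain rule as stated produces, at step $i$, a state of the form $(\cM_i\otimes\cI_R)(\omega_{R_{i-1}R})$ with no control over its $X_i$-marginal, so $H_\alpha(A_i|B_iR)$ on that state is not directly bounded by $f$. The fix I expect is to work instead with the affine function $f$ evaluated at $\freq{X_1^n}$: one augments the per-step entropy bound by the term $\nabla f \cdot (\delta_{X_i} - \freq)$ (a ``martingale''-type correction), so that the sum of the per-step bounds telescopes into $n f(\freq{X_1^n})$ plus fluctuation terms, and on $\Omega$ one has $f(\freq{X_1^n})\geq h$. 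Concretely, I would prove an auxiliary proposition (this is presumably Proposition~\ref{prop:accumulation-alpha}) stating
\[
  H_\alpha(A_1^n X_1^n | B_1^n E)_{\rho} \;\geq\; \sum_{i=1}^n \Bigl( \inf_{\nu\in\Sigma_i(q)} \bigl[ H(A_i|B_iR)_\nu + \nabla f\cdot(\delta_q - q) \bigr] \Bigr) - (\alpha-1)\, n\, K^2 - \text{const},
\]
for a suitable constant, by applying the chain rule to the pair $(A_iX_i)$ at each step (using the structure~\eqref{eq_form_extraction} of $\cT_i$ to handle the classical register $X_i$), and then relating $\inf H_\alpha(A_iX_i|\dots)$ over states with arbitrary $X_i$-marginal $q$ to $f(q) + \nabla f\cdot(\delta-q)$ via concavity of $H$ and affinity of $f$. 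The per-step Rényi-vs-von-Neumann estimate must be uniform over all input states $\omega_{R_{i-1}R}$, which is where the dimension bound $\log(1+2d_A)$ (rather than $\log d_A$) and the appearance of $\|\nabla f\|_\infty$ enter the constant $K$.

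The passage from Proposition~\ref{prop:accumulation-alpha} to Theorem~\ref{thm:entropyaccumulationext} then goes as follows: start from $\rho_{|\Omega}$; since $X_1^n$ is classical and determined by $A_1^nB_1^n$, removing it costs nothing, so $H_{\min}^\varepsilon(A_1^n|B_1^nE)_{\rho_{|\Omega}} \geq H_{\min}^\varepsilon(A_1^nX_1^n|B_1^nE)_{\rho_{|\Omega}}$ up to a trivial comparison; bound this below by $H_\alpha(A_1^nX_1^n|B_1^nE)$ on a smoothed/unconditioned version of $\rho$, paying a penalty $\tfrac{\alpha}{\alpha-1}\log\tfrac{1}{\varepsilon^2\rho[\Omega]}$ for both the smoothing and the conditioning (standard min-to-Rényi conversion, e.g.\ via the argument used for QAEP in~\cite{TCR09}); apply the Proposition; use $f(\freq{X_1^n})\geq h$ on $\Omega$ together with affinity of $f$ to collapse the $\sum_i \inf[\,H + \nabla f\cdot(\delta-q)\,]$ term into $nh$; and finally choose $\alpha = 1 + \Theta(1/\sqrt n)$ to balance the linear-in-$n$ error $(\alpha-1)nK^2$ against the $\tfrac{1}{\alpha-1}$ smoothing penalty, giving the $c\sqrt n$ term with the stated constant. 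The max-entropy statement~\eqref{eqn:eat-max} is entirely parallel, using $\alpha < 1$, a max-tradeoff function, and the corresponding max-to-Rényi conversion.

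The main obstacle I anticipate is step (ii)–(iii) combined: establishing the per-step bound $H_\alpha(A_iX_i|B_iA_1^{i-1}B_1^{i-1}E)_{\cM_i(\omega)} \geq f(\nu_{X_i}) + \nabla f\cdot(\delta_{X_i}-\nu_{X_i}) - O(\alpha-1)$ uniformly over $\omega$, with an error constant that is controlled purely by $d_A$ and $\|\nabla f\|_\infty$ and does not blow up. This requires a quantitative second-order expansion of the sandwiched Rényi entropy around $\alpha=1$ that is uniform in the state (so one cannot just invoke differentiability at a point), plus a careful treatment of how the classical $X_i$ register and the affine correction interact with the Rényi chain rule — in particular verifying that the worst-case state $\nu$ produced by Theorem~\ref{thm_chainq} indeed lies in $\Sigma_i(\nu_{X_i})$ so that $f$ applies. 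I would handle the uniform expansion by writing $H_\alpha - H$ as an integral of $\partial_\alpha H_\alpha$ and bounding the derivative by operator-norm estimates on $\log$ of the relevant operators, whose spectra are confined to an interval determined by $d_A$ (this is essentially the content of the lemmas in Appendix~\ref{app_sandwich}/\ref{app_alpha} that the paper defers there).
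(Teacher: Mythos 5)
Your overall skeleton is the same as the paper's: lower-bound the smooth min-entropy by a sandwiched R\'enyi entropy of order $\alpha>1$ (Lemma~\ref{lem_Hepsalpha}), pay $\frac{\alpha}{\alpha-1}\log\frac{1}{\rho[\Omega]}$ to pass from $\rho_{|\Omega}$ to $\rho$ (Lemmas~\ref{lem:abx-chain-rule-opt} and~\ref{lem:abx-chain-rule}), decompose via the chain rule of Corollary~\ref{cor_conditionalmap} applied $n$ times, convert each per-step R\'enyi entropy to a von Neumann entropy at cost $(\alpha-1)\log^2(1+2d_Ad_D)$ per step (Lemma~\ref{lem_HalphaH}), and choose $\alpha-1\sim 1/\sqrt n$. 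You also correctly identify the central difficulty: the chain rule yields $\sum_i\inf_\omega H_\alpha(A_i|B_iR)_{\cM_i(\omega)}$ with no control over the $X_i$-marginal of the worst-case state, so the tradeoff function and the observed statistics $\freq{X_1^n}$ cannot be applied termwise.

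However, your proposed resolution of that difficulty is where the gap lies. You suggest augmenting each per-step bound by a ``martingale-type correction'' $\nabla f\cdot(\delta_{X_i}-q)$ so that the sum telescopes into $nf(\freq{X_1^n})$, but you give no mechanism by which such a correction can be inserted into the R\'enyi chain rule. The per-step quantity produced by Corollary~\ref{cor_conditionalmap} is an infimum over \emph{all} input states $\omega_{R_{i-1}R}$, evaluated on an abstract worst-case state; the classical value $X_i$ of the actual conditioned state $\rho_{|\Omega}$ does not appear in it, so there is nothing to telescope. Your fallback (a uniform second-order expansion of $H_\alpha$ around $\alpha=1$) addresses only the R\'enyi-to-von-Neumann conversion, not this issue. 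The paper's missing ingredient is the auxiliary registers $D_i\bar D_i$: one appends, as a function of $X_i$, a state $\tau(x)$ with $H_\alpha(D_i|\bar D_i)_{\tau(x)}=\bar g-f(\delta_x)$, of dimension $d_D=\lceil 2^{\|\nabla f\|_\infty}\rceil$. Because $f$ is affine, this makes the per-step quantity $H(A_iD_i|B_i\bar D_iR)=H(A_i|B_iR)+\bar g-f(q)\geqslant\bar g$ uniformly over all input states, so the chain rule applies cleanly to the augmented state $\bar\rho$; the dependence on $\freq{X_1^n}$ is then recovered at the end by splitting off $H_\alpha(D_1^n|\bar D_1^nX_1^n)_\nu\leqslant n\bar g-nf(\freq{x_1^n})\leqslant n\bar g-nh$ on $\Omega$ (Claim~\ref{claim:proof-of-main-thm}). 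This construction is exactly what realises your intended ``correction term'' and is also where $\|\nabla f\|_\infty$ enters the constant $c$ (via $\log(1+2d_Ad_D)$); without it, or an equivalent device, the argument does not close.
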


Before proceeding to the proof, some remarks are in order. The first is that the Markov chain assumption on the state is important as argued in Appendix~\ref{app_Markov}. Secondly, the system $E$ could have been included in $B_1$, but for the applications we consider, it is clearer to keep a separate system $E$ that is not affected by the processes $\cM_1, \dots, \cM_n$. Thirdly, concerning the second order term, it is possible to replace $d_A$ with appropriate entropic quantities, as in the Quantum Asymptotic Equipartition Property~\cite{TCR09}, which could be useful when the systems $A_i$ are infinite-dimensional. The dependence of the second order term in the state and in the tradeoff function $f$ is studied in more detail in the subsequent work~\cite{DF18}. Finally, we note that the constraint that the tradeoff function be affine is not a severe restriction: given a convex min-tradeoff function, one can always choose a tangent hyperplane at a point of interest as an affine lower bound. This is illustrated in Corollary~\ref{cor:EATconvex}.

To prove the theorem, we will first show the following proposition, which is essentially a R\'enyi version of entropy accumulation. We then show how Theorem \ref{thm:entropyaccumulationext} follows from this proposition.

\begin{proposition}\label{prop:accumulation-alpha}
Let $\cM_1, \ldots, \cM_n$ and $\rho_{A_1^n B_1^n X_1^n E}$ be such that~\eqref{eq_rhomap} and the Markov conditions~\eqref{eq_Markovgen} hold, let  $h \in \mathbb{R}$, and let $f$ be an affine min-tradeoff function $f$ for $\cM_1,\dots,\cM_n$. Then, for any event $\Omega$ which implies  $f(\freq{X_1^n}) \geqslant h$,
      \begin{align}
          \label{eqn:eat-min-alpha} 
                  H^{\uparrow}_{\alpha}(A_1^n | B_1^n E)_{\rho_{|\Omega}} 
        & > n h - n \left( \frac{\alpha-1}{4} \right) V^2 - \frac{\alpha}{\alpha - 1} \log \frac{1}{\rho[\Omega]}
        \end{align}
      holds for $\alpha$ satisfying $1 < \alpha < 1 + \frac{2}{V}$, and $V = 2 \left\lceil \| \nabla f \|_\infty \right\rceil + 2  \log (1+2 d_A)$, where $d_A$ is the maximum dimension of the systems~$A_i$.  Similarly,
  \begin{align} 
      \label{eqn:eat-max-alpha}H_{\frac{1}{\alpha}}(A_1^n | B_1^n E)_{\rho_{|\Omega}} & < n h + n \left( \frac{\alpha-1}{4} \right)V^2 + \frac{\alpha}{\alpha - 1} \log \frac{1}{\rho[\Omega]}
    \end{align}
    holds if $f$ is replaced by an affine max-tradeoff function and if $\Omega$ implies $f(\freq{X_1^n}) \leqslant h$.
\end{proposition}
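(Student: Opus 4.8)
The plan is to establish the lower bound~\eqref{eqn:eat-min-alpha}; the bound~\eqref{eqn:eat-max-alpha} then follows from the symmetric argument, with all inequalities reversed, infima and suprema exchanged, and $H_{1/\alpha}$ (which for $1/\alpha<1$ is an \emph{upper} bound on the von Neumann entropy) playing the role of $H^{\uparrow}_{\alpha}$. Three ingredients go into it. The first is the exact chain rule of Corollary~\ref{cor_conditionalmap}, applied $n$ times, which breaks the R\'enyi entropy of the whole string into $n$ single-round contributions, each a worst-case single-round R\'enyi entropy over input states. The second is a single-round comparison: for $\alpha$ just above $1$, the sandwiched $\alpha$-R\'enyi entropy of a system of dimension $d$ is at least its von Neumann entropy minus a term of order $(\alpha-1)\bigl(\log(1+2d)\bigr)^{2}$, valid in a range $1<\alpha<1+O(1/\log d)$; this is the source of the $n(\alpha-1)V^{2}/4$ correction and of the restriction $1<\alpha<1+2/V$. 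The third is the affine min-tradeoff property of $f$ together with the identity $n\,f(\freq{X_1^n})=\sum_{i=1}^{n}f(\delta_{X_i})$, which holds precisely because $f$ is affine and is what lets the single-round von Neumann entropies be recombined into the global constraint $f(\freq{X_1^n})\ge h$ that holds on $\Omega$.

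The first move is to fold the tradeoff function into the maps, reducing to a \emph{constant} tradeoff function. For each $i$ I replace $\cM_i$ by a map $\bar\cM_i$ that, after producing $X_i$, additionally prepares a register $\hat A_i$ of bounded dimension (of order $2^{\lceil\|\nabla f\|_\infty\rceil}$) whose state, conditioned on $X_i=x$, is a fixed state of von Neumann entropy $h_{\max}-f(\delta_x)$ (up to rounding), where $h_{\max}=\max_{x}f(\delta_x)$. Since $\hat A_i$ is prepared with fresh randomness from the classical value $X_i$, which is itself a function of $A_iB_i$, the identity $H(A_i\hat A_i\mid B_iR)_{\bar\cM_i(\omega)}=H(A_i\mid B_iR)_{\cM_i(\omega)}+h_{\max}-f(q)$ holds whenever $q$ is the $X_i$-distribution of $\bar\cM_i(\omega)$; as $\cM_i(\omega)$ then belongs to $\Sigma_i(q)$, the min-tradeoff property of $f$ gives $H(A_i\hat A_i\mid B_iR)_{\bar\cM_i(\omega)}\ge h_{\max}$ for every input $\omega$ (Remarks~\ref{rem_Rdimension}--\ref{rem_Rdimensiona} ensure this is independent of the size of $R$). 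Thus the constant $h_{\max}$ is a valid min-tradeoff function for every $\bar\cM_i$, and the $A$-systems of $\bar\cM_i$ have dimension at most $d_A\cdot 2^{\lceil\|\nabla f\|_\infty\rceil}$, for which $\log(1+2 d_A 2^{\lceil\|\nabla f\|_\infty\rceil})\le V/2$. One also checks that the modified state $\bar\rho$ still satisfies the Markov conditions, because each $\hat A_j$ is a function of $A_jB_j$.

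Now I run the chain rule on $\bar\rho$. Applying Corollary~\ref{cor_conditionalmap} $n$ times, peeling off $A_n\hat A_n,\dots,A_1\hat A_1$ in turn with the accumulated past $A_1^{i-1}\hat A_1^{i-1}B_1^{i-1}E$ in the role of ``$A_1B_1$'', $R_{i-1}$ as the input register, and $\bar\cM_i$ as the map --- the Markov conditions~\eqref{eq_Markovgen} validating each step --- yields $H_{\alpha}(A_1^n\hat A_1^n\mid B_1^nE)_{\bar\rho}\ge\sum_{i=1}^{n}\inf_{\omega}H_{\alpha}(A_i\hat A_i\mid B_iR)_{\bar\cM_i(\omega)}$. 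Bounding each single-round $H_{\alpha}$ below by the von Neumann entropy at a cost of $(\alpha-1)(V/2)^{2}$ and using the constant min-tradeoff bound, the right-hand side is at least $n h_{\max}-n(\alpha-1)V^{2}/4$. Since $H^{\uparrow}_{\alpha}\ge H_{\alpha}$ for $\alpha>1$, and since replacing $\bar\rho$ by its sub-normalised restriction to $\Omega$ can only increase $H^{\uparrow}_{\alpha}$ while the subsequent renormalisation costs exactly $\tfrac{\alpha}{\alpha-1}\log\tfrac{1}{\rho[\Omega]}$, the same bound holds for $\bar\rho_{|\Omega}$ up to that additive term. Finally I strip off the auxiliary registers: conditioned on the classical string $X_1^n$, the register $\hat A_1^n$ is a fixed state whose entropy is $\sum_i(h_{\max}-f(\delta_{X_i}))=n h_{\max}-n f(\freq{X_1^n})$, which is at most $n h_{\max}-nh$ on $\Omega$, so a chain rule removing $\hat A_1^n$ costs at most $n h_{\max}-nh$ (plus lower-order R\'enyi corrections absorbed into the stated constant). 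Combining the last two estimates gives $H^{\uparrow}_{\alpha}(A_1^n\mid B_1^nE)_{\rho_{|\Omega}}\ge n h_{\max}-n(\alpha-1)V^{2}/4-\tfrac{\alpha}{\alpha-1}\log\tfrac{1}{\rho[\Omega]}-(n h_{\max}-nh)$, which is~\eqref{eqn:eat-min-alpha}.

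The main obstacle is the tension, already visible in the chain-rule step, between the worst-case input state that Corollary~\ref{cor_conditionalmap} forces at each round and the global statistical information carried by $\Omega$: a priori, the round-$i$ state achieving the single-round infimum could exhibit the worst possible $X_i$-statistics, so that summing the raw single-round bounds would only reproduce $n\min_x f(\delta_x)$, with no dependence on $\Omega$ at all. The folding construction is designed precisely to resolve this --- it replaces ``the worst round-$i$ entropy \emph{given the observed statistics} is at least $f(q)$'' by ``the worst round-$i$ entropy is at least $h_{\max}$, a constant'', while transferring the $f$-dependence onto the classical registers $\hat A_i$, which are then controlled only \emph{after} conditioning on $\Omega$. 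The remaining work --- the precise single-round $H_{\alpha}$-versus-$H$ estimate and its range of validity, the behaviour of $H^{\uparrow}_{\alpha}$ under conditioning on a classical event, and the careful bookkeeping of ceilings so that the constant comes out to exactly $V$ --- is routine in the sense that it uses only facts of the type collected in Appendices~\ref{app_alpha} and~\ref{app_sandwich}.
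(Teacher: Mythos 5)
Your overall strategy is the same as the paper's: attach to each round an auxiliary ``entropy price'' register whose entropy encodes $-f(\delta_x)$ plus a constant, so that a constant becomes a valid min-tradeoff function for the augmented maps; run the round-by-round chain rule (Corollary~\ref{cor_conditionalmap}) on the augmented state; compare each single-round R\'enyi term to a von Neumann entropy at a cost of order $(\alpha-1)\log^2(1+2d)$; and finally strip the auxiliary registers on the conditioned state, where their total entropy is controlled by $f(\freq{X_1^n})\geqslant h$ on $\Omega$. The reordering (chain rule on the unconditioned state first, conditioning and stripping afterwards, versus the paper's stripping-first order) is immaterial, and your stripping step is, modulo the unstated tilted-state bookkeeping coming from Lemma~\ref{lem_chainprep}, exactly the content of Claim~\ref{claim:proof-of-main-thm}.

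The gap is in the construction of the auxiliary register, and it is quantitative but real: a single register $\hat A_i$ prepared in a normalised state has non-negative von Neumann entropy, so to realise the entropy $h_{\max}-f(\delta_x)$ for every $x$ you need $\dim \hat A_i\geqslant 2^{\,g_{\max}-g_{\min}}$, and for an affine $f$ the spread $g_{\max}-g_{\min}=\max_x f(\delta_x)-\min_x f(\delta_x)$ can be as large as $2\|\nabla f\|_\infty$, not $\lceil\|\nabla f\|_\infty\rceil$ as you assume. With the dimension corrected, the single-round comparison of Lemma~\ref{lem_HalphaH} incurs a penalty governed by $\log(1+2 d_A 2^{2\|\nabla f\|_\infty})$, which exceeds $V/2$; you then prove the proposition only with $V$ replaced by roughly $4\lceil\|\nabla f\|_\infty\rceil+2\log(1+2d_A)$, and with a correspondingly smaller admissible range of $\alpha$. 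The paper's construction is designed precisely to avoid this factor of two: it uses a \emph{pair} $D_i\bar D_i$, with $\bar D_i$ placed on the conditioning side, in a state $\tau(x)$ interpolating between maximally entangled and maximally mixed, so that the \emph{conditional} entropy $H_\alpha(D_i|\bar D_i)_{\tau(x)}$ can take any value in $[-\log d_D,\log d_D]$. Centering the offsets at the midpoint $\bar g=\tfrac12(g_{\min}+g_{\max})$ then requires only $|\bar g-f(\delta_x)|\leqslant\|\nabla f\|_\infty$ worth of conditional entropy in either sign, so $d_D=\lceil 2^{\|\nabla f\|_\infty}\rceil$ suffices, and only $D_i$ (not $\bar D_i$) enters the dimension in Lemma~\ref{lem_HalphaH}, which yields exactly the stated $V$. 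If you adopt that modification, the rest of your argument goes through.
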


\begin{proof}
   We focus on proving the first inequality~\eqref{eqn:eat-min-alpha}. The proof of the second inequality~\eqref{eqn:eat-max-alpha} is similar, we only point out the main differences in the course of the proof. 
    
    The first step of the proof is to construct a state that will allow us to lower-bound $H^\uparrow_{\alpha}(A_1^n | B_1^n E)_{\rho_{|\Omega}}$ using the chain rule of Theorem~\ref{thm_chainq}, while ensuring that the tradeoff function is taken into account.  Let $[g_{\min}, g_{\max}]$ be the smallest real interval that contains the range $f(\mbP)$ of $f$, and set $\bar{g} = \frac{1}{2} (g_{\min} + g_{\max})$. Furthermore, for every $i$, let $\mathcal{D}_i : X_i \rightarrow X_i D_i \bar{D}_i$, with $\dim D_i = \dim \bar{D}_i$, be a TPCP map defined as
\begin{align*}
    \mathcal{D}_i(W_{X_i}) = \sum_{x \in \mathcal{X}} \bra{x}W_{X_i} \ket{x} \cdot \proj{x}_{X_i} \otimes \tau(x)_{D_i \bar{D}_i} \ ,
\end{align*}
    where $\tau(x)$ is a mixture between a maximally entangled state on $D_i \otimes \bar{D}_i$ and a fully mixed state such that the marginal on $\bar{D}_i$ is uniform and such that $H_{\alpha}(D_i|\bar{D}_i)_{\tau(x)} = \bar{g} - f(\delta_x)$ (here $\delta_x$ stands for the distribution with all the weight on element $x$). To ensure that this is possible, we need to choose $\dim D_i$ large enough, so we need to bound how large $\bar{g} - f(\delta_x)$ can be, positive or negative.  By the definition of $\bar{g}$,  $|\bar{g} - f(\delta_x)|$ cannot be larger than $\frac{1}{2} |g_{\max} - g_{\min}|  \leqslant  \| \nabla f \|_\infty$.  We therefore take the dimension of the spaces $D_i$ to be equal to  
    \begin{align*}
     d_D := \left\lceil  2^{\| \nabla f \|_\infty} \right\rceil  \leqslant  2^{\left\lceil \| \nabla f \|_\infty \right\rceil } \ .
     \end{align*}
    For later use, we note that we have
    \begin{align} \label{eq_dADboundext}
     \log(1 + 2 d_{A} d_D) \leqslant \left\lceil \| \nabla f \|_\infty  \right\rceil + \log (1+2 d_A ) = \frac{V}{2}.
    \end{align}
    Now, let
    \begin{align}
    \label{eq_def_rho_bar}
      \bar{\rho} := (\mathcal{D}_n \circ \dots \circ \mathcal{D}_1)(\rho) \ . 
    \end{align}
  Note that $\bar{\rho}_{X_1^n A_1^n B_1^n E} = \rho_{X_1^n A_1^n B_1^n E}$.
    
  One can think of the $D$ systems as an ``entropy price'' that encodes the tradeoff function. With these systems in place, the output entropy includes an extra term that allows the tradeoff function to be taken into account in the optimisation arising in Theorem~\ref{thm_chainq}. This is formalised by the following facts, which are proven in Claim \ref{claim:proof-of-main-thm}:
   \begin{align}
    H^{\uparrow}_{\alpha}(A_1^n | B_1^n E)_{\rho_{|\Omega}}
        &\geqslant   H^{\uparrow}_{\alpha}(A_1^n D_1^n | B_1^n E \bar{D}_1^n)_{\bar{\rho}_{|\Omega}} - n \bar{g} + n h \ , \\ 
    H_{\frac{1}{\alpha}}(A_1^n | B_1^n E)_{\rho_{|\Omega}}
        &\leqslant   H_{\frac{1}{\alpha}}(A_1^n D_1^n | B_1^n E \bar{D}_1^n)_{\bar{\rho}_{|\Omega}} - n \bar{g} + n h \ . 
    \end{align}
  The next step is to relate the entropies on the conditional state $\rho_{|\Omega}$ to those on the unconditional state. To do this, we use Lemmas~\ref{lem:abx-chain-rule-opt} and~\ref{lem:abx-chain-rule} applied to $\bar{\rho} = \rho[\Omega] \bar{\rho}_{|\Omega} + (\bar{\rho} - \rho[\Omega] \bar{\rho}_{|\Omega})$, together with the fact that $H_{\alpha}^{\uparrow} \geqslant H_{\alpha}$, and obtain\footnote{Note that the reason why we use $H_{\alpha}^{\uparrow}$ in one case but not the other is due to the difference between Lemmas~\ref{lem:abx-chain-rule-opt} and~\ref{lem:abx-chain-rule}.}
   \begin{align}
   \label{eq_lb_using_d}
    H^{\uparrow}_{\alpha}(A_1^n | B_1^n E)_{\rho_{|\Omega}}
          &\geqslant   H_{\alpha}(A_1^n D_1^n | B_1^n E \bar{D}_1^n)_{\bar{\rho}} - \frac{\alpha}{\alpha - 1} \log\frac{1}{\rho[\Omega]}  - n \bar{g} + n h \ , \\           \label{eq_ub_using_d}
    H_{\frac{1}{\alpha}}(A_1^n | B_1^n E)_{\rho_{|\Omega}}
          &\leqslant   H_{\frac{1}{\alpha}}(A_1^n D_1^n | B_1^n E \bar{D}_1^n)_{\bar{\rho}} + \frac{\alpha}{\alpha - 1} \log\frac{1}{\rho[\Omega]}  - n \bar{g} + n h \ . 
  \end{align}
   
To show the desired inequality~\eqref{eqn:eat-min-alpha}, it now suffices to prove that $H_{\alpha}(A_1^n D_1^n | B_1^n E \bar{D}_1^n)_{\bar{\rho}}$ is lower bounded by (roughly) $n \bar{g}$. To do that, we are now going to use the chain rule for R\'enyi entropies in the form of Corollary~\ref{cor_conditionalmap} $n$ times on the state $\bar{\rho}$, with the following substitutions at step $i$:
    \begin{itemize}
        \item $A_1 \rightarrow A_1^{i-1} D_1^{i-1}$
        \item $B_1 \rightarrow B_1^{i-1} E \bar{D}_1^{i-1} $
        \item $A_2 \rightarrow A_i D_i$
        \item $B_2 \rightarrow B_i \bar{D}_i$
        \item $R \rightarrow R_{i-1}$
        \item $\cM \rightarrow \tr_{X_i} \circ \cD_i \circ \cM_i$.
    \end{itemize}
    To establish the Markov chain condition, we compute the conditional mutual information. Using the chain rule, we obtain
    \begin{align}
    \notag
    &I(A_1^{i-1} D_1^{i-1} : B_i \bar{D}_i | B_1^{i-1} E \bar{D}_1^{i-1}) \\
    &= I(A_1^{i-1}  : B_i \bar{D}_i | B_1^{i-1} E \bar{D}_1^{i-1}) + I(D_1^{i-1}  : B_i \bar{D}_i | A_1^{i-1} B_1^{i-1} E \bar{D}_1^{i-1}) .
    \label{eq:cmi_check_markov_cond}
    \end{align}
    We first show that the second term is zero. By construction, $D_1^{i-1} \bar{D}_1^{i-1}$ conditioned on $X_1^{i-1}$ is independent of all the other systems.  This implies that 
    $I(D_1^{i-1} \bar{D}_1^{i-1}  : B_i \bar{D}_i | X_1^{i-1} A_1^{i-1} B_1^{i-1} E ) = 0$. In addition, using the fact that $X_1^{i-1}$ is determined by $A_1^{i-1} B_1^{i-1}$, the systems $X_1^{i-1}$ can be removed from the conditioned without changing the value. Then, using the chain rule and together with the non-negativity of the conditional mutual information, this shows that $I(D_1^{i-1}  : B_i \bar{D}_i | A_1^{i-1} B_1^{i-1} E \bar{D}_1^{i-1}) = 0$. To compute the first term in~\eqref{eq:cmi_check_markov_cond}, we use the fact that $\bar{D}_1^n$ is uniform independently of $A_1^n B_1^n E$ so that $I(A_1^{i-1}  : B_i \bar{D}_i | B_1^{i-1} E \bar{D}_1^{i-1}) = I(A_1^{i-1}  : B_i | B_1^{i-1} E )$. But then the assumed Markov condition on $\rho_{A_1^n B_1^n E}$ implies that this quantity is zero and establishes the required condition to apply Corollary~\ref{cor_conditionalmap}.
 We thus obtain
    \begin{align}
    \notag
    H_{\alpha}(A_1^n D_1^n | B_1^n E \bar{D}_1^n)_{\bar{\rho}}
         &\geqslant \sum_i \inf_{\omega_{R_{i-1} R}} H_{\alpha}(A_i D_i|B_i \bar{D}_i R)_{(\mathcal{D}_i \circ \cM_i)(\omega)}\\
         \notag
         & > \sum_i \inf_{\omega_{R_{i-1} R}} H(A_i D_i|B_i \bar{D}_i R)_{(\mathcal{D}_i \circ \cM_i)(\omega)} - n(\alpha-1) \log^2(1 + 2 d_{A} d_D)  \\
         & \geqslant \sum_i \inf_{\omega_{R_{i-1} R}} H(A_i D_i|B_i \bar{D}_i R)_{(\mathcal{D}_i \circ \cM_i)(\omega)} - n \frac{(\alpha-1)}{4} V^2 \ ,
         \label{eq_bound_sum_inf_h}
    \end{align}
    where we have invoked Lemma \ref{lem_HalphaH} in the second inequality and~\eqref{eq_dADboundext} in the last. Note that the restriction of this lemma  that $\alpha$ satisfy $1 < \alpha <  1+ 1/ \log(1 + 2 d_A d_D)$  is implied by our assumption that $\alpha < 1 + 2/V$.  The infimum is taken over all states $\omega_{R_{i-1} R}$, where the system $R$ is isomorphic to $A_1^{i-1} D_1^{i-1} B_1^{i-1} \bar{D}_1^{i-1} E$. This condition can be further strengthened by redoing the above argument with Theorem~\ref{thm_chainq_exact} instead of Corollary~\ref{cor_conditionalmap}. It turns out that the system $R$ can be taken to be isomorphic to $A_1^{i-1} B_1^{i-1} E$, as noted in Remark~\ref{rem_Rdimensiona}.\footnote{The full proof of this fact is available in the source code of this file on the arXiv. To access it, follow the instructions on the line labelled ``EXTRA'' in the preamble.}
 
 \extra{
 To prove that we can restrict in our optimisation the system $R$ to be isomorphic to $A_1^{i-1}B_1^{i-1}E$ and drop the systems $D_1^{i-1} \bar{D}_1^{i-1}$, we use Theorem~\ref{thm_chainq_exact} directly instead of  Corollary~\ref{cor_conditionalmap}.
 In particular, using Lemma~\ref{lem_det_func} as for~\eqref{eq_Xdrop}, we can write
 \begin{align*}
 &H_{\alpha}(A_1^n D_1^n | B_1^n E \bar{D}_1^n)_{\bar{\rho}} \\
 &= H_{\alpha}(A_1^n X_1^n D_1^n | B_1^n E \bar{D}_1^n)_{\bar{\rho}} \\
 &= H_{\alpha}(A_1^{n-1} X_1^{n-1} D_1^{n-1} | B_1^n E \bar{D}_1^n)_{\bar{\rho}} + H_{\alpha}(A_{n} X_n D_{n} | B_1^n E \bar{D}_1^n A_1^{n-1} X_1^{n-1} D_1^{n-1})_{\nu^n} \\
 &= H_{\alpha}(A_1^{n-1} X_1^{n-1} D_1^{n-1} | B_1^{n-1} E \bar{D}_1^{n-1})_{\bar{\rho}} + H_{\alpha}(A_{n} X_n D_{n} | B_1^n E \bar{D}_1^n A_1^{n-1} X_1^{n-1} D_1^{n-1})_{\nu^n} \ ,
 \end{align*}
 where we used the Markov chain condition $A_1^{n-1} X_1^{n-1} D_1^{n-1} \leftrightarrow B_1^{n-1} E \bar{D}_1^{n-1} \leftrightarrow B_n \bar{D}_n$ and we defined for all $i \in \{1, \dots, n\}$
 \begin{align*}
 \nu^i_{A_1^{i-1} X_1^{i-1} D_1^{i-1} B_1^{i} E \bar{D}_1^{i}} &= \frac{\left(\bar{\rho}_{A_1^{i-1} X_1^{i-1} D_1^{i-1} B_1^{i} E \bar{D}_1^{i}}^{\frac12} \bar{\rho}^{\frac{1-\alpha}{\alpha}}_{B_1^{i} E \bar{D}_1^{i}} \bar{\rho}_{A_1^{i-1} X_1^{i-1} D_1^{i-1} B_1^{i} E \bar{D}_1^{i}}^{\frac12}\right)^{\alpha}}{Z_i} \\
 \nu^i_{A_1^i X_1^i D_1^i B_1^{i} E \bar{D}_1^{i}} &=  (\nu^i_{A_1^{i-1} X_1^{i-1} D_1^{i-1} B_1^{i} E \bar{D}_1^{i}})^{\frac{1}{2}}  \bar{\rho}_{A_i X_i D_i | A_1^{i-1} X_1^{i-1} D_1^{i-1} B_1^{i} E \bar{D}_1^{i}}(\nu^i_{A_1^{i-1} X_1^{i-1} D_1^{i-1} B_1^{i} E \bar{D}_1^{i}})^{\frac{1}{2}} \ ,
 \end{align*}
 with $Z_i = \tr\left(\bar{\rho}_{A_1^{i-1} X_1^{i-1} D_1^{i-1} B_1^{i} E \bar{D}_1^{i}}^{\frac12} \bar{\rho}^{\frac{1-\alpha}{\alpha}}_{B_1^{i} E \bar{D}_1^{i}} \bar{\rho}_{A_1^{i-1} X_1^{i-1} D_1^{i-1} B_1^{i} E \bar{D}_1^{i}}^{\frac12}\right)^{\alpha}$.
 We then use the chain rule $n-2$ more times to get
 \begin{align*}
 H_{\alpha}(A_1^n D_1^n | B_1^n E \bar{D}_1^n)_{\bar{\rho}} 
  &= \sum_{i} H_{\alpha}(A_i X_i D_i | B_1^{i} E \bar{D}_1^{i} A_1^{i-1} X_1^{i-1} D_1^{i-1})_{\nu^i} \ .
 \end{align*}
We now use the properties of $\bar{\rho}$ to simplify the entropies in the right hand side.
  \begin{align*}
 \nu^i_{ A_1^{i-1} X_1^{i-1} D_1^{i-1} B_1^{i} E \bar{D}_1^{i}} 
 &= \frac{1}{Z_i} \sum_{x \in \cX^{i-1}} \proj{x}_{X_1^{i-1}} \otimes \left(\bar{\rho}_{A_1^{i-1} D_1^{i-1} B_1^{i-1} E \bar{D}_1^{i},x}^{\frac12} \bar{\rho}^{\frac{1-\alpha}{\alpha}}_{B_1^{i} E \bar{D}_1^{i}} \bar{\rho}_{A_1^{i-1} D_1^{i-1} B_1^{i} E \bar{D}_1^{i},x}^{\frac12}\right)^{\alpha} \ .
 \end{align*}
 Using the properties of the systems $D_1^i \bar{D}_1^i$, we get for any $x \in \cX^{i-1}$, 
 \begin{align*}
 &\bar{\rho}_{A_1^{i-1} D_1^{i-1} B_1^{i} E \bar{D}_1^{i},x}^{\frac12} \bar{\rho}^{\frac{1-\alpha}{\alpha}}_{B_1^{i} E \bar{D}_1^{i}} \bar{\rho}_{A_1^{i-1} D_1^{i-1} B_1^{i} E \bar{D}_1^{i},x}^{\frac12} \\
 &= \left(\rho_{A_1^{i-1}B_1^{i} E,x} \otimes \tau(x)_{D_1^{i-1} \bar{D}_1^{i-1}} \otimes \bar{\rho}_{\bar{D}_i} \right)^{\frac12} \left(\rho_{B_1^{i} E} \otimes \bar{\rho}_{\bar{D}_1^{i}}\right)^{\frac{1-\alpha}{\alpha}} \left(\rho_{A_1^{i-1}B_1^{i} E,x} \otimes \tau(x)_{D_1^{i-1} \bar{D}_1^{i-1}} \otimes \bar{\rho}_{\bar{D}_i} \right)^{\frac{1}{2}} \\
 &= \left(\rho_{A_1^{i-1}B_1^{i} E,x} \right)^{\frac12} \left(\rho_{B_1^{i} E}\right)^{\frac{1-\alpha}{\alpha}} \left(\rho_{A_1^{i-1}B_1^{i} E,x} \right)^{\frac{1}{2}} \otimes \tau(x)_{D_1^{i-1} \bar{D}_1^{i-1}}^{\frac{1}{2}} \bar{\rho}_{D_1^{i-1}}^{\frac{1-\alpha}{\alpha}} \tau(x)_{D_1^{i-1} \bar{D}_1^{i-1}}^{\frac{1}{2}} \otimes \bar{\rho}^{\frac{1}{\alpha}}_{\bar{D}_i} \ ,
 \end{align*}
 where we used the fact that $\bar{\rho}_{D_1^{i}} = \otimes_{j=1}^i \bar{\rho}_{D_j}$. Letting
 \begin{align*}
 \tau'(x)_{D_1^{i-1} \bar{D}_1^{i-1}} &= \frac{\left(\tau(x)_{D_1^{i-1} \bar{D}_1^{i-1}}^{\frac{1}{2}} \bar{\rho}_{D_1^{i-1}}^{\frac{1-\alpha}{\alpha}} \tau(x)_{D_1^{i-1} \bar{D}_1^{i-1}}^{\frac{1}{2}}\right)^{\alpha}}{\tr\left(\tau(x)_{D_1^{i-1} \bar{D}_1^{i-1}}^{\frac{1}{2}} \bar{\rho}_{D_1^{i-1}}^{\frac{1-\alpha}{\alpha}} \tau(x)_{D_1^{i-1} \bar{D}_1^{i-1}}^{\frac{1}{2}}\right)^{\alpha}} \\
 \nu^i_{A_1^{i-1} B_1^{i} E, x} &= \frac{\tr\left(\tau(x)_{D_1^{i-1} \bar{D}_1^{i-1}}^{\frac{1}{2}} \bar{\rho}_{D_1^{i-1}}^{\frac{1-\alpha}{\alpha}} \tau(x)_{D_1^{i-1} \bar{D}_1^{i-1}}^{\frac{1}{2}}\right)^{\alpha}}{Z_i} \left(\left(\rho_{A_1^{i-1}B_1^{i} E,x} \right)^{\frac12} \left(\rho_{B_1^{i} E}\right)^{\frac{1-\alpha}{\alpha}} \left(\rho_{A_1^{i-1}B_1^{i} E,x} \right)^{\frac{1}{2}} \right)^{\alpha} \ ,
 \end{align*} 
 we can write 
 \begin{align*}
  \nu^i_{A_1^{i-1} X_1^{i-1} D_1^{i-1} B_1^{i} E \bar{D}_1^{i}} 
  &= \sum_{x \in \cX^{i-1}} \proj{x}_{X_1^{i-1}} \otimes \nu^i_{A_1^{i-1} B_1^{i} E, x} \otimes \tau'(x)_{D_1^{i-1} \bar{D}_1^{i-1}} \otimes \bar{\rho}_{\bar{D}_i} \ .
 \end{align*}
 In addition
 \begin{align*}
  \bar{\rho}_{A_1^{i-1} X_1^{i-1} D_1^{i-1} B_1^{i} E \bar{D}_1^{i}} 
  &= \sum_{x \in \cX^{i-1}} \proj{x}_{X_1^{i-1}} \otimes \rho_{A_1^{i-1} B_1^{i} E, x} \otimes \tau(x)_{D_1^{i-1} \bar{D}_1^{i-1}} \otimes \bar{\rho}_{\bar{D}_i} \ .
 \end{align*}
 As a result,
 \begin{align*}
 \nu^i_{A_1^i X_1^i D_1^i B_1^{i} E \bar{D}_1^{i}} = \sum_{x \in \cX^{i-1}} \proj{x}_{X_1^{i-1}} \otimes \nu^i_{A_1^i X_1^i D_1^i B_1^{i} E \bar{D}_1^{i},x}
 \end{align*}
 with
 \begin{align*}
 &\nu^i_{A_1^i X_1^i D_1^i B_1^{i} E \bar{D}_1^{i},x} \\
 &= \left((\nu^i_{A_1^{i-1} B_1^{i} E, x})^{\frac{1}{2}} \rho_{A_1^{i-1} B_1^{i} E, x}^{-\frac{1}{2}} \otimes \tau'(x)^{\frac{1}{2}} \tau(x)^{-\frac{1}{2}} \right) \bar{\rho}_{A_1^{i} X_i D_1^{i} B_1^{i} \bar{D}_1^{i} E, x}  \left(\rho_{A_1^{i-1} B_1^{i} E, x}^{-\frac{1}{2}} (\nu^i_{A_1^{i-1} B_1^{i} E, x})^{\frac{1}{2}} \otimes \tau(x)^{-\frac{1}{2}} \tau'(x)^{\frac{1}{2}} \right) \\
 &= \left((\nu^i_{A_1^{i-1} B_1^{i} E, x})^{\frac{1}{2}} \rho_{A_1^{i-1} B_1^{i} E, x}^{-\frac{1}{2}} \right) \bar{\rho}_{A_1^{i} X_i D_i \bar{D}_i B_1^{i} E, x}  \left(\rho_{A_1^{i-1} B_1^{i} E, x}^{-\frac{1}{2}} (\nu^i_{A_1^{i-1} B_1^{i} E, x})^{\frac{1}{2}} \right) \otimes \tau'(x)_{D_1^{i-1} \bar{D}_1^{i-1}}  \ .
 \end{align*}
 As the system $D_1^{i-1}\bar{D}_1^{i-1}$ can be generated by only acting on $X_1^{i-1}$, we have by data processing
 \begin{align*}
 H_{\alpha}(A_i X_i D_i | B_1^{i} E \bar{D}_1^{i} A_1^{i-1} X_1^{i-1} D_1^{i-1})_{\nu^i} &= 
 H_{\alpha}(A_i X_i D_i | B_1^{i} E \bar{D}_i A_1^{i-1} X_1^{i-1})_{\nu^i} \ .
 \end{align*}
 We can then write
 \begin{align*}
 \nu^i_{A_1^{i} X_1^{i}  B_1^{i} E D_i \bar{D}_i} 
 &= (\nu^i_{A_1^{i-1} X_1^{i-1} B_1^{i} E})^{\frac{1}{2}} \rho_{A_1^{i-1} X_{1}^{i-1} B_1^{i} E}^{-\frac{1}{2}} \bar{\rho}_{A_1^{i} X_1^{i} B_1^{i} E D_i \bar{D}_i} \rho_{A_1^{i-1} X_1^{i-1} B_1^{i} E}^{-\frac{1}{2}} (\nu^i_{A_1^{i-1} X_1^{i-1} B_1^{i} E})^{\frac{1}{2}}  \ .
 \end{align*}
 We now use Claim~\ref{claim_markov_cond} with the substitutions
 \begin{itemize}
 \item $A_1 \rightarrow X_1^{i-1} A_1^{i-1}$
 \item $A_2 \rightarrow X_i A_i D_i \bar{D}_i$
 \item $B_1 \rightarrow B_1^{i-1} E$
 \item $B_2 \rightarrow B_i$
 \end{itemize}
and using the Markov property $X_1^{i-1} A_1^{i-1} \leftrightarrow B_1^{i-1} E \leftrightarrow B_i$. Thus, we have
\begin{align*}
    \nu_{X_i A_i D_i \bar{D}_i B_i | X_1^{i-1} A_1^{i-1} B_1^{i-1} E} = \bar{\rho}_{X_i A_i D_i \bar{D}_i B_i | X_1^{i-1} A_1^{i-1} B_1^{i-1} E} \ .
\end{align*}
As a result, as in the proof of Corollary~\ref{cor_conditionalmap}, we then get  
\begin{align*}
 \nu^i_{A_1^{i} X_1^{i}  B_1^{i} E D_i \bar{D}_i} &=  (\cD_i \circ \cM_i)(\omega^i_{R_{i-1} A_1^{i-1} X_{1}^{i-1} B_1^{i-1} E}) \ ,
\end{align*}
where
\begin{align*}
\omega^i_{R_{i-1} A_1^{i-1} X_1^{i-1} B_1^{i-1} E} := T_{X_1^{i-1} A_1^{i-1} B_1^{i-1} E} \rho_{R_{i-1} A_1^{i-1} X_1^{i-1} B_1^{i-1} E} T^{\dagger}_{X_1^{i-1} A_1^{i-1} B_1^{i-1} E} \ ,
\end{align*}
with $T_{X_1^{i-1} A_1^{i-1} B_1^{i-1} E} = (\nu^i_{X_1^{i-1} A_1^{i-1} B_1^{i-1} E})^{\frac{1}{2}} (\rho_{X_1^{i-1} A_1^{i-1} B_1^{i-1} E})^{-\frac{1}{2}}$.
 Finally, we get
  \begin{align*}
 &H_{\alpha}(A_1^n D_1^n | B_1^n E \bar{D}_1^n)_{\bar{\rho}} \\
 &= \sum_{i} H_{\alpha}(A_i D_i | B_1^{i} E \bar{D}_{i} A_1^{i-1} X_1^{i-1} )_{(\cD_i \circ \cM_i)
(\omega^i)} \\
&\geqslant \sum_{i} \inf_{\omega_{R_{i-1} A_1^{i-1} B_1^{i-1} E}}H_{\alpha}(A_i D_i | B_1^{i} E \bar{D}_{i} A_1^{i-1}  )_{(\cD_i \circ \cM_i)(\omega^i)} \ ,
 \end{align*}
 where in the inequality we used the fact that $X_1^{i-1}$ is classical together with Lemma~\ref{lem_classicalsideinformation}. We point out that it is clear from this calculation that if part of the systems $A_1^{i-1} B_1^{i-1} E$ is classical in $\rho$, it remains classical in $\omega^i$. This proves the claims in Remark~\ref{rem_Rdimensiona}.
 }
 
    Considering the right hand side of expression~\eqref{eq_bound_sum_inf_h}, we get for any such state $\omega_{R_{i-1} R}$, 
      \begin{align*}
        H(A_i D_i | B_i \bar{D}_i R)_{(\mathcal{D}_i \circ \cM_i)(\omega)}
        & = H(A_i X_i D_i | B_i \bar{D}_i R)_{(\mathcal{D}_i \circ \cM_i)(\omega)} \\
        &= H(A_i X_i | B_i R)_{\cM_i(\omega)}  + H(D_i|\bar{D}_i X_i)_{(\mathcal{D}_i \circ \cM_i)(\omega)}\\
        &= H(A_i  | B_i R)_{\cM_i(\omega)}  + \sum_x q(x) H(D_i|\bar{D}_i)_{\tau(x)}\\
         & \geqslant H(A_i | B_i R)_{\cM_i(\omega)}  + \sum_x q(x) H_{\alpha}(D_i|\bar{D}_i )_{\tau(x)}\\
         & = H(A_i | B_i R)_{\cM_i(\omega)}  + \sum_x q(x) \bigl( \bar{g} - f(\delta_x) \bigr) \\
         &= H(A_i | B_i R)_{\cM_i(\omega)} + \bar{g} - f(q)\\
         &\geqslant \bar{g} \ 
        \end{align*}
where $q = \cM_i(\omega)_{X_i}$ denotes the distribution of $X_i$ on $\mathcal{X}$ obtained from the state $\cM_i(\omega)$. The third equality comes from the fact that $X_i$ is determined by $A_iB_i$. The first inequality follows from the monotonicity of the R\'enyi entropies in $\alpha$~\cite{Beigi,MDSFT13}. The last equality holds because $f$ is affine and the final inequality because $f$ is a min-tradeoff function. Putting everything together, Eq.~\eqref{eq_lb_using_d} becomes
   \begin{align*}
        H^{\uparrow}_{\alpha}(A_1^n | B_1^n E)_{\rho_{|\Omega}} > n h - n \frac{(\alpha-1)}{4} V^2 - \frac{\alpha-1}{\alpha} \log\frac{1}{\rho[\Omega]} \ .
   \end{align*}  
    This concludes the proof of the first inequality~\eqref{eqn:eat-min-alpha} of Proposition~\ref{prop:accumulation-alpha}. 
    
    In order to show the second inequality~\eqref{eqn:eat-max-alpha}, using the same argument as before, we obtain
      \begin{align*}
    H_{\frac{1}{\alpha}}(A_1^n D_1^n | B_1^n E \bar{D}_1^n)_{\bar{\rho}}
         & < \sum_i \sup_{\omega_{R_{i-1} R}} H(A_i D_i|B_i \bar{D}_i R)_{(\mathcal{D}_i \circ \cM_i)(\omega)} + n \frac{(\alpha-1)}{4} V^2 \ ,
    \end{align*}
where the supremum is over all states $\omega_{R_{i-1}R}$ with $R$ constrained as described by Remark~\ref{rem_Rdimensiona}. For any such state and a max-tradeoff function $f$, we have 
    \begin{align*}
        H(A_i D_i | B_i \bar{D}_i R)_{(\mathcal{D}_i \circ \cM_i)(\omega)}
         & \leqslant H(A_i | B_i R)_{\cM_i(\omega)}  + \sum_x q(x) H_{\frac{1}{\alpha}}(D_i|\bar{D}_i )_{\tau(x)}\\
         & = H(A_i | B_i R)_{\cM_i(\omega)}  + \sum_x q(x) \bigl( \bar{g} - f(\delta_x) \bigr) \\
         &= H(A_i | B_i R)_{\cM_i(\omega)} + \bar{g} - f(q)\\
         &\leqslant \bar{g} \ .
        \end{align*}
 It then suffices to combine these inequalities with inequality~\eqref{eq_ub_using_d}.
\end{proof}

We now prove the claim used in the preceding proof.

    \begin{claim}\label{claim:proof-of-main-thm}
   For $\alpha \in (1,2]$, $\rho$ and $\Omega$ as in the statement of Proposition~\ref{prop:accumulation-alpha} and $\bar{\rho}$ as defined in~\eqref{eq_def_rho_bar} (see also the preceding text for a definition of $\bar{g}$), we have
   \begin{align}
   \label{eq_lb_using_d-2}
    H^{\uparrow}_{\alpha}(A_1^n | B_1^n E)_{\rho_{|\Omega}}
    &\geqslant   H^{\uparrow}_{\alpha}(A_1^n D_1^n | B_1^n E \bar{D}_1^n)_{\bar{\rho}_{|\Omega}} - n \bar{g} + n h \ , \\ 
          \label{eq_ub_using_d-2}
    H_{\frac{1}{\alpha}}(A_1^n | B_1^n E)_{\rho_{|\Omega}}
        &\leqslant   H_{\frac{1}{\alpha}}(A_1^n D_1^n | B_1^n E \bar{D}_1^n)_{\bar{\rho}_{|\Omega}} - n \bar{g} + n h \ . 
          \end{align}
   \end{claim}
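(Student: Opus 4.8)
I read the right-hand sides of~\eqref{eq_lb_using_d-2} and~\eqref{eq_ub_using_d-2} as being evaluated on the conditioned state $\bar{\rho}_{|\Omega}$ (this is how the claim is used above, e.g.\ in deriving~\eqref{eq_lb_using_d}). The plan is a direct calculation with the sandwiched R\'enyi quantities, exploiting the product structure built into the maps $\mathcal{D}_i$. First, since each $X_i$ is determined by $A_iB_i$ and can be recomputed from them without disturbance (using the stated property of $\cT_i$), adjoining $X_1^n$ to the left-hand systems changes none of the entropies involved, so it suffices to prove
\begin{align*}
H^{\uparrow}_{\alpha}(A_1^n X_1^n D_1^n | B_1^n E \bar{D}_1^n)_{\bar{\rho}_{|\Omega}} &\leqslant H^{\uparrow}_{\alpha}(A_1^n X_1^n | B_1^n E)_{\rho_{|\Omega}} + n\bar{g} - nh \ , \\
H_{1/\alpha}(A_1^n X_1^n D_1^n | B_1^n E \bar{D}_1^n)_{\bar{\rho}_{|\Omega}} &\geqslant H_{1/\alpha}(A_1^n X_1^n | B_1^n E)_{\rho_{|\Omega}} + n\bar{g} - nh \ .
\end{align*}
Here $\bar{\rho}_{|\Omega}$ is, block-diagonal in $X_1^n$, a mixture over $x\in\Omega$ of $\rho_{A_1^n B_1^n E|x}\otimes\bigotimes_i\tau(x_i)_{D_i\bar{D}_i}$, and $\bar{\rho}_{\bar{D}_1^n|\Omega}$ is maximally mixed and uncorrelated with the rest.

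The crux of the calculation is that the operator $\sigma$ on $B_1^n E \bar{D}_1^n$ appearing in these entropies may be restricted to the product form $\sigma_{B_1^n E}\otimes\id_{\bar{D}_1^n}/d_{\bar{D}}$. For $H_{1/\alpha}$ this is automatic, since the canonical second argument is $\bar{\rho}_{B_1^n E\bar{D}_1^n|\Omega} = \rho_{B_1^n E|\Omega}\otimes\id/d_{\bar{D}}$. For $H^{\uparrow}_{\alpha}$, where an \emph{upper} bound requires controlling the infimum over \emph{all} $\sigma_{B_1^n E \bar{D}_1^n}$, I would apply the data-processing inequality for $D_\alpha$~\cite{MDSFT13,WWY13,Beigi,FrankLieb} to the channel that Werner-twirls each pair $(D_i,\bar{D}_i)$ by $U\otimes\bar{U}$: this leaves $\bar{\rho}_{|\Omega}$ invariant (each $\tau(x_i)$ is a Werner state) while, by Schur's lemma, forcing the $\bar{D}_1^n$-marginal of the twirled $\sigma$ to be maximally mixed; hence the infimum over $\sigma_{B_1^n E \bar{D}_1^n}$ equals the infimum of the product ansatz over $\sigma_{B_1^n E}$.

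With $\sigma$ of product form, the trace defining $D_\gamma(\bar{\rho}_{|\Omega}\,\|\,\id\otimes\sigma)$ is block-diagonal in $X_1^n$, and in the block $x$ it factorizes into a factor built only from $\rho_{A_1^n B_1^n E|x}$ and $\sigma_{B_1^n E}$ (carrying a weight $w_x^\gamma$ for the probability $w_x$ of $x$) times $d_{\bar{D}}^{\gamma-1}\prod_i\tr[\tau(x_i)^\gamma]$, for $\gamma\in\{\alpha,1/\alpha\}$, the $d_{\bar{D}}$-power coming from $\sigma^{-\gamma'/2}$ acting on the maximally mixed $\bar{D}_1^n$. Because $\tau(x_i)$ has maximally mixed $\bar{D}_i$-marginal, the defining relation $H_\gamma(D_i|\bar{D}_i)_{\tau(x_i)} = \bar{g}-f(\delta_{x_i})$ rewrites (via Definition~\ref{def_sandwichedentropy}) as $d_D^{\gamma-1}\tr[\tau(x_i)^\gamma] = 2^{(1-\gamma)(\bar{g}-f(\delta_{x_i}))}$; taking the product over $i$ (using $d_{\bar{D}}=d_D^n$) and that $f$ is affine, so $\sum_i f(\delta_{x_i}) = n\,f(\freq{x})$, the whole $\bar{D}$-factor in block $x$ becomes $2^{(1-\gamma)n(\bar{g}-f(\freq{x}))}$. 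Since $\Omega$ implies $f(\freq{x})\geqslant h$ in the min case and $f(\freq{x})\leqslant h$ in the max case, the sign of $1-\gamma$ lets one pull the $x$-independent factor $2^{(1-\gamma)n(\bar{g}-h)}$ out of (respectively into) the sum over $x$. Converting back by $-\tfrac1{\gamma'}\log(\cdot)$ and using $\gamma'\gamma=\gamma-1$, this factor contributes exactly $n(\bar{g}-h) = n\bar{g}-nh$ and the remainder is $H^{\uparrow}_{\alpha}(A_1^n X_1^n|B_1^n E)_{\rho_{|\Omega}}$ (respectively $H_{1/\alpha}(A_1^n X_1^n|B_1^n E)_{\rho_{|\Omega}}$), which gives the two displayed inequalities.

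I expect the main obstacle to be precisely the reduction to a product $\sigma$ in the $H^{\uparrow}_{\alpha}$ case --- it is what makes the per-block factorization available, and it is where the range of $\alpha$ enters (the data-processing and Schatten-norm properties of $D_\alpha$ and $D_{1/\alpha}$ used here require $\alpha\in(1,2]$). The rest --- adjoining $X_1^n$, the block factorization, and the affine-$f$ bookkeeping together with the $\Omega$-constraint --- is routine, but one must keep careful track of which side of the conditioning bar each of $X_1^n$, $D_1^n$, $\bar{D}_1^n$, $E$ sits on, and check that the constants assemble exactly into $n\bar{g}-nh$ with no residual term.
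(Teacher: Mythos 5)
Your proposal is correct and follows essentially the same route as the paper's proof: adjoining $X_1^n$ via the non-disturbance property of the maps $\cT_i$, the Haar twirl on the $D_i\bar{D}_i$ pairs to force the optimiser into the product form $\sigma_{B_1^n E}\otimes\id_{\bar{D}_1^n}/d_{\bar{D}}$, and the per-block accounting $n\bar{g}-nf(\freq{x_1^n})\leqslant n\bar{g}-nh$ using affineness of $f$ and the constraint imposed by $\Omega$ (your reading of the right-hand side as $\bar{\rho}_{|\Omega}$ is also the one the paper actually proves and uses). The only cosmetic difference is that you factorise the trace directly block-by-block, whereas the paper packages the same algebra as an application of Lemma~\ref{lem_chainprep} followed by Lemma~\ref{lem_classicalsideinformation} applied to the explicitly computed state $\nu$.
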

   \begin{proof}
   We focus on proving inequality \eqref{eq_lb_using_d-2}.
   The first step is to show that as $X_1^n$ is a deterministic function of $A_1^n B_1^n$, we have
       \begin{align}  \label{eq_Xdrop}
        H^{\uparrow}_{\alpha}(A_1^n D_1^n | B_1^n E \bar{D}_1^n)_{\bar{\rho}_{|\Omega}} &= H^{\uparrow}_{\alpha}(A_1^n X_1^n D_1^n | B_1^n E \bar{D}_1^n)_{\bar{\rho}_{|\Omega}} \ .
    \end{align}  
    In order to do that, observe that for any $x_1^n \in \cX^n$, we have
    \begin{align*}
    \bar{\rho}_{A_1^n B_1^n E D_1^n \bar{D}_1^n, x_1^n} = \rho_{A_1^n B_1^n E, x_1^n} \otimes \tau(x_1^n)_{D_1^n \bar{D}_1^n} \ ,
    \end{align*}
   where we introduced the notation $\tau(x_1^n)_{D_1^n \bar{D}_1^n} = \tau(x_1)_{D_1 \bar{D}_1} \otimes \cdots \otimes \tau(x_n)_{D_n \bar{D}_n}$.
    This implies that for any $x_1^n$, we have
         \begin{align*}
    \bar{\rho}_{X_1^n A_1^n B_1^n E D_1^n \bar{D}_1^n, x_1^n} = (\cT_n \circ \dots \circ \cT_1)(\bar{\rho}_{A_1^n B_1^n E D_1^n \bar{D}_1^n, x_1^n}) \ .
    \end{align*}
    By taking the sum over $x_1^n \in \Omega$ and then normalising by $\rho[\Omega]$, we get 
     \begin{align*}
    \bar{\rho}_{X_1^n A_1^n B_1^n E D_1^n \bar{D}_1^n | \Omega} = (\cT_n \circ \dots \circ \cT_1)(\bar{\rho}_{A_1^n B_1^n E D_1^n \bar{D}_1^n | \Omega}) \ .
    \end{align*}
Thus, we can apply Lemma~\ref{lem_det_func} and prove the equality~\eqref{eq_Xdrop}.

    Let now $\sigma_{B_1^n E \bar{D}_1^n}$ be a state such that
    \begin{align*}
       H^\uparrow_{\alpha}(A_1^n X_1^n D_1^n | B_1^n E \bar{D}_1^n)_{\bar{\rho}_{|\Omega}} 
     = - D_{\alpha}(\bar{\rho}_{A_1^n X_1^n D_1^n B_1^n E \bar{D}_1^n | \Omega} \| \id_{A_1^n X_1^n D_1^n} \otimes \sigma_{B_1^n E \bar{D}_1^n}) \ .
    \end{align*}
    Let furthermore $\cS = \cS_{D \bar{D}}$ be the TPCP map that applies a random (according to the Haar measure) unitary to $D$ and its conjugate to $\bar{D}$ (in such a way that the maximally entangled state on $D \bar{D}$ used to define $\tau(x)$ is preserved). It is then easy to see that the map $\cS^{\otimes n}$ applied to the $n$ pairs $D_i \bar{D}_i$ leaves $\bar{\rho}_{|\Omega}$ invariant. Hence, by the data processing inequality
    \begin{align*}
      H^\uparrow_{\alpha}(A_1^n X_1^n D_1^n | B_1^n E \bar{D}_1^n)_{\bar{\rho}_{|\Omega}} 
    & \leqslant - D_{\alpha}(\cS^{\otimes n}(\bar{\rho}_{A_1^n X_1^n D_1^n B_1^n E \bar{D}_1^n | \Omega}) \| \cS^{\otimes n}(\id_{A_1^n X_1^n D_1^n} \otimes \sigma_{B_1^n E \bar{D}_1^n})) \\
    & =  - D_{\alpha}(\bar{\rho}_{A_1^n X_1^n D_1^n B_1^n E \bar{D}_1^n | \Omega} \| \id_{A_1^n X_1^n D_1^n} \otimes \bar{\sigma}_{B_1^n E \bar{D}_1^n})  \ ,
    \end{align*}
    where $\bar{\sigma}_{B_1^n E \bar{D}_1^n} = \sigma_{B_1^n E} \otimes \bar{\rho}_{\bar{D}_1^n}$. Lemma~\ref{lem_chainprep} then implies that
    \begin{align}   \label{eq_HalphaADdec}  
      H^\uparrow_{\alpha}(A_1^n X_1^n D_1^n | B_1^n E \bar{D}_1^n)_{\bar{\rho}_{|\Omega}} 
     \leqslant H^\uparrow_{\alpha}(A_1^n X_1^n | B_1^n E \bar{D}_1^n)_{\bar{\rho}_{|\Omega}} + H_{\alpha}(D_1^n | A_1^n X_1^n B_1^n E \bar{D}_1^n)_{\nu} \ . 
    \end{align}
    where $\nu$ is a state defined by 
    \begin{align*}
     \nu_{A_1^n X_1^n B_1^n E \bar{D}_1^n} &= \frac{\left(\bar{\rho}_{A_1^n X_1^n B_1^n E \bar{D}_{1}^n|\Omega}^{\frac12} \bar{\sigma}^{-\alpha'}_{B_1^n E \bar{D}_{1}^n} \bar{\rho}_{A_1^n X_1^n B_1^n E \bar{D}_{1}^n|\Omega}^{\frac12}\right)^{\alpha}}{\tr\left(\bar{\rho}_{A_1^n X_1^n B_1^n E \bar{D}_{1}^n|\Omega}^{\frac12} \bar{\sigma}^{-\alpha'}_{B_1^n E \bar{D}_{1}^n} \bar{\rho}_{A_1^n X_1^n B_1^n E \bar{D}_{1}^n|\Omega}^{\frac12}\right)^{\alpha}}  \quad \text{and} \\
     \nu_{A_1^n X_1^n B_1^n E D_1^n \bar{D}_1^n} &= \nu_{A_1^n X_1^n B_1^n E \bar{D}_1^n}^{\frac12} \bar{\rho}_{D_1^n | A_1^n X_1^n B_1^n E \bar{D}_1^n | \Omega} \nu_{A_1^n X_1^n B_1^n E \bar{D}_1^n}^{\frac12} \ .
    \end{align*}
    We now use properties of $\rho_{|\Omega}$ and $\bar{\sigma}$ to simplify the expression of $\nu$. Observing that
    \begin{align}
    \label{eq:Dbar_product}
    \bar{\rho}_{X_1^n A_1^n B_1^n E \bar{D}_1^n | \Omega} =\frac{1}{\rho[\Omega]} \sum_{x_1^n \in \Omega} \proj{x_1^n} \otimes \rho_{A_1 B_1^n E, x_1^n} \otimes \bar{\rho}_{\bar{D}_1^n} \ ,
    \end{align}
     we can write
        \begin{align*}
     \nu_{A_1^n X_1^n B_1^n E \bar{D}_1^n} 
     & = \sum_{x_1^n \in \Omega}  \proj{x_1^n} \otimes \nu_{A_1^n B_1^n E, x_1^n} \otimes \bar{\rho}_{\bar{D}_1^n} \ , \\
     & \text{ with} \quad \nu_{A_1^nB_1^n E, x_1^n} = \frac{1}{\rho[\Omega]^{\alpha}}\frac{ \left(  \rho_{A_1^n B_1^n E, x_1^n}^{\frac12} \sigma^{\frac{1-\alpha}{\alpha}}_{B_1^n E} \rho_{A_1^n B_1^n E,x_1^n}^{\frac12}\right)^{\alpha}}{\tr\left(\bar{\rho}_{A_1^n X_1^n B_1^n E|\Omega}^{\frac12} \sigma^{\frac{1-\alpha}{\alpha}}_{B_1^n E} \bar{\rho}_{A_1^n X_1^n B_1^n E|\Omega}^{\frac12}\right)^{\alpha}} \ .
      \end{align*}     
      In addition, as $\bar{\rho}_{|\Omega}$ is of the form 
      \begin{align*}
         \bar{\rho}_{A_1^n X_1^n B_1^n E D_1^n \bar{D}_1^n|\Omega} = \frac{1}{\rho[\Omega]} \sum_{x_1^n \in \Omega} \proj{x_1^n}_{X_1^n} \otimes  \rho_{A_1^n B_1^n E, x_1^n} \otimes \tau(x_1^n)_{D_1^n\bar{D}_1^n} \ ,
     \end{align*}
 we have
      \begin{align*}
        \bar{\rho}_{D_1^n | A_1^n X_1^n B_1^n E \bar{D}_1^n | \Omega}
        = \sum_{x_1^n \in \Omega} \proj{x_1^n}_{X_1^n} \otimes  \rho_{A_1^n B_1^n E, x_1^n}^{0} \otimes \tau(x_1^n)_{D_1^n | \bar{D}_1^n} \ ,
      \end{align*}
      where $\rho_{A_1^n B_1^n E, x_1^n}^{0}$ is the projector onto the support of $\rho_{A_1^n B_1^n E, x_1^n}$. Hence,
      \begin{align}
      \label{eq_expr_nu}
       \nu_{A_1^n X_1^n B_1^n E D_1^n \bar{D}_1^n}  = \sum_{x_1^n \in \Omega} \proj{x_1^n} \otimes  \nu_{A_1^n B_1^n E, x_1^n} \otimes \tau(x_1^n)_{D_1^n \bar{D}_1^n} \ .
      \end{align}

Getting back to the inequality~\eqref{eq_HalphaADdec}, we have $H^\uparrow_{\alpha}(A_1^n X_1^n | B_1^n E \bar{D}_1^n)_{\bar{\rho}_{|\Omega}} = H^\uparrow_{\alpha}(A_1^n | B_1^n E )_{\bar{\rho}_{|\Omega}}$ using Eq.~\eqref{eq:Dbar_product} to drop $\bar{D}_1^n$ and Lemma~\ref{lem_det_func} to drop $X_1^n$. Moreover, using~\eqref{eq_expr_nu}, we have that $H_{\alpha}(D_1^n | A_1^n X_1^n B_1^n E \bar{D}_1^n)_{\nu} = H_{\alpha}(D_1^n | X_1^n \bar{D}_1^n)_{\nu}$.
      Finally, we get
      \begin{align}\label{eq_Hsecondredext}
        H^{\uparrow}_{\alpha}(A_1^n D_1^n | B_1^n E \bar{D}_1^n)_{\bar{\rho}_{|\Omega}}
        \leqslant H^{\uparrow}_{\alpha}(A_1^n | B_1^n E)_{\rho_{|\Omega}} + H_{\alpha}(D_1^n | \bar{D}_1^n X_1^n)_{\nu} \ .
      \end{align}
It is a direct consequence of the definition of $\tau(x)$  that
    \begin{multline*}
        H_{\alpha}(D_1^n | \bar{D}_1^n)_{\tau(x_1^n)} 
        = n \bar{g} - \sum_{i=1}^n f(\delta_{x_i}) \\
        = n \bar{g} - n \sum_{x \in \cX} \freq{x_1^n}(x) f(\delta_x)
        = n \bar{g} - n f\left( \sum_{x \in \cX} \freq{x_1^n}(x) \delta_x\right) 
       = n \bar{g} - n f(\freq{x_1^n}) \ ,
    \end{multline*}
    where we have used that $f$ is an affine function. Using Lemma~\ref{lem_classicalsideinformation} and~\eqref{eq_expr_nu} we can bound the second term on the right hand side of~\eqref{eq_Hsecondredext} by 
    \begin{align*}
      H_{\alpha}(D_1^n | \bar{D}_1^n X_1^n)_{\nu} &\leqslant  \max_{x_1^n \in \Omega}    H_{\alpha}(D_1^n | \bar{D}_1^n)_{\tau({x_1^n})}\\ 
      &\leqslant \max_{x_1^n: \, f(\freq{x_1^n}) \geqslant h}     n \bar{g} - n f(\freq{x_1^n})    \leqslant n \bar{g} - n h \ .
    \end{align*}
    Inserting this in~\eqref{eq_Hsecondredext} gives
    \begin{align*}
        H^{\uparrow}_{\alpha}(A_1^n  | B_1^n E)_{\rho_{|\Omega}} \geqslant H^{\uparrow}_{\alpha}(A_1^n D_1^n | B_1^n E \bar{D}_1^n)_{\bar{\rho}_{|\Omega}} - n \bar{g} + n h \ .
    \end{align*}
    This concludes the proof of inequality~\eqref{eq_lb_using_d-2}. For the proof of inequality~\eqref{eq_ub_using_d-2}, we can follow similar steps.\footnote{The full proof of this case is available in the source code of this file on the arXiv. To access it, follow the instructions on the line labelled ``EXTRA'' in the preamble.}


\extra{
   In order to prove inequality~\eqref{eq_ub_using_d-2}.
   The first step is to show that as $X_1^n$ is a deterministic function of $A_1^n B_1^n$, we have
       \begin{align}  \label{eq_Xdrop_max}
        H_{\alpha}(A_1^n D_1^n | B_1^n E \bar{D}_1^n)_{\bar{\rho}_{|\Omega}} &= H_{\alpha}(A_1^n X_1^n D_1^n | B_1^n E \bar{D}_1^n)_{\bar{\rho}_{|\Omega}} \ .
    \end{align}  
    In order to do that, observe that for any $x_1^n \in \cX^n$, we have
    \begin{align*}
    \bar{\rho}_{A_1^n B_1^n E D_1^n \bar{D}_1^n, x_1^n} = \rho_{A_1^n B_1^n E, x_1^n} \otimes \tau(x_1^n)_{D_1^n \bar{D}_1^n} \ ,
    \end{align*}
   where we introduced the notation $\tau(x_1^n)_{D_1^n \bar{D}_1^n} = \tau(x_1)_{D_1 \bar{D}_1} \otimes \cdots \otimes \tau(x_n)_{D_n \bar{D}_n}$.
    This implies that for any $x_1^n$, we have
         \begin{align*}
    \bar{\rho}_{X_1^n A_1^n B_1^n E D_1^n \bar{D}_1^n, x_1^n} = (\cT_n \circ \dots \circ \cT_1)(\bar{\rho}_{A_1^n B_1^n E D_1^n \bar{D}_1^n, x_1^n}) \ .
    \end{align*}
    By taking the sum over $x_1^n \in \Omega$ and then normalising by $\rho[\Omega]$, we get 
     \begin{align*}
    \bar{\rho}_{X_1^n A_1^n B_1^n E D_1^n \bar{D}_1^n | \Omega} = (\cT_n \circ \dots \circ \cT_1)(\bar{\rho}_{A_1^n B_1^n E D_1^n \bar{D}_1^n | \Omega}) \ .
    \end{align*}
Thus, we can apply Lemma~\ref{lem_det_func} and prove the equality~\eqref{eq_Xdrop_max}.
   Theorem~\ref{thm_chainq_exact} then implies that
    \begin{align}   \label{eq_HalphaADdec_max}  
      H_{\alpha}(A_1^n X_1^n D_1^n | B_1^n E \bar{D}_1^n)_{\bar{\rho}_{|\Omega}} 
     = H_{\alpha}(A_1^n X_1^n | B_1^n E \bar{D}_1^n)_{\bar{\rho}_{|\Omega}} + H_{\alpha}(D_1^n | A_1^n X_1^n B_1^n E \bar{D}_1^n)_{\nu} \ . 
    \end{align}
    where $\nu$ is a state defined by 
    \begin{align*}
     \nu_{A_1^n X_1^n B_1^n E \bar{D}_1^n} &= \frac{\left(\bar{\rho}_{A_1^n X_1^n B_1^n E \bar{D}_{1}^n|\Omega}^{\frac12} \bar{\rho}^{\frac{1-\alpha}{\alpha}}_{B_1^n E \bar{D}_{1}^n} \bar{\rho}_{A_1^n X_1^n B_1^n E \bar{D}_{1}^n|\Omega}^{\frac12}\right)^{\alpha}}{\tr\left(\bar{\rho}_{A_1^n X_1^n B_1^n E \bar{D}_{1}^n|\Omega}^{\frac12} \bar{\rho}^{\frac{1-\alpha}{\alpha}}_{B_1^n E \bar{D}_{1}^n} \bar{\rho}_{A_1^n X_1^n B_1^n E \bar{D}_{1}^n|\Omega}^{\frac12}\right)^{\alpha}}  \quad \text{and} \\
     \nu_{A_1^n X_1^n B_1^n E D_1^n \bar{D}_1^n} &= \nu_{A_1^n X_1^n B_1^n E \bar{D}_1^n}^{\frac12} \bar{\rho}_{D_1^n | A_1^n X_1^n B_1^n E \bar{D}_1^n | \Omega} \nu_{A_1^n X_1^n B_1^n E \bar{D}_1^n}^{\frac12} \ .
    \end{align*}
    We now use properties of $\rho_{|\Omega}$ to simplify the expression of $\nu$. Observing that
    \begin{align}
    \label{eq:Dbar_product_hmax}
    \bar{\rho}_{X_1^n A_1^n B_1^n E \bar{D}_1^n | \Omega} =\frac{1}{\rho[\Omega]} \sum_{x_1^n \in \Omega} \proj{x_1^n} \otimes \rho_{A_1 B_1^n E, x_1^n} \otimes \bar{\rho}_{\bar{D}_1^n} \ ,
    \end{align}
     we can write
        \begin{align*}
     \nu_{A_1^n X_1^n B_1^n E \bar{D}_1^n} 
     & = \sum_{x_1^n \in \Omega}  \proj{x_1^n} \otimes \nu_{A_1^n B_1^n E, x_1^n} \otimes \bar{\rho}_{\bar{D}_1^n} \ , \\
     & \text{ with} \quad \nu_{A_1^nB_1^n E, x_1^n} = \frac{1}{\rho[\Omega]^{\alpha}}\frac{ \left(  \rho_{A_1^n B_1^n E, x_1^n}^{\frac12} \rho^{\frac{1-\alpha}{\alpha}}_{B_1^n E} \rho_{A_1^n B_1^n E,x_1^n}^{\frac12}\right)^{\alpha}}{\tr\left(\bar{\rho}_{A_1^n X_1^n B_1^n E|\Omega}^{\frac12} \rho^{\frac{1-\alpha}{\alpha}}_{B_1^n E} \bar{\rho}_{A_1^n X_1^n B_1^n E|\Omega}^{\frac12}\right)^{\alpha}} \ .
      \end{align*}     
      In addition, as $\bar{\rho}_{|\Omega}$ is of the form 
      \begin{align*}
         \bar{\rho}_{A_1^n X_1^n B_1^n E D_1^n \bar{D}_1^n|\Omega} = \frac{1}{\rho[\Omega]} \sum_{x_1^n \in \Omega} \proj{x_1^n}_{X_1^n} \otimes  \rho_{A_1^n B_1^n E, x_1^n} \otimes \tau(x_1^n)_{D_1^n\bar{D}_1^n} \ ,
     \end{align*}
 we have
      \begin{align*}
        \bar{\rho}_{D_1^n | A_1^n X_1^n B_1^n E \bar{D}_1^n | \Omega}
        = \sum_{x_1^n \in \Omega} \proj{x_1^n}_{X_1^n} \otimes  \rho_{A_1^n B_1^n E, x_1^n}^{0} \otimes \tau(x_1^n)_{D_1^n | \bar{D}_1^n} \ ,
      \end{align*}
      where $\rho_{A_1^n B_1^n E, x_1^n}^{0}$ is the projector onto the support of $\rho_{A_1^n B_1^n E, x_1^n}$. Hence,
      \begin{align}
      \label{eq_expr_nu_max}
       \nu_{A_1^n X_1^n B_1^n E D_1^n \bar{D}_1^n}  = \sum_{x_1^n \in \cX^n} \proj{x_1^n} \otimes  \nu_{A_1^n B_1^n E, x_1^n} \otimes \tau(x_1^n)_{D_1^n \bar{D}_1^n} \ .
      \end{align}
Getting back to the inequality~\eqref{eq_HalphaADdec_max}, we have $H_{\alpha}(A_1^n X_1^n | B_1^n E \bar{D}_1^n)_{\bar{\rho}_{|\Omega}} = H_{\alpha}(A_1^n | B_1^n E )_{\bar{\rho}_{|\Omega}}$ using Eq.~\eqref{eq:Dbar_product_hmax} to drop $\bar{D}_1^n$ and Lemma~\ref{lem_det_func} to drop $X_1^n$. Moreover, using~\eqref{eq_expr_nu_max}, we have that $H_{\alpha}(D_1^n | A_1^n X_1^n B_1^n E \bar{D}_1^n)_{\nu} = H_{\alpha}(D_1^n | X_1^n \bar{D}_1^n)_{\nu}$.
      Finally, we get
      \begin{align}\label{eq_Hsecondredext_max}
        H_{\alpha}(A_1^n D_1^n | B_1^n E \bar{D}_1^n)_{\bar{\rho}_{|\Omega}}
        = H_{\alpha}(A_1^n | B_1^n E)_{\rho_{|\Omega}} + H_{\alpha}(D_1^n | \bar{D}_1^n X_1^n)_{\nu} \ .
      \end{align}
It is a direct consequence of the definition of $\tau(x)$  that
    \begin{multline*}
        H_{\alpha}(D_1^n | \bar{D}_1^n)_{\tau(x_1^n)} 
        = n \bar{g} - \sum_{i=1}^n f(\delta_{x_i}) \\
        = n \bar{g} - n \sum_{x \in \cX} \freq{x_1^n}(x) f(\delta_x)
        = n \bar{g} - n f\left( \sum_{x \in \cX} \freq{x_1^n}(x) \delta_x\right) 
       = n \bar{g} - n f(\freq{x_1^n}) \ ,
    \end{multline*}
    where we have used that $f$ is an affine function. Using Lemma~\ref{lem_classicalsideinformation} and~\eqref{eq_expr_nu_max} we can bound the second term on the right hand side of~\eqref{eq_Hsecondredext_max} by 
    \begin{align*}
      H_{\alpha}(D_1^n | \bar{D}_1^n X_1^n)_{\nu} &\geqslant  \min_{x_1^n \in \Omega}    H_{\alpha}(D_1^n | \bar{D}_1^n)_{\tau({x_1^n})}\\ 
      &\geqslant \min_{x_1^n: \, f(\freq{x_1^n}) \leqslant h}     n \bar{g} - n f(\freq{x_1^n})    \geqslant n \bar{g} - n h \ .
    \end{align*}
    Inserting this in~\eqref{eq_Hsecondredext_max} and replacing $\alpha$ with $\frac{1}{\alpha}$ gives
    \begin{align*}
        H_{\frac{1}{\alpha}}(A_1^n  | B_1^n E)_{\rho_{|\Omega}} \leqslant H_{\frac{1}{\alpha}}(A_1^n D_1^n | B_1^n E \bar{D}_1^n)_{\bar{\rho}_{|\Omega}} - n \bar{g} + n h \ .
    \end{align*}
    This concludes the proof of inequality~\eqref{eq_ub_using_d-2}. 
    }
\end{proof}

Finally, we prove Theorem \ref{thm:entropyaccumulationext} using Proposition~\ref{prop:accumulation-alpha}.
\begin{proof}[Proof of Theorem \ref{thm:entropyaccumulationext}]
    The first step is to use Lemma \ref{lem_Hepsalpha} to lower-bound the smooth min-entropy by a R\'enyi entropy:
    \begin{align}\label{eq_eatblock}
        H_{\min}^{\varepsilon}(A_1^n | B_1^n E)_{\rho_{|\Omega}} \geqslant H^{\uparrow}_{\alpha}(A_1^n | B_1^n E)_{\rho_{|\Omega}} - \frac{g(\varepsilon)}{\alpha-1} \ .
    \end{align}
    Then Proposition \ref{prop:accumulation-alpha} yields
    \begin{align*}
    H_{\min}^{\varepsilon}(A_1^n | B_1^n E)_{\rho_{|\Omega}} &> n h - n \frac{(\alpha-1)}{4} V^2 - \frac{1}{\alpha'} \log \frac{1}{\rho[\Omega]} - \frac{g(\varepsilon)}{\alpha-1}\\
            &> n h - n \frac{(\alpha-1)}{4} V^2 - \frac{1}{\alpha'} \log \frac{1}{\rho[\Omega]}  - \frac{\log (2/\varepsilon^2)}{\alpha-1} \\
            &\geqslant n h - n \frac{(\alpha-1)}{4} V^2 - \frac{1}{(\alpha-1)} \log \frac{2}{\rho[\Omega]^2 \varepsilon^2} \ , 
    \end{align*}
    where we have used the fact that we are constrained to choose $\alpha \leqslant 1 + \frac{2}{V} \leqslant 2$ in the last inequality. We now choose
    \begin{align} \label{eq_alphachoiceext}
       \alpha := 1 + \frac{2 \sqrt{\log \frac{2}{\rho[\Omega]^2 \varepsilon^2} }}{\sqrt{n} V} \ .
    \end{align}
    and note that, as long as
    \begin{align} \label{eq_nboundext}
     n >  \log \frac{2}{\rho[\Omega]^2 \varepsilon^2} \ ,
     \end{align}
     the value $\alpha$ is strictly smaller than $1 + \frac{2}{V}$ and therefore within the required bounds. Note also that if~\eqref{eq_nboundext} does not hold then the term $c \sqrt{n}$ in the claim~\eqref{eqn:eat-min} is at least $n V \geqslant 2 n \log(1+ 2 d_A) \geqslant 2 n \log d_A$, whereas the min-entropy is always at least $- n \log d_A$ and $n f_{\min}(q)$ is at most $n \log d_A$, which means that the claim is trivial. Finally, inserting~\eqref{eq_alphachoiceext} into the above yields
    \[ H_{\min}^{\varepsilon}(A_1^n | B_1^n E)_{\rho} > n h - \sqrt{n} V \sqrt{\log \frac{2}{\rho[\Omega]^2 \varepsilon^2}} \ , \]
    as advertised. Once again, the max-entropy statement (\ref{eqn:eat-max}) holds by switching the direction of the inequalities, flipping the appropriate signs, and replacing every occurrence of~$H^{\uparrow}_{\alpha}$ by~$H_{\frac{1}{\alpha}}$.
\end{proof}

It might seem restrictive to assume that the tradeoff function is affine. We next show that we may take a general convex function provided the event $\Omega$ can be described as follows: $x^n \in \Omega$ if and only if $\freq{x^n} \in \hat{\Omega}$ where $\hat{\Omega}$ is a convex subset of $\mbP$.
\begin{corollary}\label{cor:EATconvex}
    Let $\cM_1,\dots,\cM_n$ and  $\rho_{A_1^n B_1^n X_1^n E}$ be such that~\eqref{eq_rhomap} and the Markov conditions~\eqref{eq_Markovgen} hold, let  $h \in \mathbb{R}, \varepsilon \in (0,1)$, let $\hat{\Omega}$ be a convex set $\hat{\Omega} \subseteq \mbP$ and define the corresponding event $\Omega \subseteq \cX^n$ by $x_1^n \in \Omega \Leftrightarrow \freq{x_1^n} \in \hat{\Omega}$. Then, if $f$ is a differentiable and convex min-tradeoff function for $\cM_1,\dots,\cM_n$ satisfying $f(q) \geqslant h$ for all $q \in \hat{\Omega}$, we have
      \begin{align}
        \label{eqn:eat-min-convex}H_{\min}^{\varepsilon}(A_1^n | B_1^n E)_{\rho_{|\Omega}} & > n h -  c \sqrt{n}
  \end{align}
  where $c = 2 \bigl(\log (1+2 d_A) + \left\lceil \| \nabla f \|_\infty \right\rceil  \bigr) \sqrt{1- 2 \log (\varepsilon \rho[\Omega])}$. Similarly, if $f$ is a differentiable and concave max-tradeoff function for $\cM_1,\dots,\cM_n$ satisfying $f(q) \leqslant h$ for all $q \in \hat{\Omega}$, we have
   \begin{align} 
        \label{eqn:eat-max-convex}H_{\max}^{\varepsilon}(A_1^n | B_1^n E)_{\rho_{|\Omega}} & < n h + c \sqrt{n} \ .
    \end{align}
   \end{corollary}

\begin{proof}
Let us denote by $\mathrm{cl}(\hat{\Omega})$ the closure of the set $\hat{\Omega}$.
Now as $f$ is continuous on the compact set $\mathrm{cl}(\hat{\Omega})$ (it is even assumed to be differentiable on all of $\mbP$), we have $\min_{q \in \mathrm{cl}(\hat{\Omega})} f(q) = f(q_0)$ for some $q_0 \in \mathrm{cl}(\hat{\Omega})$. By continuity of $f$ and by definition of $h$, we have $f(q_0) \geq h$. Now consider the affine function $g(q) = (\nabla f)_{q_0} \cdot (q - q_0) + f(q_0)$. By convexity of $f$, we have that $g(q) \leq f(q)$ for all $q \in \mbP$ and thus $g$ is a min-tradeoff function. In addition, as $\mathrm{cl}(\hat{\Omega})$ is convex we can apply the first order optimality conditions and get that $(\nabla f)_{q_0} \cdot (q - q_0) \geq 0$ for all $q \in \mathrm{cl}(\hat{\Omega})$. As a result, for all $q \in \mathrm{cl}(\hat{\Omega})$, we have $g(q) \geqslant f(q_0) \geq h$. This implies that if $x_1^n \in \Omega$, then $g(\freq{x^n}) \geq h$. We can then apply Theorem~\ref{thm:entropyaccumulationext} with the affine tradeoff function $g$ and get the desired result as $\| \nabla g \|_{\infty} \leq \| \nabla f \|_{\infty}$.

The proof for $H^{\varepsilon}_{\max}$ is analogous.
\end{proof}

One natural choice for the event $\Omega$ is that the empirical distribution $\freq{X_1^n}$ takes a particular value~$q$. This yields the following special case of Corollary~\ref{cor:EATconvex}. 

\begin{corollary}\label{cor:EAT}
    Let $\cM_1,\dots, \cM_n$ and  $\rho_{A_1^n B_1^n X_1^n E}$ be such that~\eqref{eq_rhomap} and the Markov conditions~\eqref{eq_Markovgen} hold. Then, for any differentiable and convex min-tradeoff function $f$ for $\cM_1, \ldots, \cM_n$ and for any $q \in \mbP$, we have 
  \begin{align*}
    H_{\min}^\eps(A_1^n | B_1^n E)_{\rho_{|q}} > n f(q) - c \sqrt{n}
  \end{align*} 
  where  $c = 2 \bigl(\log (1+2 d_A) + \left\lceil \| \nabla f(q) \|_\infty \right\rceil  \bigr) \sqrt{1- 2 \log (\varepsilon \rho[q]})$, where $\rho_{|q}$ denotes the state $\rho$ conditioned on the event that $\freq{X_1^n} = q$, and $\rho[q]$ the probability of this event.
\end{corollary}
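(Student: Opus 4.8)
The plan is to deduce Corollary~\ref{cor:EAT} directly from Theorem~\ref{thm:entropyaccumulationext} by replacing the convex function $f$ with its tangent hyperplane at $q$. Concretely, I would define the affine function $f_{\mathrm{aff}}$ on $\mbP$ by
\begin{align*}
  f_{\mathrm{aff}}(p) = f(q) + \langle \nabla f(q), p - q\rangle \ ,
\end{align*}
so that $f_{\mathrm{aff}}(q) = f(q)$ and $\nabla f_{\mathrm{aff}} = \nabla f(q)$ is constant. Since $f$ is differentiable and convex, the supporting-hyperplane inequality gives $f_{\mathrm{aff}}(p) \leqslant f(p)$ for all $p \in \mbP$. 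As $f$ is a min-tradeoff function for $\cM_1, \dots, \cM_n$, we have $f(p) \leqslant \inf_{\nu \in \Sigma_i(p)} H(A_i | B_i R)_{\nu}$ for every $i$ and $p$, hence the same holds with $f$ replaced by $f_{\mathrm{aff}}$; thus $f_{\mathrm{aff}}$ is an affine min-tradeoff function in the sense of Definition~\ref{def_tradeoff}.

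Next I would take $h = f(q)$ and the event $\Omega = \{x_1^n \in \cX^n : \freq{x_1^n} = q\}$. On $\Omega$ one has $\freq{X_1^n} = q$, so $f_{\mathrm{aff}}(\freq{X_1^n}) = f_{\mathrm{aff}}(q) = f(q) = h$; in particular $\Omega$ implies $f_{\mathrm{aff}}(\freq{X_1^n}) \geqslant h$. Applying Theorem~\ref{thm:entropyaccumulationext} to $f_{\mathrm{aff}}$, $h$ and $\Omega$ then yields
\begin{align*}
  H_{\min}^{\eps}(A_1^n | B_1^n E)_{\rho_{|\Omega}} > n h - c\sqrt{n}
\end{align*}
with $c = 2\bigl(\log(1 + 2 d_A) + \lceil \|\nabla f_{\mathrm{aff}}\|_\infty \rceil\bigr)\sqrt{1 - 2\log(\eps\, \rho[\Omega])}$. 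To finish I would only unwind notation: $\nabla f_{\mathrm{aff}} = \nabla f(q)$, the event $\Omega$ is by definition the event $\freq{X_1^n} = q$ so that $\rho_{|\Omega} = \rho_{|q}$ and $\rho[\Omega] = \rho[q]$, and $nh = n f(q)$. Substituting these gives exactly the stated bound.

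Since the corollary is essentially a specialisation of Theorem~\ref{thm:entropyaccumulationext}, I do not expect any real obstacle; the only step requiring attention is confirming that $f_{\mathrm{aff}}$ remains a min-tradeoff function, which is precisely where convexity and differentiability of $f$ enter, via the inequality $f_{\mathrm{aff}} \leqslant f$. A minor point worth noting is that $\nabla f(q)$ is only determined up to a shift along the all-ones direction, because $\mbP$ lies in the probability hyperplane; any representative works for the argument, and one may take the one of smallest $\ell_\infty$ norm when optimising the constant. (If $\rho[q] = 0$ the statement is vacuous.)
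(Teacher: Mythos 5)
Your argument is correct and is essentially identical to the paper's own proof: the paper likewise replaces $f$ by its tangent hyperplane $f_q$ at $q$, observes that convexity makes $f_q$ a lower bound on $f$ and hence a min-tradeoff function, sets $h = f(q)$ and $\Omega = \{\freq{X_1^n} = q\}$, and invokes Theorem~\ref{thm:entropyaccumulationext}. Your additional remark about the gradient being defined only up to a shift along the all-ones direction is a sensible observation but not needed for the result.
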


Note that an analogous statement holds of course for the max-entropy, replacing $f$ by a concave max-tradeoff function and changing the inequality accordingly. 
%

The following corollary specialises the above to the formulation~\eqref{eq_entropyaccumulation}, in which no statistical test is being done, i.e.~the $X_i$ systems are trivial. We provide the statement for the case of the lower boundary.

\begin{corollary} \label{cor_diffi}
    Let $\cM_1,\dots, \cM_n$  and  $\rho_{A_1^n B_1^n E}$ be such that~\eqref{eq_rhomap} and the Markov conditions~\eqref{eq_Markovgen} hold. Then
  \begin{align*}
    H_{\min}^\eps(A_1^n  | B_1^n E)_{\rho} > \sum_ {i=1}^n  \inf_{\omega_{R_{i-1} R}} H(A_i | B_i R)_{(\cM_i \otimes \cI_R)(\omega_{R_{i-1} R})} - c \sqrt{n}
  \end{align*} 
  where  $c = 3 (\log (1+2 d_A) \bigr) \sqrt{1- 2 \log (\varepsilon})$.
\end{corollary}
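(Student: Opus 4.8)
The plan is to run the argument behind Proposition~\ref{prop:accumulation-alpha}/Theorem~\ref{thm:entropyaccumulationext} in the degenerate case where the classical registers $X_i$, the event $\Omega$, the affine tradeoff function, and the auxiliary ``entropy-price'' registers $D_i\bar D_i$ are all trivial. With no tradeoff function to enforce, the maps $\cD_i$ are not needed, the chain rule is applied directly to $\rho$, and each step contributes the worst-case conditional von Neumann entropy $\inf_{\omega_{R_{i-1}R}}H(A_i|B_iR)_{\cM_i(\omega)}$ — exactly the sum in the statement. (One could instead deduce this from Theorem~\ref{thm:entropyaccumulationext} with a dummy affine $f$ satisfying $f(\delta_i)=\inf_\omega H(A_i|B_iR)_{\cM_i(\omega)}$, but that route produces a larger constant because it pays an extra $\lceil\|\nabla f\|_\infty\rceil$, so I prefer the direct route.)

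First I would pass from the smooth min-entropy to a sandwiched R\'enyi entropy via Lemma~\ref{lem_Hepsalpha}: for $\alpha\in(1,2]$,
\begin{align*}
H_{\min}^{\varepsilon}(A_1^n|B_1^n E)_{\rho}
&\geq H^{\uparrow}_{\alpha}(A_1^n|B_1^n E)_{\rho}-\frac{g(\varepsilon)}{\alpha-1} \\
&\geq H_{\alpha}(A_1^n|B_1^n E)_{\rho}-\frac{g(\varepsilon)}{\alpha-1},
\end{align*}
with $g(\varepsilon)\leq\log(2/\varepsilon^{2})=1-2\log\varepsilon$ and the second step using $H^{\uparrow}_{\alpha}\geq H_{\alpha}$. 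Next I would apply Corollary~\ref{cor_conditionalmap} $n$ times to $\rho$: at step $i$ use the substitutions $A_1\to A_1^{i-1}$, $B_1\to B_1^{i-1}E$, $A_2\to A_i$, $B_2\to B_i$, with $\cM\to\mathrm{tr}_{R_i}\circ\cM_i$ acting on $R_{i-1}$ and $\rho^0\to(\cM_{i-1}\circ\dots\circ\cM_1)(\rho^0_{R_0 E})$; the Markov condition the corollary demands is precisely~\eqref{eq_Markovgen}. Telescoping these inequalities (the base term, $H_\alpha$ of the empty system, is $0$) gives
\begin{align*}
H_{\alpha}(A_1^n|B_1^n E)_{\rho}
&\geq \sum_{i=1}^n\inf_{\omega_{R_{i-1}A_1^{i-1}B_1^{i-1}E}}H_{\alpha}(A_i|B_iA_1^{i-1}B_1^{i-1}E)_{\cM_i(\omega)} \\
&\geq \sum_{i=1}^n\inf_{\omega_{R_{i-1}R}}H_{\alpha}(A_i|B_iR)_{\cM_i(\omega)},
\end{align*}
the last inequality because the right-hand infimum ranges over a larger family of reference systems (and by Remark~\ref{rem_Rdimension} it is enough to take $R\cong R_{i-1}$, matching the statement).

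Then I would convert to von Neumann entropies with Lemma~\ref{lem_HalphaH}, uniformly in $\omega$: since $\dim A_i\leq d_A$, any $\alpha$ with $1<\alpha<1+1/\log(1+2d_A)$ satisfies its hypothesis and $H_{\alpha}(A_i|B_iR)\geq H(A_i|B_iR)-(\alpha-1)\log^{2}(1+2d_A)$, whence
\[
H_{\min}^{\varepsilon}(A_1^n|B_1^n E)_{\rho}\;\geq\;\sum_{i=1}^n\inf_{\omega_{R_{i-1}R}}H(A_i|B_iR)_{(\cM_i\otimes\cI_R)(\omega)}-n(\alpha-1)\log^{2}(1+2d_A)-\frac{g(\varepsilon)}{\alpha-1}.
\]
Optimising over $\alpha$, the choice $\alpha-1=\sqrt{\log(2/\varepsilon^{2})}\big/\bigl(\sqrt n\,\log(1+2d_A)\bigr)$ balances the two penalties, making the total penalty $2\log(1+2d_A)\sqrt{n\log(2/\varepsilon^{2})}=2\log(1+2d_A)\sqrt{1-2\log\varepsilon}\,\sqrt n\le c\sqrt n$; this $\alpha$ is admissible exactly when $n>\log(2/\varepsilon^{2})$. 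If instead $n\leq\log(2/\varepsilon^{2})$ the statement is vacuous: $H_{\min}^{\varepsilon}(A_1^n|B_1^nE)_{\rho}\geq-n\log d_A$ while $\sum_i\inf_{\omega}H(A_i|B_iR)\leq n\log d_A$, and since $\sqrt n\leq\sqrt{1-2\log\varepsilon}$ and $\log(1+2d_A)\geq\log d_A$ one checks $c\sqrt n\geq 2n\log d_A$, so the right-hand side is at most $-n\log d_A$.

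The substance of the proof is the chain-rule machinery already in hand; the only real work is bookkeeping. I expect the main points to watch are: (i) that the Markov hypotheses needed at each invocation of Corollary~\ref{cor_conditionalmap} are exactly the assumed conditions~\eqref{eq_Markovgen} (with $E$ playing the role of part of the conditioning system); (ii) that the infima emerging from the telescoping — over the specific reference $A_1^{i-1}B_1^{i-1}E$ — can be relaxed to the infima over the generic $R\cong R_{i-1}$ appearing in the statement, which is where Remark~\ref{rem_Rdimension} is used; and (iii) that the $\alpha$-range restrictions from Lemmas~\ref{lem_Hepsalpha} and~\ref{lem_HalphaH} are jointly compatible with the near-optimal $\alpha$, the remaining small-$n$ regime being absorbed by the crude dimension bound.
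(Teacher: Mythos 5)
Your proof is correct, but it takes a different route from the paper's. The paper derives Corollary~\ref{cor_diffi} as a consequence of Corollary~\ref{cor:EAT} (and hence of Theorem~\ref{thm:entropyaccumulationext}): it modifies the maps so that $X_i=i$ deterministically, defines the affine min-tradeoff function $f(q)=\sum_i q(i)\inf_{\omega}H(A_i|B_iR)_{(\cM_i\otimes\cI_R)(\omega)}$ on distributions over $\{1,\dots,n\}$, takes $q$ uniform so that $\freq{X_1^n}=q$ with certainty, and bounds $\|\nabla f(q)\|_\infty\leqslant\log d_A$ --- exactly the ``dummy tradeoff function'' route you explicitly considered and rejected. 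You instead rerun the core machinery directly in the trivial-$X$ case: Lemma~\ref{lem_Hepsalpha}, then $n$ applications of Corollary~\ref{cor_conditionalmap} (whose Markov hypotheses are precisely~\eqref{eq_Markovgen}), then Lemma~\ref{lem_HalphaH} and the optimisation of $\alpha$, with the small-$n$ regime handled by the crude dimension bound exactly as in the paper's proof of Theorem~\ref{thm:entropyaccumulationext}. What your route buys is that the $D_i\bar{D}_i$ ``entropy price'' systems, the conditioning on $\Omega$, and the $\lceil\|\nabla f\|_\infty\rceil$ penalty all disappear, yielding the constant $2\log(1+2d_A)\sqrt{1-2\log\eps}$, which is strictly better than the stated $c$; what the paper's route buys is brevity, since it reuses the already-proved general theorem in three lines. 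The only point worth making explicit in your write-up is that the relaxation of the infimum from reference system $A_1^{i-1}B_1^{i-1}E$ to $R\cong R_{i-1}$ is performed at the level of $H_\alpha$ rather than $H$, so you should invoke the purification-plus-data-processing argument for the sandwiched R\'enyi entropy directly (it holds for the same reason as in Remark~\ref{rem_Rdimension}) rather than the remark itself, which is stated for the von Neumann entropy.
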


\begin{proof}
Note that the quantity $H_{\min}^\eps(A_1^n  | B_1^n E)_{\rho}$ only depends on the marginal of the state $\rho$ on $A_1^n B_1^n E$. Thus, we can modify the maps $\cM_i$ in any way that does not affect the reduced state $\rho_{A_1^n B_1^n E}$ before applying Corollary~\ref{cor:EAT}. In particular, we change $\cM_i$ so that the original value of $X_i$ is disregarded and replaced with the constant value $X_i = i$.
The values $X_1, \ldots, X_n$ can then be regarded as random variables with alphabet $\cX = \{1, \ldots, n\}$. We define the real function~$f$ on $\mbP$ as
 \begin{align*}
   f(q) = \sum_{i=1}^n q(i) \inf_{\omega_{R_{i-1} R}} H(A_i | B_i R)_{(\cM_i \otimes \cI_R)(\omega_{R_{i-1} R})} \ .
\end{align*}
Note that for any $i \in \{1,\dots, n\}$ and any $q \in \mbP$, we have either $q(i) \neq 1$ in which case $\Sigma_i(q) = \emptyset$ (we use the notation in~\eqref{eq_def_set_Sigma}) and the min-tradeoff condition is trivial or $q(i) = 1$, in which case $\Sigma_i(q) = \{(\cM_i \otimes \cI_R)(\omega_{R_{i-1} R}) : \omega_{R_{i-1} R} \in \mathrm{D}(R_{i-1} \otimes R) \}$. Thus for any $q \in \mbP$,
\begin{align*}
f(q) \leq \inf_{\omega_{R_{i-1} R}} H(A_i | B_i R)_{(\cM_i \otimes \cI_R)(\omega_{R_{i-1} R})} \ .
\end{align*}
As a result, $f$ is a min-tradeoff function for all $\cM_i$ for $i \in \{1, \dots, n\}$. We now fix $q \in \mbP$ such that $q(1) = \cdots = q(n) = \frac{1}{n}$, in which case the event  $\freq{X_1^n} = q$ occurs with certainty. Because
\begin{align*}
 \| \nabla f(q) \|_\infty \leqslant \log d_A
\end{align*}
which implies that $\left\lceil \| \nabla f(q) \|_\infty \right\rceil \leq \log(1+2d_A)$, the claim follows immediately from Corollary~\ref{cor:EAT}. 
\end{proof}

As indicated in the introduction, in the special case where the individual pairs $(A_i, B_i)$ are independent and identically distributed (IID), the entropy accumulation theorem corresponds to the Quantum Asymptotic Equipartition Property~\cite{TCR09}. We can therefore formulate the latter as a corollary of Theorem~\ref{thm:entropyaccumulationext}.\footnote{In the version of~\cite{TCR09}, the term $1 + 2 d_A$ in the logarithm is replaced by an expression that depends on the entropy of $A$ conditioned on $B$.}
 
\begin{corollary}\label{cor_QAEP}
  For any bipartite state  $\nu_{A B}$, any $n \in \mathbb{N}$, and any $\eps \in (0,1)$, 
  \begin{align*}
  \frac{1}{n} H_{\min}^{\varepsilon}(A_1^n | B_1^n)_{\nu^{\otimes n}}  > H(A|B)_{\nu} - 2 \sqrt{\frac{1-2 \log \eps}{n}} \log(1+2 d_A)  \ .
\end{align*}
\end{corollary}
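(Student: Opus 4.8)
The plan is to obtain Corollary~\ref{cor_QAEP} as a direct instantiation of Theorem~\ref{thm:entropyaccumulationext} in the IID regime, choosing the channels $\cM_i$ so that they simply output fresh, uncorrelated copies of $\nu_{AB}$. First I would set all the memory registers $R_0,R_1,\dots,R_n$ to be trivial, take the system $E$ to be trivial, and take the classical register $X_i$ to be trivial (so that $\cX$ is a singleton and $\cT_i$ is the identity map in~\eqref{eq_form_extraction}, which is one of the explicitly allowed special cases). For each $i$ let $\cM_i$ be the TPCP map that ignores its (trivial) input $R_{i-1}$ and outputs the fixed state $\nu_{A_iB_i}$. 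Then the state produced by $\cM_n\circ\dots\circ\cM_1$ according to~\eqref{eq_rhomap} is exactly $\rho_{A_1^nB_1^n}=\nu^{\otimes n}$, and $d_A$ (the maximum dimension of the $A_i$) equals the dimension of the single system $A$.

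Next I would verify the hypotheses. The Markov conditions~\eqref{eq_Markovgen}, namely $A_1^{i-1}\leftrightarrow B_1^{i-1}E\leftrightarrow B_i$, hold trivially: since $\rho$ is a product state, $B_i$ is independent of $A_1^{i-1}B_1^{i-1}E$, so~\eqref{eq_Markovcondtrad} is satisfied with a single summand. It then remains to identify a min-tradeoff function. Because $R_{i-1}$ is trivial, the set $\Sigma_i(q)$ (with $q$ the unique distribution on the singleton $\cX$) consists precisely of the states $\nu_{A_iB_i}\otimes\omega_R$ for arbitrary $\omega_R$, and for every such state $H(A_i|B_iR)_{\nu\otimes\omega}=H(A_i|B_i)_\nu=H(A|B)_\nu$ because $A_iB_i$ is uncorrelated with $R$. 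Hence the \emph{constant} function $f\equiv H(A|B)_\nu$ is an affine min-tradeoff function for $\cM_1,\dots,\cM_n$, and crucially $\nabla f=0$, so $\lceil\|\nabla f\|_\infty\rceil=0$.

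With $h=H(A|B)_\nu$ and $\Omega=\cX^n$ the certain event (so $\rho[\Omega]=1$ and $\rho_{|\Omega}=\rho=\nu^{\otimes n}$), the hypothesis $f(\freq{X_1^n})\geqslant h$ holds trivially, and Theorem~\ref{thm:entropyaccumulationext} gives
\begin{align*}
H_{\min}^{\varepsilon}(A_1^n|B_1^n)_{\nu^{\otimes n}} > n\,H(A|B)_\nu - 2\log(1+2d_A)\sqrt{1-2\log\varepsilon}\,\sqrt{n}\,,
\end{align*}
and dividing by $n>0$ yields the claim.

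There is essentially no obstacle beyond this bookkeeping; the only point requiring a moment's care is that one must appeal to the fully general Theorem~\ref{thm:entropyaccumulationext} and use the observation that the IID tradeoff function is constant, since it is precisely the vanishing of $\nabla f$ that removes the $\lceil\|\nabla f\|_\infty\rceil$ contribution and produces the stated constant $2\log(1+2d_A)$ rather than a larger one (e.g.\ Corollary~\ref{cor_diffi} would give a worse factor). Alternatively, one could invoke Corollary~\ref{cor:EAT} with the (trivially differentiable and convex) constant function $f$ and $q$ the unique element of $\mbP$, obtaining the same bound. This constant is weaker than the one in~\cite{TCR09} (as noted in the accompanying footnote), but it recovers the asymptotic statement of the Quantum Asymptotic Equipartition Property.
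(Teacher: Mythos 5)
Your proof is correct and follows essentially the same route as the paper: instantiate Theorem~\ref{thm:entropyaccumulationext} with trivial $R_i$, $E$, and $X_i$, channels $\cM_i$ that output fresh copies of $\nu_{AB}$, the constant min-tradeoff function $f \equiv H(A|B)_\nu$ (so $\nabla f = 0$), and $\Omega$ the certain event. The paper's own proof is just a terser version of this same argument, so the additional verifications you supply (Markov conditions, the form of $\Sigma_i(q)$, the vanishing gradient) are welcome but not a different method.
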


\begin{proof}
  Let, for any $i=1, \ldots, n$,   $\cM_i$ be the TPCP map from $R$ to $X A B R$ which sets $A B$ to state $\nu_{A B}$ and where $X$ and $R$ are trivial (one-dimensional) systems.  The concatenation of these maps thus generates the state $\rho_{A_1^n B_1^n} = \nu_{A B}^{\otimes n}$. The claim is then obtained from Theorem~\ref{thm:entropyaccumulationext} with the trade-off function $f$ being a constant equal to $h = H(A|B)_{\nu}$ and with $\Omega$ as the certain event.
\end{proof}

%

\section{Applications}\label{sec_applications}

Entropy is a rather general notion and, accordingly, entropy accumulation has applications in various areas of physics, information theory, and computer science. An example from physics is the phenomena of thermalisation. It is known that a system can only thermalise if its smooth min-entropy is sufficiently large~\cite{RelativeThermalisation}.  To illustrate how Theorem~\ref{thm:entropyaccumulationext} could give an estimate of this entropy, consider a system of interest (e.g., a cup of coffee)  which is in contact with a large environment (the air around it). Suppose that, for an appropriately chosen discretisation of the evolution, the system interacts at each time step with a different part of the environment (e.g., with different air molecules bouncing off the coffee cup).\footnote{In statistical mechanics this is formally described by the \emph{Repeated Interaction System (RIS)} model~\cite{AttPau06,BrJoMe14}.} Theorem~\ref{thm:entropyaccumulationext}  then provides a bound on the total entropy that is transferred to the environment in terms of the von Neumann entropy transferred in each time step. Because the joint time evolution of system and environment is unitary, this entropy flow to the environment could be expressed in terms of the entropy change of the system itself. The argument would therefore prove that the total entropy acquired by the system over many time steps is bounded by the sum of the entropies produced in each individual time step.

Another area where the notion of entropy plays a crucial role is quantum cryptography. Many proofs of security of cryptographic protocols involve lower-bounding the uncertainty that a dishonest adversary has about some system of interest. The state-of-the-art is to derive such bounds using a combination of de Finetti-type theorems as well as the Quantum Asymptotic Equipartition Property~\cite{Renner05,Ren07,CKR09,AFR15}. However, the use of de Finetti theorems comes with various disadvantages. Firstly, they are only applicable under certain assumptions on the symmetry of the protocols. Secondly, they introduce additional error terms that can be large in the practically relevant finite-size regime~\cite{ScaRen08}. Finally, it is not known how to apply de Finetti theorems in a device-independent scenario (see~\cite{EkertRenner14} for an overview and references on device-independent cryptography). These problems can all be circumvented by the use of entropy accumulation, as demonstrated in~\cite{AFRV19} for the case of device-independent quantum key distribution and randomness expansion. The resulting security statements  are valid against general attacks and essentially optimal in the finite-size regime. 

In the remainder of this section, we illustrate the use of entropy accumulation with two concrete examples. The first  is a security proof for a basic quantum key distribution protocol.
The second is a novel derivation of an upper bound on the fidelity of fully quantum random access codes.

\subsection{Sample application: Security of quantum key distribution}
A \emph{Quantum Key Distribution (QKD)} protocol enables two parties, Alice and Bob, to establish a common secret key, i.e., a string of random bits unknown to a potential eavesdropper, Eve.  The setting is such that Alice and Bob can communicate over a quantum channel, which may however be fully controlled by Eve. In addition, Alice and Bob have a classical communication link which is assumed to be authenticated, i.e., Eve may read but cannot alter the classical messages exchanged  between Alice and Bob. The protocol is said to be \emph{secure against general attacks} if any attack by Eve is either detected (in which case the protocol aborts) or does not compromise the secrecy of the final key. Here, we will show that our main theorem can be directly applied to show security against general attacks for a fairly standard QKD protocol. As a bonus, our proof still holds even if we do not make any assumptions about Bob's measurement device: the POVM applied by Bob at every step of the protocol can be arbitrary, and may vary from one step to the next (thereby achieving one-sided measurement device independence as in~\cite{TR11}, but without the restriction to memoryless devices; see also~\cite{TFKW13}). In fact, as shown in~\cite{AFRV19}, the entropy accumulation theorem can be used to prove the security of fully device-independent quantum key distribution.

For concreteness, we consider here a variant of the E91 QKD protocol~\cite{Ekert91} (and note that any security proof for this protocol also implies security of the BB84 protocol~\cite{BB84,BBM92}). The protocol consists of a sequence of instructions for Alice and Bob, as described in the box below. These depend on certain parameters, including the number, $n$, of qubits that need to be transmitted over the quantum channel, the maximum tolerated noise level, $e$, of this channel, as well as the key rate, $r$, which is defined as the number of final key bits divided by $n$.  In the first protocol step, Alice and Bob need to measure their qubits at random in one of two mutually unbiased bases, which we term the \emph{computational} and the \emph{diagonal} basis. These are chosen with probability $1-\mu$ and $\mu$, respectively, for some $\mu \in (0,1)$. The protocol also invokes an error correction scheme termed $\mathrm{EC}$, which allows Bob to infer the measurement outcomes obtained by Alice for the set of indices $S$ where the basis choices of Alice and Bob were the same. Note that if the protocol was implemented without any noise, then Bob's outcomes would match exactly with Alice's outcomes on the indices $S$ and no error correction would be required. However, in the presence of noise, such an error correction step is needed. For this, Alice needs to send classical error correcting information to Bob, whose maximum relative length is characterised by another parameter, $\vartheta_{\mathrm{EC}}$. We assume that $\mathrm{EC}$ is \emph{reliable}. This means that, except with negligible probability, Bob either obtains a correct copy of Alice's string or he is notified that the string cannot be inferred.\footnote{Any error correction scheme can be turned into a reliable one by appending a test where Alice and Bob compare a hash value computed from their (corrected) strings.}  

\begin{table}[h]
\begin{center}
\fbox{\begin{minipage}{0.96\linewidth}
\smallskip
{\bf The E91 Quantum Key Distribution Protocol}

\smallskip

\underline{Protocol parameters}  \vspace{-0.8ex}
\vspace{-0.8ex}
\begin{center}
\begin{tabular}{r c l}
  $n \in \mathbb{N}$ & : & number of uses of qubit channel \\
   $\mu \in (0,1)$ & : & probability for measurements in diagonal basis \\
  $e \in (0,\frac{1}{2})$ & : & maximum tolerated phase error ratio  \\
  $\vartheta_{\mathrm{EC}} \in [0,1]$ & : & relative communication cost of error correction scheme $\mathrm{EC}$ \\
  $r \in [0,1]$ & : & key rate  
\end{tabular}
\end{center}

\vspace{0.1ex}

{\underline{Protocol steps}}
\vspace{-1ex}
\begin{enumerate}
    \setlength{\itemsep}{0.4ex}
    \setlength{\parskip}{0.4ex}
    \setlength{\parsep}{0.4ex} 
  \item \emph{Distribution:} For $i \in \{1, \ldots, n\}$,  Alice prepares a pair $(Q_i, \bar{Q}_i)$ of entangled qubits and sends $\bar{Q}_i$ to Bob. Alice generates a random bit $B_i$ such that $P_{B_i}(1) = \mu$ and, depending on whether $B_i=0$ or $B_i = 1$, measures $Q_i$ in either the computational or  the diagonal basis, storing the outcome as $A_i$. In the same way, Bob measures $\bar{Q}_i$ in a basis determined by a random bit $\bar{B}_i$, storing the outcome as $\bar{A}_i$. 
  
  \item \emph{Sifting and information reconciliation:} Alice and Bob announce $B_i$ and $\bar{B}_i$ and determine the set $S$ of indices $i \in \{1, \ldots, n\}$ such that $B_i = \bar{B}_i$.  They invoke the error correction scheme $\mathrm{EC}$, allowing Bob to compute a guess $\hat{A}_S$ for Alice's string $A_{S} = (A_i)_{i \in S}$.  If $\mathrm{EC}$ does not output a guess then the protocol is aborted. 
  
  \item \emph{Parameter estimation:}  Bob counts the number of indices $i \in S$  for which $\bar{B}_i = 1$ and $\bar{A}_i \neq \hat{A}_i$. If this number is larger than $e \mu^2 n$ then the protocol is aborted.

  \item \emph{Privacy amplification:} Alice chooses a function $F$ at random from a two-universal set of hash functions~\cite{WegmanCarter} from $|S|$ bits to $\lfloor r n \rfloor$ bits and announces $F$ to Bob. Both Alice and Bob  compute the final key as $F(A_S)$ and $F(\hat{A}_S)$, respectively. 
  \end{enumerate} 
\vspace{-1ex}
\end{minipage} \hspace{1em}}
\end{center} 
 \end{table}
 
The security of QKD against general attacks has been established in a sequence of works~\cite{lo&chau:security,Mayers01,SP00,Biham00,Renner05}. Specifically, for the E91 protocol, the following result has been shown.

\begin{theorem} \label{thm_QKD}
  The E91 protocol is secure for any choice of protocol parameters satisfying\footnote{$H_{\mathrm{Sh}}(e) = -e \log e - (1-e) \log (1-e)$ is the binary Shannon entropy.} 
\begin{align} \label{eq_secrecycondition}
  r <  1 - H_{\mathrm{Sh}}(e)  - \vartheta_{\mathrm{EC}} - 2 \mu \ ,
\end{align}
provided that $n$ is sufficiently large. 
\end{theorem}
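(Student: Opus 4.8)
The plan is to follow the modular structure of the device-independent security proof of~\cite{AFRV16}, specialised to this entanglement-based setting, with Theorem~\ref{thm:entropyaccumulationext} playing the role that a de Finetti reduction plays in earlier proofs~\cite{Renner05}. Composable security splits into correctness, which is immediate from the assumed reliability of $\mathrm{EC}$, and secrecy of the hashed key $F(A_S)$. By the Quantum Leftover Hash Lemma~\cite{Renner05,Tom12}, the key is $\varepsilon_{\mathrm{sec}}$-secret conditioned on the protocol not aborting provided that
\[
  H_{\min}^{\varepsilon'}(A_S \mid E\,C\,B_1^n\bar B_1^n)_{\rho_{|\Omega}} \;\geqslant\; \lfloor rn\rfloor + 2\log(1/\varepsilon_{\mathrm{sec}}) \ ,
\]
where $\Omega$ is the non-abort event, $E$ is Eve's quantum register, $C$ the error-correction transcript, $B_1^n\bar B_1^n$ the announced bases, and $\varepsilon'$ a smoothing parameter to be fixed at the end as a function of $\varepsilon_{\mathrm{sec}}$ (the hash function is independent and drops out). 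Two chain rules for the smooth min-entropy reduce this to a statement about Alice's full raw string: discarding $C$ costs at most its bit length, which is at most $\vartheta_{\mathrm{EC}}n$ by hypothesis, and passing from $A_S$ to $A_1^n$ costs at most $n-|S|$, which on $\Omega$ is at most $(2\mu+o(1))n$ except with probability exponentially small in $n$ (a Chernoff bound on $|S|=|\{i:B_i=\bar B_i\}|$, whose mean is $n(1-2\mu+2\mu^2)$). Hence it suffices to prove $H_{\min}^{\varepsilon'}(A_1^n\mid E\,B_1^n\bar B_1^n)_{\rho_{|\Omega}} \geqslant n(1-H_{\mathrm{Sh}}(e)) - o(n)$.

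To obtain this bound I would invoke Theorem~\ref{thm:entropyaccumulationext}. The $i$-th round of the distribution phase is modelled as a channel $\cM_i$ from Eve's memory $R_{i-1}$ to $A_i\bar A_i B_i\bar B_i X_i R_i$: Eve intervenes on the qubit sent to Bob, Alice and Bob draw bases $(B_i,\bar B_i)$ and measure, yielding Alice's bit $A_i$, Bob's bit $\bar A_i$, an updated memory $R_i$, and a classical value $X_i$ recording whether round $i$ is a parameter-estimation round ($B_i=\bar B_i=1$) and, if so, whether $A_i\neq\bar A_i$ (a ``phase error''). In the notation of Section~\ref{sec_accumulation}, $A_i$ is Alice's bit, the side information $B_i$ is $(B_i,\bar B_i,\bar A_i)$ (of which $\bar A_i$ is discarded afterwards when forming the key), the register $E=R_n$ together with the announced bases is the accumulated side information, and $X_i$ is a deterministic function of $A_i$ and the side information, of the form required by~\eqref{eq_form_extraction}. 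The state then has the form~\eqref{eq_rhomap}, and the Markov conditions~\eqref{eq_Markovgen} hold because the bases $(B_i,\bar B_i)$ are fresh local randomness generated at round $i$, hence independent of everything produced before and of $E$.

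It remains to exhibit a min-tradeoff function, which is the standard single-round analysis of BB84/E91: by an entropic uncertainty relation (cf.~\cite{Tom12}), on any state whose $X_i$-statistics are consistent with a phase error rate at most $e$ one has $H(A_i\mid B_i\bar B_i R)\geqslant 1-H_{\mathrm{Sh}}(e)$ — a small X-basis disagreement between Alice and Bob forces Eve's uncertainty about Alice's Z-basis outcome to be large, with a complementary argument (or a symmetrisation of the two bases in the protocol) covering the rounds in which Alice measures diagonally. This defines a convex min-tradeoff function whose value at the boundary error rate $e$ is $1-H_{\mathrm{Sh}}(e)$; its tangent hyperplane there is an affine min-tradeoff function $f$ with $h=f(q^\star)=1-H_{\mathrm{Sh}}(e)$ and with $\|\nabla f\|_\infty$ finite for $e\in(0,\tfrac12)$, and one checks that conditioning on $\Omega$ (after, if needed, enlarging it to also fix the number of test rounds up to fluctuations) forces $f(\freq{X_1^n})\geqslant h$. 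Theorem~\ref{thm:entropyaccumulationext} then gives $H_{\min}^{\varepsilon'}(A_1^n\mid E\,B_1^n\bar B_1^n)_{\rho_{|\Omega}} > n(1-H_{\mathrm{Sh}}(e)) - c\sqrt n$ with $c$ explicit in $\varepsilon'$, $d_A=2$ and $\|\nabla f\|_\infty$. Combining with the two chain-rule reductions, the key is $\varepsilon_{\mathrm{sec}}$-secret as soon as $\lfloor rn\rfloor + 2\log(1/\varepsilon_{\mathrm{sec}}) < n(1-H_{\mathrm{Sh}}(e)) - \vartheta_{\mathrm{EC}}n - 2\mu n - O(\sqrt n)$, and for $r$ strictly below $1-H_{\mathrm{Sh}}(e)-\vartheta_{\mathrm{EC}}-2\mu$ the right-hand side exceeds the left once $n$ is large enough, which is the claim.

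I expect the main obstacle to be the single-round bound together with the bookkeeping needed to feed it into Theorem~\ref{thm:entropyaccumulationext} cleanly: proving $H(A_i\mid B_i\bar B_i R)\geqslant 1-H_{\mathrm{Sh}}(e)$ requires handling Eve's most general intervention and the mixture over the four basis pairs, and, since the parameter-estimation sample is only an $O(\mu^2)$ fraction of the rounds, one must argue carefully that conditioning on non-abort constrains the \emph{empirical} $X_1^n$-distribution (not merely the count of phase errors) to the region where the tradeoff function evaluates to $1-H_{\mathrm{Sh}}(e)$ — this is exactly where the freedom in choosing the event $\Omega$ in Theorem~\ref{thm:entropyaccumulationext} and the affine-tradeoff-function device are used, e.g.\ by enlarging $\Omega$ to pin the test-sample size down to fluctuations via a Serfling/Chernoff estimate. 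The remaining pieces — the two smooth-min-entropy chain rules with their error terms, the concentration bounds for $|S|$ and the sample size, and the choice of $\varepsilon'$ — are routine.
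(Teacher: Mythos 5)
Your overall strategy --- leftover hashing, smooth-entropy chain rules to strip the error-correction transcript, and Theorem~\ref{thm:entropyaccumulationext} with an uncertainty-relation-based min-tradeoff function --- is the same as the paper's, but two steps in how you feed the protocol into the entropy accumulation theorem would fail as written. The main one is that you place Bob's outcome $\bar{A}_i$ into the side-information register ``$B_i$''. The Markov conditions~\eqref{eq_Markovgen} then require $A_1^{i-1}$ to be independent of $\bar{A}_i$ given $B_1^{i-1}\bar{B}_1^{i-1}\bar{A}_1^{i-1}E$, and this fails for a general initial state $\rho^0_{Q_1^n\bar{Q}_1^nE}$: Eve can correlate $Q_j$ (for $j<i$) with $\bar{Q}_i$ without recording that correlation in $E$, so Bob's round-$i$ outcome is not fresh side information. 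Your justification (``the bases are fresh local randomness'') covers $B_i\bar{B}_i$ but not $\bar{A}_i$. The paper avoids this by accumulating entropy for the pair $A_i\bar{A}_i$ (i.e., putting $\bar{A}_i$ on the $A$ side) and removing $\bar{A}_1^n$ only at the end via
\begin{align*}
H_{\min}^{\eps}(A_1^n|B_1^n\bar{B}_1^nE)_{\rho_{|\Omega}} \geqslant H_{\min}^{\eps/4}(A_1^n\bar{A}_1^n|B_1^n\bar{B}_1^nE)_{\rho_{|\Omega}} - H_{\max}^{\eps/4}(\bar{A}_1^n|A_1^nB_1^n\bar{B}_1^nE)_{\rho_{|\Omega}} - O(1) \ ,
\end{align*}
where the max-entropy term costs only $\mu^2 n + o(n)$ because $\bar{A}_i=\perp$ except with probability $\mu^2$.

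The second problem is the single-round bound: $H(A_i|B_i\bar{B}_iR)\geqslant 1-H_{\mathrm{Sh}}(e)$ is not available for this protocol. The uncertainty relation gives it only for the rounds in which Alice measures in the computational basis; for the rounds in which she measures diagonally (probability $\mu$), the ``complementary argument'' you invoke would require an estimate of the computational-basis error rate, which the protocol never produces. The honest per-round bound therefore carries a prefactor --- the paper uses $H(A_i|B_i\bar{B}_iR)_{\nu}\geqslant (1-\mu)^2 H(A_i|R)_{\nu_{|0}}$ --- and this is exactly where the $2\mu$ in~\eqref{eq_secrecycondition} comes from (together with the $\mu^2$ paid for removing $\bar{A}_1^n$). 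Relatedly, your accounting pays an additional $\approx 2\mu n$ in the sifting step $H_{\min}(A_S|\cdot)\geqslant H_{\min}(A_1^n|\cdot)-(n-|S|)$; combined with the corrected single-round bound this only yields $r<1-H_{\mathrm{Sh}}(e)-\vartheta_{\mathrm{EC}}-3\mu$, weaker than the theorem. The paper sidesteps the sifting cost entirely by defining $A_i=\perp$ whenever $B_i\neq\bar{B}_i$, so that $A_1^n$ and $A_S$ carry the same information. Both issues are fixable, but as written the application of Theorem~\ref{thm:entropyaccumulationext} is not valid and the stated constant is not recovered.
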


Note that, because $\mu > 0$ can be chosen arbitrarily small, the theorem implies that the E91 protocol can generate secret keys at an asymptotic rate of $1-H_{\mathrm{Sh}}(e) - \vartheta_{\mathrm{EC}}$.  We now show how this result can be obtained using the notion of entropy accumulation.

\begin{proof}
According to a standard result on two-universal hashing (see, for instance, Corollary~5.6.1 of~\cite{Renner05}), the key $F(A_S)$ computed in the privacy amplification step is secret to an adversary holding information $E'$ if the smooth min-entropy of $A_S$ conditioned on $E'$ is sufficiently larger than the output size of the hash function $F$. Since, in our case, this size is $\lfloor r n \rfloor$,  the condition reads  
\begin{align} \label{eq_PACriterion}
n r \leqslant H_{\min}^{\eps}(A_S | E')_{\rho_{|\Omega}} - O(1) \ ,
\end{align}
where the entropy is evaluated for the joint state $\rho_{|\Omega}$ of $A_S$ and $E'$ conditioned on the event $\Omega$ that the protocol is not aborted and that Bob's guess $\hat{A}_S$ of $A_S$ is correct. The smoothing parameter $\eps \in (0,1)$ specifies the desired level of secrecy,\footnote{Roughly, $\eps$ corresponds to the maximum probability by which one could encounter a deviation from perfect secrecy~\cite{PortmannRenner}.} and we assume here that it is constant (independent of $n$).  Because conditioning the smooth min-entropy of a classical variable on an additional bit cannot decrease its value by more than~$1$ (see, e.g., Proposition~5.10 of~\cite{Tom12}), we may bound the smooth min-entropy in~\eqref{eq_PACriterion} by 
\begin{align} \label{eq_ECbound}
  H_{\min}^\eps(A_S | E')_{\rho_{|\Omega}} \geqslant H_{\min}^\eps(A_S | B_1^n \bar{B}_1^n E)_{\rho_{|\Omega}} - |S| \vartheta_{\mathrm{EC}}
   \geqslant H_{\min}^\eps(A_S | B_1^n \bar{B}_1^n E)_{\rho_{|\Omega}} - n \vartheta_{\mathrm{EC}} \ ,
\end{align}
where $E$ denotes all information held by Eve after the distribution step, and where $|S| \vartheta_{\mathrm{EC}}$ is the maximum number of bits exchanged for error correction.  Note that we also included the basis information $B_1^n$ and $\bar{B}_1^n$ in the conditioning part because Eve may obtain this information during the sifting and information reconciliation step.  We are thus left with the task of lower bounding $H_{\min}^\eps(A_S|B_1^n \bar{B}_1^n E)_{|\rho_{|\Omega}}$, which is usually the central part of any security proof. Since it is also the part where entropy accumulation is used, we formulate it separately as Claim~\ref{claim_QKD} below. Inserting this claim into~\eqref{eq_ECbound}, we conclude that the secrecy condition~\eqref{eq_PACriterion} is fulfilled whenever
\begin{align*}
 n r  \leqslant  n \bigl(1-H_{\mathrm{Sh}}(e) - \vartheta_{\mathrm{EC}} - 2\mu \bigr) - o(n) 
\end{align*}
holds. But this is clearly the case for any choice of parameters satisfying~\eqref{eq_secrecycondition}, provided that~$n$ is sufficiently large. 
\end{proof}

It remains to show the separate claim, which we do using entropy accumulation. 

\begin{claim} \label{claim_QKD}
Let $A_1^n$, $B_1^n$, $\bar{B}_1^n$, and $S$ be the information held by Alice and Bob as defined by the protocol, let $E$ be the information gathered by Eve during the distribution step, and let $\Omega$ be the event that the protocol is not aborted and that Bob's guess $\hat{A}_S$ of $A_S$ is correct. Then, provided that $\Omega$ has a non-negligible probability (i.e., it does not decrease exponentially fast in $n$), 
\begin{align} \label{eq_QKDtoprove}
  H_{\min}^\eps(A_S | B_1^n \bar{B}_1^n E)_{\rho_{|\Omega}} > n \bigl(1 - 2\mu - H_{\mathrm{Sh}}(e)\bigr) - o(n) \ .
\end{align}
\end{claim}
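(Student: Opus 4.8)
The plan is to recast the protocol in its entanglement-based form and apply the entropy accumulation theorem (Corollary~\ref{cor:EAT}), with the diagonal-basis parameter estimation playing the role of the statistical test. First I would purify: the preparation of the pairs $(Q_i,\bar Q_i)$ together with Eve's intervention on the transmitted halves is replaced by a single pure state $\rho_{Q_1^n\bar Q_1^n E}$, where Alice holds $Q_1^n$, Bob holds the systems $\bar Q_1^n$ he actually receives, and $E$ purifies the rest; this $E$ will be the side information of Theorem~\ref{thm:entropyaccumulationext}. The step-$i$ channel $\cM_i$ then draws Alice's basis bit $B_i$ (diagonal with probability $\mu$), measures $Q_i$ in that basis to obtain $A_i$, measures Bob's qubit $\bar Q_i$ (carried in $R_{i-1}$) in basis $\bar B_i$ to obtain $\bar A_i$, and outputs: the register $A_i$, set to a dummy symbol unless $B_i=\bar B_i=0$ (so that the accumulated ``$A$''-system is $A_{S_0}$, Alice's computational-sifted outcomes); the register $B_i^{\mathrm{EAT}}=(B_i,\bar B_i)$ together with $\bar A_i$ \emph{only when $\bar B_i=1$}; and a test symbol $X_i$ equal to $\times$ when $B_i=\bar B_i=1$ and $A_i\neq\bar A_i$, to $\checkmark$ when $B_i=\bar B_i=1$ and $A_i=\bar A_i$, and to $\perp$ otherwise. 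By construction $X_i$ is a deterministic function of the $A_i$- and $B_i^{\mathrm{EAT}}$-registers and $\rho_{A_1^nB_1^nX_1^nE}$ has the form~\eqref{eq_rhomap}; moreover, since only Bob's diagonal-basis outcomes are ever placed on the conditioning side, dropping the extra registers gives $H_{\min}^\eps(A_S\mid B_1^n\bar B_1^n E)_{\rho_{|\Omega}}\ge H_{\min}^\eps(A_{S_0}\mid B_1^n\bar B_1^n(\bar A_i)_{i:\bar B_i=1}E)_{\rho_{|\Omega}}$, and the latter is exactly the quantity Corollary~\ref{cor:EAT} will bound.

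Next I would identify the frequency constraint and the tradeoff function. On $\Omega$ the error-correction guess $\hat A_S$ is correct, so the parameter-estimation count equals $|\{i:X_i=\times\}|$; hence $\Omega$ implies $\freq{X_1^n}(\times)\le e\mu^2$. For the min-tradeoff function I would use a single-round entropic uncertainty relation for the trusted qubit $Q_i$: for any $\nu\in\Sigma_i(q)$, conditioned on $B_i=\bar B_i=0$ (probability $(1-\mu)^2$), $A_i$ is a computational-basis measurement of $Q_i$ and $H(A_i\mid B_i R)\ge 1-H_{\mathrm{Sh}}(e_X)$, where $e_X$ is the disagreement probability of the complementary (diagonal) measurement relative to $\bar Q_i$; the relation $q(\times)=\mu^2 e_X$ forces $e_X=q(\times)/\mu^2$, and all other branches (unequal bases, or $B_i=\bar B_i=1$ where $A_i$ is already correlated with the revealed $\bar A_i$) contribute a nonnegative amount. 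Thus $f(q):=(1-\mu)^2\bigl(1-H_{\mathrm{Sh}}(q(\times)/\mu^2)\bigr)$, suitably truncated, is a differentiable convex min-tradeoff function (Definition~\ref{def_tradeoff}) that is decreasing in $q(\times)$, so $\inf\{f(q):q(\times)\le e\mu^2\}=h:=(1-\mu)^2(1-H_{\mathrm{Sh}}(e))$.

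Then I would invoke Corollary~\ref{cor:EAT} with this $f$ and the event $\Omega$, obtaining $H_{\min}^\eps(A_S\mid B_1^n\bar B_1^n E)_{\rho_{|\Omega}}>nh-c\sqrt n$, where $c=O\bigl(\sqrt{1-2\log(\eps\,\rho[\Omega])}\bigr)$ is a constant (here $d_A=2$ and $\|\nabla f\|_\infty$ is bounded), so $c\sqrt n=o(n)$ whenever $\rho[\Omega]$ is non-negligible. A short computation shows $nh=n(1-\mu)^2(1-H_{\mathrm{Sh}}(e))\ge n\bigl(1-2\mu-H_{\mathrm{Sh}}(e)\bigr)$, which yields exactly~\eqref{eq_QKDtoprove}.

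The hard part will be verifying the Markov conditions $A_1^{i-1}\leftrightarrow B_1^{i-1}E\leftrightarrow B_i$ required by~\eqref{eq_Markovgen}: the freshly drawn basis bits cause no difficulty, but Bob's revealed diagonal outcome $\bar A_i$ is obtained by measuring a qubit that Eve may, a priori, have correlated across rounds with Alice's already-measured qubits. This is precisely why it matters that we accumulate only the computational-sifted outcomes $A_{S_0}$ (so that Alice's \emph{diagonal} outcomes on earlier rounds, which could be perfectly correlated with $\bar A_i$ under a swap-type attack, never enter the ``$A$''-side) and reveal Bob's outcome only in the diagonal basis; one then has to argue that the remaining potential correlations between a computational-basis $A_j$ ($j<i$) and a diagonal-basis $\bar A_i$ are excluded — morally because they would be exposed by the diagonal test (forcing $\rho[\Omega]$ to be negligible), so one may restrict attention to states for which the Markov conditions hold. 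Pinning down this reduction, together with the exact single-round uncertainty relation used for the tradeoff function, is where the real work of the proof lies.
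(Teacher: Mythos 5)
Your overall strategy --- entanglement-based reformulation, entropy accumulation with a min-tradeoff function obtained from a single-round entropic uncertainty relation, linearisation via Corollary~\ref{cor:EAT}, and the arithmetic $(1-\mu)^2\bigl(1-H_{\mathrm{Sh}}(e)\bigr) \geqslant 1-2\mu-H_{\mathrm{Sh}}(e)$ --- matches the paper's. But there is a genuine gap exactly where you suspect it, and the way out you sketch does not work. By placing Bob's diagonal outcome $\bar{A}_i$ on the conditioning side of the entropy accumulation theorem, you turn the required Markov condition $A_1^{i-1} \leftrightarrow B_1^{i-1} E \leftrightarrow B_i$ into a condition involving the outcome of measuring a qubit $\bar{Q}_i$ that Eve may have correlated arbitrarily with Alice's earlier qubits; conditioning on the quantum system $E$ does not screen off such correlations, and the condition will simply fail for adversarial states. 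Theorem~\ref{thm:entropyaccumulationext} needs the Markov conditions to hold \emph{exactly} for the state~\eqref{eq_rhomap} --- there is no mechanism for ``restricting attention to states for which they hold'' or for arguing that violations would be exposed by the diagonal test (Appendix~\ref{app_Markov} shows the conditions cannot be dropped), so the reduction you defer to is not available within this framework.

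The paper's proof avoids the problem by a different split of registers: it applies the EAT with $A_i \rightarrow A_i\bar{A}_i$ and $B_i \rightarrow B_i\bar{B}_i$ only, i.e., Bob's outcome $\bar{A}_i$ is kept on the $A$-side. The ``future side information'' at step $i$ is then just the freshly and independently drawn basis bits $B_i\bar{B}_i$, so the Markov conditions hold trivially, and one obtains
\begin{align*}
  H_{\min}^{\eps/4}(A_1^n\bar{A}_1^n \mid B_1^n\bar{B}_1^n E)_{\rho_{|\Omega}} > n\bigl(1-2\mu+\mu^2-H_{\mathrm{Sh}}(e)\bigr) - o(n) \ .
\end{align*}
The unwanted $\bar{A}_1^n$ is stripped off only at the end, via the chain rule for smooth entropies (Theorem~15 of~\cite{VDTR13}), at a cost of $H_{\max}^{\eps/4}(\bar{A}_1^n \mid A_1^n B_1^n \bar{B}_1^n E)_{\rho_{|\Omega}} \leqslant \mu^2 n + o(n)$, since $\bar{A}_i \neq \perp$ only with probability $\mu^2$; the $+\mu^2$ in the single-round bound exactly absorbs this cost. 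If you reorder your argument in this way --- accumulate first with $\bar{A}_i$ on the $A$-side, remove $\bar{A}_1^n$ afterwards --- the rest of your proposal goes through essentially as the paper's proof.
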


\begin{proof}
Let $\rho^0_{Q_1^n \bar{Q}_1^n E}$ be the joint state of Alice and Bob's qubit pairs before measurement, together with the information $E$ gathered by Eve during the distribution step, and let
\begin{align*}
  \rho_{A_1^n \bar{A}_1^n B_1^n \bar{B}_1^n X_1^n E} = (\cM_n \circ \cdots \circ \cM_1 \otimes \cI_E)(\rho^0_{Q_1^n \bar{Q}_1^n E}) \ ,
\end{align*}
where $\cM_i$, for any $i \in \{1, \ldots, n\}$, is the TPCP map from $Q_i^n \bar{Q}_i^n$ to $Q_{i+1}^n \bar{Q}_{i+1}^n A_i \bar{A}_i B_i \bar{B}_i X_i$ defined as follows: 
\begin{enumerate}[(i)]
  \item $B_i, \bar{B}_i$: random bits chosen independently according to the distribution $(1-\mu, \mu)$
  \item $A_i = \begin{cases} \text{if } B_i = \bar{B}_i = 0: & \text{outcome of measurement of $Q_i$ in computational basis} \\ \text{if } B_i = \bar{B}_i = 1: & \text{outcome of measurement of $Q_i$ in diagonal basis} \\  \text{if } B_i \neq \bar{B}_i : &  \perp \end{cases}$
    \item $\bar{A}_i = \begin{cases} \text{if } B_i = \bar{B}_i = 1 : & \text{outcome of measurement of $\bar{Q}_i$ in diagonal basis} \\ \text{otherwise} : &  \perp \end{cases}$
  \item $X_i = \begin{cases} \text{if } B_i = \bar{B}_i = 1 : & A_i \oplus \bar{A}_i \\ \text{otherwise} : & \perp \end{cases}$
  \item $Q_{i+1}^n$ and $\bar{Q}_{i+1}^n$ are left untouched. 
\end{enumerate}
Note that the values $B_1^n$ and $\bar{B}_1^n$ correspond to the ones generated during the distribution step of the protocol. The same is true for $A_1^n$, with the modification that $A_i$ holds the measurement outcome only if $B_i = \bar{B}_i $. That is, $A_i \neq \perp$ if and only if $i \in S$, where $S$ is the set determined in the sifting step. We can therefore rewrite~\eqref{eq_QKDtoprove} as
\begin{align} \label{eq_QKDtoproveprime}
  H_{\min}^\eps(A_1^n | B_1^n \bar{B}_1^n E)_{\rho_{|\Omega}} > n \bigl(1 - 2\mu - H_{\mathrm{Sh}}(e)\bigr) - o(n) \ .
\end{align}

To prove this inequality, we use Theorem~\ref{thm:entropyaccumulationext} with the replacements $A_i \rightarrow A_i \bar{A}_i$, $B_i \rightarrow B_i \bar{B}_i$, $X_i \rightarrow X_i$, and $R_i \rightarrow Q_{i+1}^n \bar{Q}_{i+1}^n$. We note that $X_i$ is a deterministic function of the classical registers $A_i \bar{A}_i$ and $B_i \bar{B}_i$. To obtain the bound in~\eqref{eq_QKDtoproveprime}, we need to define a min-tradeoff function. Let $i \in \{1, \ldots, n\}$ and consider the state
\begin{align*}
  \nu_{X_i A_i \bar{A}_i B_i \bar{B}_i R} = \tr_{Q_{i+1}^n \bar{Q}_{i+1}^n}(\cM_i \otimes \cI_R)(\omega_{Q_{i}^n \bar{Q}_{i}^n R}) \ , 
\end{align*}
where  $\omega_{Q_{i}^n \bar{Q}_{i}^n R}$ is an arbitrary state. Let furthermore $\nu_{|b} = \nu_{X_i A_i \bar{A}_i R | b}$ be the corresponding state obtained by conditioning on the event that $B_i = \bar{B}_i = b$, for $b \in \{0, 1\}$. We may now bound the entropy of $A_i$ using the entropic uncertainty relation proved in~\cite{BCCRR10}, which asserts that
\begin{align*}
  H(A_i | R)_{\nu_{|0}} \geqslant 1 - H(A_i | \bar{A}_i)_{\nu_{|1}} \ .
\end{align*}
By the definition of $X_i$, we also have
\begin{align*}
  H(A_i | \bar{A}_i)_{\nu_{|1}} = H(X_i)_{\nu_{|1}} = H_{\mathrm{Sh}}\left({\textstyle \frac{\nu_{X_i}(1)}{\nu_{X_i}(0) + \nu_{X_i}(1)}}\right)  =  H_{\mathrm{Sh}}\left({\textstyle \frac{\nu_{X_i}(1)}{\mu^2}}\right) \ ,
\end{align*}
where we wrote $\nu_{X_i}$ to denote the probability distribution on $\{0, 1, \bot\}$ defined by the state~$\nu$, and where we have used that $\nu_{X_i}(0) + \nu_{X_i}(1) = \mu^2$. Furthermore, because $A_i$ is classical, its von Neumann entropy cannot be negative, which implies that
\begin{align*}
  H(A_i | B_i \bar{B}_i R)_{\nu} \geqslant \nu_{B_i \bar{B}_i}(0,0) H(A_i | R)_{\nu_{|0}} = (1-\mu)^2 H(A_i | R)_{\nu_{|0}} \geqslant  H(A_i | R)_{\nu_{|0}}  - 2 \mu + \mu^2 \ .
\end{align*}
Combining this with the above, we find that
\begin{align*}
  H(A_i \bar{A}_i | B_i \bar{B}_i R)_{\nu} \geqslant H(A_i | B_i \bar{B}_i R)_{\nu} \geqslant  \tilde{f}(\nu_{X_i}) 
\end{align*}
holds for
\begin{align*}
  \tilde{f}(q) = \begin{cases} 1 - 2 \mu + \mu^2 - H_{\mathrm{Sh}}\left({\textstyle \frac{q(1)}{\mu^2}} \right)  & \text{if $q(0) + q(1) = \mu^2$} \\ 1 & \text{otherwise.} \end{cases}
\end{align*}
In other words, $\tilde{f}$ is a min-tradeoff function for $\cM_i$. Furthermore, because the binary Shannon entropy $H_{\mathrm{Sh}}$ is concave, $\tilde{f}$ is convex. We may thus define a linearised min-tradeoff function $f$ as a tangent hyperplane to $\tilde{f}$ at the point $q_0$ given by $q_0(0) = (1-e) \mu^2$, $q_0(1) = e \mu^2$, and $q_0(\bot) = 1-\mu^2$.  Furthermore, we define 
\begin{align*}
  h = f(q_0) = \tilde{f}(q_0) = 1 - 2 \mu + \mu^2  - H_{\mathrm{Sh}}(e) \ .
\end{align*}
Finally, note that the event $\Omega$ that Bob's guess of $A_S$ is correct and that the protocol is not aborted implies that $q = \freq{X_1^n}$ is such that $\frac{q(1)}{\mu^2} \leqslant e$ and, hence, $f(\freq{X_1^n}) \geqslant h$. Since we assumed that $\Omega$ has non-negligible probability, Theorem~\ref{thm:entropyaccumulationext} implies that
\begin{align*}
  H_{\min}^{\eps/4}(A_1^n \bar{A}_1^n | B_1^n \bar{B}_1^n E)_{\rho_{|\Omega}} > n h - o(n) = n \bigl(1 - 2\mu + \mu^2 - H_{\mathrm{Sh}}(e)\bigr) - o(n) \ .
\end{align*}
(Note that the Markov chain conditions are satisfied because $B_i$ and $\bar{B}_i$ are chosen at random independently of any other information.) Furthermore, because $\bar{A}_i$ equals $\bot$ unless $B_i = \bar{B}_i = 1$, which occurs with probability~$\mu^2$, we have 
\begin{align*}
  H_{\max}^{\frac{\eps}{4}}(\bar{A}_1^n | A_1^n B_1^n \bar{B}_1^n E)_{\rho_{|\Omega}} 
  \leqslant H_{\max}^{\frac{\eps}{4}}(\bar{A}_1^n | B_1^n \bar{B}_1^n)_{\rho_{|\Omega}} 
\leqslant \mu^2 n + o(n) \ .
\end{align*}
 Combining these inequalities with the chain rule for smooth entropies (see Theorem~15 of~\cite{VDTR13}),
\begin{align*}
  H_{\min}^\eps(A_1^n | B_1^n \bar{B}_1^n E)_{\rho_{|\Omega}}
  \geqslant H_{\min}^{\eps/4}(A_1^n \bar{A}_1^n| B_1^n \bar{B}_1^n E)_{\rho_{|\Omega}} - H_{\max}^{\eps/4}(\bar{A}_1^n | A_1^n B_1^n \bar{B}_1^n E)_{\rho_{|\Omega}} - O(1) \ ,
\end{align*}
proves~\eqref{eq_QKDtoproveprime} and, hence, Claim~\ref{claim_QKD}.
\end{proof}

\subsection{Sample application: Fully quantum random access codes}
\label{sec:fqrac}

One relatively simple application of our main result is to give upper bounds on the fidelity achieved by so-called \emph{Fully Quantum Random Access Codes (FQRAC)}. An FQRAC is a method for encoding $m$ message qubits into $n < m$ code qubits, such that any subset of $k$ message qubits can be retrieved with high fidelity. Limits on the performance of random access codes with classical messages are rather well understood: the case $k=1$ was studied in \cite{n99,antv99,antv02}, and upper bounds on the success probability that decay exponentially in $k$ were derived in \cite{bARdW08,w10,DFW13}. In the fully quantum case, \cite{DFW13} gives similar upper bounds on the fidelity that decay exponentially in $k$. Here, we show that such exponential bounds for the fully quantum case can be obtained in a relatively elementary fashion via the concept of entropy accumulation.  The example also highlights that entropy accumulation is already useful in its basic form~\eqref{eq_entropyaccumulation}, which does not involve statistics information $X_i$. Indeed, here the bound on the entropy produced at every step comes from the bound on the number of code qubits.

\begin{definition} A \emph{$(\varepsilon,m,n,k)$-Fully Quantum Random Access Code (FQRAC)} consists of an encoder $\mathcal{E}_{{M'}_1^m \rightarrow C_1^n}$ and a decoder $\mathcal{D}_{C_1^n S \rightarrow \bar{M}_S S}$, where ${M'}_1^m$ represents the $m$ message qubits, $C_1^n$ represents the $n$ code qubits, $S$ represents a classical description of a subset of $\{1,\dots,m\}$ of size $k$, and $\bar{M}_S$ represents the output of the decoder, corresponding to the $k$ positions of ${M'}_1^m$ listed in $S$. Such a code must satisfy the following: for any state $\rho_{R {M'}_1^m S}$ which is classical on $S$, we must have that
    \[ F\big(\mathcal{S}(\rho_{R {M'}_1^m S}), (\mathcal{D} \circ \mathcal{E})(\rho_{R {M'}_1^m S}) \big)^2 \geqslant 1 - \varepsilon, \]
    where $R$ is a reference system of arbitrary dimension, and where $\mathcal{S}_{ {M'}_1^m S \rightarrow \bar{M}_S S}$ is a TPCP map that selects the $k$ positions of ${M'}_1^m$ corresponding to those in $S$ and outputs them into~$\bar{M}_S$. Moreover, $F(\rho,\sigma) := \| \sqrt{\rho} \sqrt{\sigma}\|_1$ refers to the fidelity between two states $\rho$ and $\sigma$.
\end{definition}

Entropy accumulation gives the following constraint on FQRACs:

\begin{theorem}
    A $(\varepsilon,m,n,k)$-FQRAC satisfies
    \begin{align} \label{eq_QRACbound}
       1-\eps = f^2 < 2^{-k \left(\frac{m-n-k+1}{ 5 m} \right)^2 + 3} \ .
    \end{align}
 \end{theorem}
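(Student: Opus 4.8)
The plan is to view the decoding of the $k$ chosen message qubits as a sequential process to which entropy accumulation (Theorem~\ref{thm:entropyaccumulationext}, or rather its simplified form Corollary~\ref{cor_diffi}) applies, and then to convert the resulting min-entropy lower bound into a contradiction with the high-fidelity requirement when $n$ is too small relative to $k$. Concretely: fix the subset $S = \{s_1 < \dots < s_k\}$ and consider the reference system $R$ purifying the message register; the decoder $\mathcal{D}_{C_1^n S \to \bar M_S S}$ together with the encoder gives a state on $\bar M_S$ which, for a good FQRAC, is $\varepsilon$-close (in fidelity) to the state $\mathcal{S}(\rho)$ that simply copies out the $k$ true message qubits. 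Since $\bar M_S$ is produced from only $n$ code qubits, the total uncertainty Eve (i.e.\ the reference $R$ together with whatever is outside the code) retains about $\bar M_S$ is small; quantitatively, $H_{\min}^{\varepsilon'}(\bar M_S \mid R)$ cannot exceed roughly $2n - (\text{something})$ because the $C_1^n$ register has only $n$ qubits. On the other hand, I would build a map $\cM_i$ that, at step $i$, produces the $i$-th decoded qubit $\bar M_{s_i}$ (say by running the decoder on a fresh copy of the code and a description of $S$ restricted appropriately, feeding forward the code register as $R_i$), and lower-bound $H(\bar M_{s_i} \mid \text{side info})$ at each step using the near-perfect fidelity: decoding position $s_i$ well forces the conditional state there to be close to maximally mixed given the reference, so each step contributes close to $1$ bit of entropy. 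Summing via Corollary~\ref{cor_diffi} then gives $H_{\min}^{\varepsilon'}(\bar M_S \mid R) \gtrsim k - O(\sqrt{k})$, and comparing with the upper bound $\lesssim 2n - (m - \text{stuff})$ — more precisely tracking that of the $m$ message qubits only $n$ are encoded, so $m-n$ worth of entropy must be ``missing'' — yields an inequality of the form $k - O(\sqrt{k}) \le$ (an expression that is small unless $n$ is large), which rearranges into \eqref{eq_QRACbound}.

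The key steps, in order, are: (1) set up the sequential circuit, with $\cM_i$ outputting the $i$-th retrieved qubit and passing the code state (and a pointer into $S$) forward as $R_i$, and verify the Markov conditions — here they should hold trivially or near-trivially because the ``$B_i$'' side information can be taken empty or classical and independent; (2) establish the per-step von Neumann entropy lower bound from the fidelity guarantee, i.e.\ show $\inf_{\omega} H(\bar M_{s_i} \mid B_i R)_{\cM_i(\omega)} \ge 1 - \delta(\varepsilon)$ for an explicit $\delta$ with $\delta \to 0$ as $\varepsilon \to 0$ — this is a single-qubit continuity/Fannes-type estimate combined with the fact that the $k$ decoded qubits must jointly reproduce a pure-state-purified message, so the decoded register is genuinely close to maximally mixed given $R$; (3) apply Corollary~\ref{cor_diffi} to get $H_{\min}^{\varepsilon'}(\bar M_S \mid B_1^k R) > k(1-\delta) - c\sqrt{k}$ with $c = O(1)$ since $d_A = 2$; (4) prove the matching upper bound: since $\bar M_S$ is obtained from $C_1^n$ (at most $n$ qubits) together with $S$ and the post-measurement residue, dimension counting / subadditivity gives $H_{\min}^{\varepsilon'}(\bar M_S \mid B_1^k R) \le 2n - m + k + O(1)$ (the $2n$ from the code qubits, the $-m+k$ accounting for the entanglement the true messages share with $R$ that the $k$-qubit output must inherit) — I would derive this from a smooth-entropy chain rule as in~\cite{VDTR13} applied to $C_1^n$ versus $\bar M_S$; (5) combine (3) and (4), choose the free smoothing parameters, and optimize the constants to land on the exponent $\left(\frac{m-n-k+1}{5m}\right)^2$ with the additive $+3$.

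The main obstacle I expect is step (4), the upper bound on $H_{\min}^{\varepsilon'}(\bar M_S \mid R)$: it is the place where the constraint ``only $n$ code qubits'' actually bites, and getting the bookkeeping right — tracking exactly how the $m-n$ qubits' worth of missing capacity, the $k$ retrieved positions, and the smoothing errors interact — is delicate, and it is also where the specific constant $5$ and the linear-in-$\varepsilon$ dependence (note \eqref{eq_QRACbound} reads $1-\varepsilon = f^2 < \dots$, so one is really bounding $f^2$, equivalently $\varepsilon$ away from $1$) must be engineered. A secondary subtlety is that the fidelity hypothesis is a global statement about all $k$ positions jointly and for arbitrary purifying $R$, and one must be careful to extract from it a clean per-step von Neumann entropy bound that is uniform over the optimization in the definition of the tradeoff function; the device-independence-style argument of using a single affine tradeoff function $f(q) = 1-\delta$ (a constant, since there is no statistics register $X_i$ here) should make this manageable. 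Everything else is routine: Corollary~\ref{cor_diffi} does the heavy lifting of turning per-step entropy into total entropy, and the final inequality is just algebra.
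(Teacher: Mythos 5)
There is a genuine gap, and it sits exactly where you place the fidelity hypothesis. In Theorem~\ref{thm:entropyaccumulationext} (and Corollary~\ref{cor_diffi}) the tradeoff function must lower-bound $\inf_{\omega} H(A_i | B_i R)_{\cM_i(\omega)}$ over \emph{all} input states $\omega_{R_{i-1}R}$ to the channel $\cM_i$ — it is a worst-case statement about the map, not about the state the protocol actually produces. The FQRAC fidelity guarantee, by contrast, is a statement about one specific state (encoder applied to the chosen input, then decoder). So your step (2), which tries to extract a per-step bound $\inf_\omega H(\bar M_{s_i}\mid B_i R)_{\cM_i(\omega)} \geqslant 1-\delta(\eps)$ from the fidelity hypothesis, cannot work: nothing constrains the conditional entropy of $\cM_i(\omega)$ for adversarially chosen $\omega$. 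Moreover, even for the protocol's own state the claimed bound has the wrong sign: if $R$ purifies the messages, a well-decoded qubit $\bar M_{s_i}$ is nearly maximally \emph{entangled} with the corresponding part of $R$, so $H(\bar M_{s_i}\mid R)\approx -1$, not $+1$; summing would give $\approx -k$, and the comparison with your step (4) upper bound of the form $2n-m+k$ no longer produces an inequality of the advertised shape (exponential decay of $1-\eps$ in $k$ for fixed $(m-n)/m$).

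The paper's proof uses the two hypotheses in exactly the opposite roles, which is what makes the argument go through. The EAT is applied to the \emph{reference} halves $\hat M_i$ of $m$ EPR pairs (extracted one at a time at a uniformly random position $\bar J_i$, which also supplies the Markov conditions), conditioned on the $n$ code qubits as $E$. The per-step bound is pure dimension counting — averaging over the random index and using a chain rule gives $H(\hat M_i\mid \hat M_1^{i-1}\bar J_1^i C_1^n)\geqslant -(n+k-1)/m$ for \emph{every} input state, which is the worst-case-valid bound the tradeoff function requires. The fidelity hypothesis enters only \emph{outside} the accumulation: decodability implies $\rho^k$ is within purified distance $\sqrt{1-f^2}$ of $k$ EPR pairs, hence $H_{\max}^{\sqrt{1-f^2}}(\hat M_1^k\mid \bar J_1^k C_1^n)\leqslant -k$, and comparing this upper bound with the accumulated lower bound (after relating $H_{\min}$ and $H_{\max}$) yields~\eqref{eq_QRACbound}. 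If you want to salvage your plan, you would need to swap the hypotheses back into these roles; as written, the core step fails.
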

 
Compared to the previously derived bound (Theorem~9 of~\cite{DFW13}), the one obtained here is tighter for small~$k$,\footnote{The bound of Theorem~9 of~\cite{DFW13} has a pre-factor of the order of~$m$ and is therefore only non-trivial if $k > \log m$.} whereas it is weaker for large~$k$. 

\begin{proof}
    Since the fidelity bound must be true for any state $\rho$, it must in particular be true for the state consisting of $m$ maximally entangled pairs and a uniform distribution over subsets~$S$. For every $i \in \{1,\dots,k\}$, define
   \begin{align*}
      \cM_i : M_1^{m-i+1} \rightarrow M_1^{m-i} \bar{J}_i \hat{M}_i 
   \end{align*}
    as a TPCP map that does the following:
    \begin{enumerate}
        \item Generate an index $\bar{J}_i$ at random from $\{ 1,\dots,m - i +1 \}$.
        \item Move the contents of $M_{\bar{J}_i}$ into $\hat{M}_i$, and set $M_1^{m-i}$ to the contents of $M_1^{m-i+1}$ with the $\bar{J}_i$th position removed.
    \end{enumerate}
    Finally, define the state
    \[ \rho^i_{M_1^{m-i} \hat{M}_1^i C_1^n \bar{J}_1^i} := (\cM_i \circ \dots \circ \cM_1 \circ \mathcal{E})(\proj{\Phi^{\otimes m}}_{M_1^m {M'}_1^m}), \]
    where $\ket{\Phi}_{M_i M_i'} := \frac{1}{\sqrt{2}}\left( \ket{00} + \ket{11} \right)$. The next step is to use Theorem~\ref{thm:entropyaccumulationext} on the state $\rho^k$ with the identifications 
  \begin{align*}  
    A_i \rightarrow \hat{M}_i \qquad  B_i \rightarrow \bar{J}_i \qquad E \rightarrow C_1^n 
 \end{align*}
   and the tradeoff function $f$ being the constant function equal to 
    \begin{align*}
    \inf_{i, \nu^i} H(\hat{M}_i | \hat{M}_1^{i-1} \bar{J}_1^i C_1^n)_{\nu^i} \ ,
        \end{align*} where the infimum is taken over states $\nu^i$ of the form
        \begin{align*}
      \nu^i_{\hat{M}_1^{i} \bar{J}_1^i C_1^n} = \cM_i\left(\omega^i_{\hat{M}_1^{i-1} \bar{J}_1^{i-1} M_1^{m-i+1} C_1^n} \right) \ ,
    \end{align*}
    for some state $\omega^i$. Here we also used Remark~\ref{rem_Rdimensiona}, which asserts that the system $R$ that is used when defining the min-tradeoff function can be chosen isomorphic to $A_1^{i-1} B_1^{i-1} E$.    Note that the Markov chain condition is immediate from the fact that $\bar{J}_i$ is chosen at random. As the systems $X_i$ are trivial, we naturally take $\Omega$ to be the certain event.
     We find that
    \begin{align*}
        H_{\min}^{f/2}(\hat{M}_1^k | \bar{J}_1^k C_1^n)_{\rho^k} &\geqslant k \inf_{i, \nu^i} H(\hat{M}_i|\hat{M}_1^{i-1} \bar{J}_1^i C_1^n)_{\nu^i} - \sqrt{4 k \log \frac{8}{f^2}}  \log 5 \ .
    \end{align*}
     Furthermore, again by Remark~\ref{rem_Rdimensiona}, if part of $B$ is classical in $\rho$, then it remains classical in~$\nu$.  As a result, we can assume in the following that $\bar{J}_{1}^{i-1}$ is a classical system in $\nu^i$. 

    We continue by computing the expectation over the choice of $\bar{J}_i$:
    \begin{align}
        H(\hat{M}_i|\hat{M}_1^{i-1} \bar{J}_1^i C_1^n)_{\nu^i} &= \frac{1}{m-i+1} \sum_{j_i = 1}^{m-i+1} H(M_{j_i} | \hat{M}_1^{i-1} C_1^n \bar{J}_1^{i-1})_{\omega^i}\\
        &\geqslant \frac{1}{m} \sum_{j_i = 1}^{m-i+1} H(M_{j_i} | M_1^{j_i-1} \hat{M}_1^{i-1} C_1^n \bar{J}_1^{i-1})_{\omega^i}\\
        &= \frac{1}{m} H(M_{1}^{m-i+1} | \hat{M}_1^{i-1} C_1^n \bar{J}_1^{i-1})_{\omega^i}\\
        & = \frac{1}{m}  \left( H(M_{1}^{m-i+1} \hat{M}_1^{i-1} C_1^n | \bar{J}_1^{i-1})_{\omega^i} - H(\hat{M}_1^{i-1} C_1^n | \bar{J}_1^{i-1})_{\omega^i}\right) \\
        &\geqslant \frac{-n-k+1}{m} \ ,
    \end{align}
    where the last inequality holds because $\bar{J}_i$ is classical, which implies that the first entropy in the bracket of the penultimate expression is non-negative, and because the second entropy in the bracket is upper bounded by $n+k-1$. 
    
We now use Proposition~5.5 and Remark~5.6 of~\cite{Tom12}, which imply that\footnote{Because $\arcsin(f/2) + \arcsin(\sqrt{1-f^2}) < \arcsin(f) + \arcsin(\sqrt{1-f^2}) = \pi/2$, the condition of Remark~5.6 of~\cite{Tom12} is satisfied.}
\begin{multline*}
  H_{\max}^{\sqrt{1-f^2}}(\hat{M}_1^k | \bar{J}_1^k C_1^n)_{\rho^k}
  \geqslant H_{\min}^{f/2}(\hat{M}_1^k | \bar{J}_1^k C_1^n)_{\rho^k} - \log \frac{1}{1-\bigl(f^2/2  + \sqrt{1-f^2} \sqrt{1-f^2/4}\bigr)^2} \\
    \geqslant H_{\min}^{f/2}(\hat{M}_1^k | \bar{J}_1^k C_1^n)_{\rho^k} - \log \frac{3}{f^3} \ ,
\end{multline*}
where the second inequality holds because the denominator in the logarithm is lower bounded by $f^3/3$, as can be readily verified. Combining this with the above gives
\begin{align*}
  H_{\max}^{\sqrt{1-f^2}}(\hat{M}_1^k | \bar{J}_1^k C_1^n)_{\rho^k}
  \geqslant - k \left( \frac{n+k-1}{m} \right)  - \sqrt{4 k \log \frac{8}{f^2}}  \log 5  - \log \frac{3}{f^3}  \ .
\end{align*}
Conversely, note that, by assumption, the purified distance between $\rho^k$ and the state consisting of $k$ maximally entangled qubit pairs is upper bounded by $\sqrt{1-(1-\eps)} = \sqrt{1-f^2}$. Since the max-entropy of $k$ maximally entangled qubit pairs equals $-k$, we have 
\begin{align*}
  H_{\max}^{\sqrt{1-f^2}}(\hat{M}_1^k | \bar{J}_1^k C_1^n)_{\rho^k} \leqslant -k \ .
  \end{align*}
  We have thus derived the condition
\begin{align}
\label{eq_condition_fqrac}
  \sqrt{4 k \log \frac{8}{f^2}}  \log 5
  \geqslant k \left( \frac{m-n-k+1}{m} \right) -  \log \frac{3}{f^3}   \ .
\end{align}
It is easy to verify that this condition is violated whenever 
\begin{align}
\label{eq_necessary_cond_fqrac}
  \log \frac{8}{f^2} > k \left(\frac{m-n-k+1}{ 5 m} \right)^2 
\end{align}
is violated. In fact, if $\log \frac{8}{f^2} \leqslant k \left(\frac{m-n-k+1}{ 5 m} \right)^2$, then we have
\begin{align*}
4k \log \frac{8}{f^2} \log^2 5 &\leqslant \frac{4 \log^2 5}{25} k^2 \left(\frac{m-n-k+1}{ m} \right)^2 \ , \text{ and } \\
\log \frac{3}{f^3} &\leqslant \frac{3}{2} \log \frac{8}{f^2} \leqslant \frac{3}{50} k \left(\frac{m-n-k+1}{ m} \right)^2 \leqslant \frac{3}{50} k \left(\frac{m-n-k+1}{ m} \right) \ .
\end{align*}
Adding the square root of the first inequality and the second one, we get that inequality~\eqref{eq_condition_fqrac} is violated. Thus, the condition~\eqref{eq_necessary_cond_fqrac} must hold, and therefore also~\eqref{eq_QRACbound}.
\end{proof}

\section{Conclusions} \label{sec_conclusions}

Informally speaking, entropy accumulation is the claim that the operationally relevant entropy (the smooth min- or max-entropy) of a multipartite system is well approximated by the sum of the von Neumann entropies of its individual parts.  This has ramifications in various areas of science, ranging from quantum cryptography to thermodynamics. 

As described in Section~\ref{sec_applications}, current cryptographic security proofs have various fundamental and practical limitations~\cite{QKDSecurityReview}. That these can be circumvented using entropy accumulation has already been demonstrated in~\cite{AFRV19} for the case of device-independent cryptography. We anticipate that the approach can be applied similarly to other cryptographic protocols.  Examples include quantum key distribution protocols  such as DPS and COW~\cite{InoHon05,COWProtocol2005}, for which full security has  not yet been established.\footnote{These protocols do not have the required symmetries to employ standard techniques such as de Finetti-type theorems~\cite{Ren07}.} One may also expect to obtain significantly improved security bounds for protocols that involve high-dimensional information carriers and, in particular,  continuous-variable protocols~\cite{GroGra02,Weedbrooketal04}.\footnote{The security of continuous-variable protocols against general attacks has been proved~\cite{CirRen09}, but the bounds have an unfavourable scaling in the finite-size regime.}  A strengthening of current security claims may as well be obtained for other cryptographic constructions, such as bit commitment and oblivious transfer protocols (see, for example, \cite{serge:bounded,KWW09,DFW13}). 

Entropy accumulation can also be used in statistical mechanics, e.g., to characterise thermalisation processes.  At the beginning of Section~\ref{sec_applications} we outlined an argument that could confirm | and make precise | the intuition that entropy production (in terms of von Neumann entropy) is relevant for thermalisation. However, to base such arguments on physically realistic assumptions, it may be necessary to generalise Theorem~\ref{thm:entropyaccumulationext} to the case where the Markov conditions~\eqref{eq_Markovgen} do not hold exactly. One possibility, motivated by the main result of~\cite{FR2015}, could be to replace them by the less stringent conditions 
  \begin{align} \label{eq_MarkovRelaxed}
    H(B_i | B_1^{i-1} E) \approx  H(B_i | A_1^{i-1} B_1^{i-1} E)  \ .
  \end{align}
  Another promising direction would be to apply entropy accumulation to estimate the entropy of  low-energy states of many-body systems. One may expect that, under appropriate physical assumptions, these states possess a structure that permits a decomposition of the form described by Figure~\ref{fig:rho-structure} such that the Markov conditions required for Theorem~\ref{thm:entropyaccumulationext}, or at least some relaxations of them such as~\eqref{eq_MarkovRelaxed}, hold. This may for example be the case for systems whose states are well approximated by matrix products states (see, e.g., \cite{VerCir06}).  We leave the investigation of such applications, as well as the development of corresponding extensions of the entropy accumulation theorem, for future work. 

\begin{appendices}

\section{The function $\| \cdot \|_\alpha$} \label{app_alpha}

We use an extension of the Schatten $\alpha$-norm to the regime where $\alpha > 0$, which is defined for any operator $X = X_{B \leftarrow A}$ from a space $A$ to a space $B$ by
\begin{align*}
  \left\| X \right\|_\alpha = \tr\left( (X^{\dagger} X)^{\frac{\alpha}{2}} \right)^{\frac{1}{\alpha}} \ .
\end{align*}
It follows from the Singular Value Theorem that $\| X \|_\alpha = \| X^{\dagger} \|_\alpha = \| X^\intercal \|_{\alpha} =  \| \overline{X} \|_{\alpha}$  (see, e.g,  Section~2 of~\cite{WatrousQILectures}), from which it also follows that 
\begin{align*}
    \left\| X \right\|_\alpha = \tr\left( (X X^{\dagger})^{\frac{\alpha}{2}} \right)^{\frac{1}{\alpha}} \ .
\end{align*}
Note also that
\begin{align} \label{eq_alphatwo}
      \left\| X \right\|_{2 \alpha}^2 =  \left\| X^{\dagger} X \right\|_{\alpha} =  \left\| X X^{\dagger} \right\|_{\alpha} \ .
\end{align}

The following is Lemma~12 from~\cite{MDSFT13}.

\begin{lemma} \label{lem_normdual}
  For any non-negative operator $X$ and for any $\alpha \in \mathbb{R}^+$
  \begin{align} \label{eq_normdual}
    \left\| X \right\|_\alpha = \begin{cases} \sup_Z \tr(X Z^{\alpha'})  & \text{if $\alpha' \geqslant 0$} \\  \inf_Z  \tr(X Z^{\alpha'}) & \text{if $\alpha' \leqslant 0$} \end{cases} 
  \end{align}
  where  the supremum and infimum range over density operators $Z$. 
\end{lemma}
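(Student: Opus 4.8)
The idea is to exhibit the extremizing $Z$ explicitly and then verify optimality with a single trace inequality. Write $\alpha'=(\alpha-1)/\alpha$, so that $\alpha'\geq 0$ iff $\alpha\geq 1$, $\alpha'\leq 0$ iff $\alpha\leq 1$, and $\alpha'<1$ for every $\alpha\in\mathbb{R}^+$. The case $X=0$ is trivial, and $\alpha=1$ (i.e.\ $\alpha'=0$, $Z^{\alpha'}$ the projector onto $\mathrm{supp}(Z)$) is immediate since $\tr(X Z^0)\leq\tr X=\|X\|_1$ with equality for any $Z$ of full support. So assume $X\neq 0$ and $\alpha\neq 1$. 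For $\alpha<1$ we read $\tr(X Z^{\alpha'})$ as $+\infty$ whenever $\mathrm{supp}(X)\not\subseteq\mathrm{supp}(Z)$ (the convention under which Definitions~\ref{def_sandwichedentropy} and~\ref{def_sandwichedentropy_uparrow} make sense); those $Z$ then do not affect the infimum, so we may pass to $\mathrm{supp}(X)$ and assume $X$ invertible. The candidate extremizer in both cases is $Z_\star=X^\alpha/\tr(X^\alpha)$. The first step is the elementary identity $\tr(X Z_\star^{\alpha'})=\|X\|_\alpha$: since $\alpha\alpha'=\alpha-1$ we get $X Z_\star^{\alpha'}=X^\alpha/\tr(X^\alpha)^{\alpha'}$, and since $1-\alpha'=1/\alpha$ this traces to $\tr(X^\alpha)^{1/\alpha}=\|X\|_\alpha$. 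Thus $Z_\star$ is admissible and attains the value $\|X\|_\alpha$; it remains to show it is a maximizer when $\alpha'\geq 0$ and a minimizer when $\alpha'\leq 0$.

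For this I would change variables to $W=Z^{\alpha'}$ (positive definite), which makes the objective $\tr(X Z^{\alpha'})=\tr(XW)$ \emph{linear} in $W$, while the constraint $\tr Z=1$ becomes $\tr(W^{1/\alpha'})=1$. Crucially, $t\mapsto t^{1/\alpha'}$ is convex on $(0,\infty)$ in every relevant case: $1/\alpha'\geq 1$ when $\alpha\geq 1$, and $1/\alpha'<0$ when $\alpha<1$. Writing $W_\star=Z_\star^{\alpha'}$, Klein's inequality applied to the convex function $g(t)=t^{1/\alpha'}$ (with $g'(t)=\tfrac{1}{\alpha'}t^{1/\alpha'-1}$) gives, for every admissible $W$,
\[
0=\tr(W^{1/\alpha'})-\tr(W_\star^{1/\alpha'})\;\geq\;\tfrac{1}{\alpha'}\,\tr\big(W_\star^{1/\alpha'-1}(W-W_\star)\big)=\tfrac{1}{\alpha'}\Big(\tfrac{\tr(XW)}{\|X\|_\alpha}-1\Big)\ ,
\]
where the last equality uses the bookkeeping $W_\star=X^{\alpha-1}/\tr(X^\alpha)^{\alpha'}$ together with $(\alpha-1)(1/\alpha'-1)=1$ and $\alpha'(1/\alpha'-1)=1-\alpha'=1/\alpha$, which make $W_\star^{1/\alpha'-1}=X/\|X\|_\alpha$ (and $\tr(W_\star^{1/\alpha'})=1$, consistent with admissibility). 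Since $1/\alpha'$ has the sign of $\alpha-1$, this reads: $\tr(XW)\leq\|X\|_\alpha$ for all admissible $W$ when $\alpha>1$ (so $Z_\star$ is a maximizer), and $\tr(XW)\geq\|X\|_\alpha$ for all admissible $W$ when $\alpha<1$ (so $Z_\star$ is a minimizer). This is exactly the two cases of the lemma. For $\alpha\geq 1$ one may alternatively skip Klein's inequality and use the Hölder inequality for Schatten norms directly: $\tr(X Z^{\alpha'})\leq\|X\|_\alpha\,\|Z^{\alpha'}\|_{\alpha/(\alpha-1)}=\|X\|_\alpha$, since $\alpha'\cdot\tfrac{\alpha}{\alpha-1}=1$ forces $\|Z^{\alpha'}\|_{\alpha/(\alpha-1)}=\tr(Z)^{(\alpha-1)/\alpha}=1$.

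I do not expect a real obstacle here; the argument is short once set up this way. The two points requiring a little care are (i) the support/convention bookkeeping in the case $\alpha<1$, which is what lets us treat the infimum as being over $Z$ invertible on $\mathrm{supp}(X)$, and (ii) the exponent arithmetic showing $W_\star^{1/\alpha'-1}$ is proportional to $X$ — this single identity is what closes the whole proof. The one genuine idea is the reparametrization $Z\mapsto W=Z^{\alpha'}$: it linearizes the objective and convexifies the constraint, so that optimality of $Z_\star$ reduces to one application of Klein's inequality, with the direction of the inequality dictated automatically by the sign of $\alpha'$, i.e.\ by whether $\alpha$ is above or below $1$.
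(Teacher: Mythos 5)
The paper does not actually prove this lemma: it is imported verbatim as Lemma~12 of~\cite{MDSFT13}, so there is no in-paper argument to compare against. Judged on its own terms, your proof is essentially right and is a legitimate self-contained derivation. The identity $\tr(XZ_\star^{\alpha'})=\|X\|_\alpha$ for $Z_\star=X^\alpha/\tr(X^\alpha)$ is correct (the exponent arithmetic $\alpha\alpha'=\alpha-1$, $1-\alpha'=1/\alpha$, $(\alpha-1)(1/\alpha'-1)=1$ all checks out), the substitution $W=Z^{\alpha'}$ does linearise the objective while turning the normalisation into the level set of a convex trace functional, and Klein's inequality for $g(t)=t^{1/\alpha'}$ then delivers optimality of $Z_\star$ with the direction of the inequality dictated by the sign of $\alpha'$. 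The H\"older shortcut for $\alpha\geqslant 1$ is also correct and is closest in spirit to the proof in~\cite{MDSFT13}.

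The one step that does not survive scrutiny as written is the support reduction for $\alpha<1$. You ``pass to $\mathrm{supp}(X)$ and assume $X$ invertible'', but the infimum ranges over all density operators $Z$, including those whose support strictly contains $\mathrm{supp}(X)$. For such $Z$ one has $PZ^{\alpha'}P\neq (PZP)^{\alpha'}$ (with $P$ the projector onto $\mathrm{supp}(X)$), so the problem does not restrict to $\mathrm{supp}(X)$; and on the ambient space $W_\star=Z_\star^{\alpha'}$ is singular while $g(t)=t^{1/\alpha'}$ blows up at $0$, so Klein's inequality cannot be applied with $B=W_\star$ there. (Operator Jensen does not rescue the compression step either, since $t\mapsto t^{\alpha'}$ is operator convex only for $\alpha'\in[-1,0]$, i.e.\ $\alpha\geqslant\frac12$.) The repair is short and standard: restrict instead to $\mathrm{supp}(Z)$ --- legitimate, since $X$ lives there and all powers are computed there, and now $Z$ and hence $W$ are invertible --- then prove the bound for $X+\eps\,\id$ and let $\eps\to 0$, both sides being continuous in $X$. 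Alternatively, the whole $\alpha<1$ case follows in one line from the three-exponent H\"older inequality, mirroring your $\alpha\geqslant 1$ argument and avoiding supports altogether:
\begin{align*}
  \|X\|_\alpha=\bigl\|X^{\frac12}Z^{\frac{\alpha'}{2}}Z^{-\frac{\alpha'}{2}}\bigr\|_{2\alpha}^2
  \leqslant \bigl\|X^{\frac12}Z^{\frac{\alpha'}{2}}\bigr\|_{2}^2\,\bigl\|Z^{-\frac{\alpha'}{2}}\bigr\|_{\frac{2\alpha}{1-\alpha}}^2
  =\tr(XZ^{\alpha'})\cdot\tr(Z)^{\frac{1-\alpha}{\alpha}}=\tr(XZ^{\alpha'})\ ,
\end{align*}
using $\frac{1}{2\alpha}=\frac12+\frac{1-\alpha}{2\alpha}$ and $-\alpha'\cdot\frac{\alpha}{1-\alpha}=1$. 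With either repair your argument is complete.
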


\section{Properties of the sandwiched R\'enyi entropies} \label{app_sandwich}

The sandwiched R\'enyi entropy from Definition~\ref{def_sandwichedentropy} is a special case of the sandwiched R\'enyi relative entropy, which is defined as follows.

\begin{definition} \label{def_sandwichedrelativeentropy}
  For two density operators $\rho$ and $\sigma$ on the same Hilbert space and for $\alpha \in (0, 1) \cup (1, \infty)$ the \emph{sandwiched relative R\'enyi entropy of order $\alpha$} is defined as
  \begin{align*}
    D_{\alpha}(\rho \| \sigma) = \frac{1}{\alpha'} \log \left\| \rho^{\frac{1}{2}}  \sigma^{\frac{-\alpha'}{2}} \right\|_{2 \alpha}^2  \ ,
   \end{align*}
   where $\alpha' = \frac{\alpha - 1}{\alpha}$.
\end{definition}

In particular, for a bipartite density operator $\rho_{A B}$, the sandwiched $\alpha$-R\'enyi entropy of $A$ conditioned on $B$ is related to this relative entropy by
\begin{align} \label{eq_condrelentropy}
  H_{\alpha}(A|B)_{\rho} = - D_{\alpha}(\rho_{A B} \| \id_A \otimes \rho_B) \ .
\end{align}
It turns out that this is not the only way to define a conditional entropy based on a relative entropy. One popular alternative is to replace the marginal $\rho_B$ by a maximisation over arbitrary density operators on $B$:
\begin{align} \label{eq_condrelentropy_uparrow}
    H^{\uparrow}_{\alpha}(A|B)_{\rho} = - \inf_{\sigma_B} D_{\alpha}(\rho_{A B} \| \id_A \otimes \sigma_B) \ .
\end{align}
We refer to~\cite{TomamichelBertaHayashi} for a comparison of the different notions.

The following Lemma corresponds to Eq.~19 of~\cite{MDSFT13}. For its proof, it is convenient to represent vectors of product systems as matrices. Let $\{\ket{i}_A\}$ and $\{\ket{j}_B\}$ be fixed orthonormal bases of $A$ and $B$, respectively. For any vector
  \begin{align*}
    \ket{\psi }_{A B} = \sum_{i, j} \gamma_{i,j} \ket{i}_A \otimes \ket{j}_B \in A \otimes B 
  \end{align*}
  we define the linear operator 
  \begin{align*}
    \mathrm{op}_{B \leftarrow A}(\ket{\psi}) =  \sum_{i,j} \gamma_{i,j} \ket{j}_B \bra{i}_A  \ .
  \end{align*}
  We emphasise that this definition is basis-dependent. Therefore, in expressions that involve this operator as well as the transpose operation $Z \mapsto Z^{\intercal}$, it is understood that both are taken with respect to the same basis. It is straightforward to prove the following properties (see, e.g., Section~2.4 of~\cite{WatrousQILectures}). For any operators $X_{A' \leftarrow A}$ and $Y_{B' \leftarrow B}$,
   \begin{align} \label{eq_opinc}
      Y_{B' \leftarrow B} \mathrm{op}_{B \leftarrow A}(\ket{\psi})  X_{A' \leftarrow A}^\intercal
     =  \mathrm{op}_{B \leftarrow A}\bigl((X_{A' \leftarrow A} \otimes Y_{B' \leftarrow B}) \ket{\psi} \bigr) \ .
   \end{align}
   Furthermore, 
      \begin{align} \label{eq_oppart}
            \mathrm{op}_{B \leftarrow A}(\ket{\psi}) \mathrm{op}_{B \leftarrow A} (\ket{\psi})^{\dagger} = \tr_A(\proj{\psi}) 
          \quad \text{and} \quad
             \mathrm{op}_{B \leftarrow A}(\ket{\psi})^{\dagger} \mathrm{op}_{B \leftarrow A}(\ket{\psi}) =    \tr_B(\proj{\psi})^\intercal
   \end{align}
   and, hence, 
   \begin{align} \label{eq_opnorm}
     \left\| \mathrm{op}_{B \leftarrow A}(\ket{\psi}) \right\|_2 = \| \ket{\psi} \| \ .
   \end{align}

\begin{lemma} \label{lem_Halphavector}
  For any density operators $\rho$ and $\sigma$ on the same Hilbert space and for $\alpha \in (0, 1) \cup (1, \infty)$ we have
  \begin{align*}
    D_{\alpha}(\rho \| \sigma) = \sup_{\tau} \frac{1}{\alpha'} \log  \left\| \bigl( \sigma^{\frac{-\alpha'}{2}} \otimes \tau^{\frac{\alpha'}{2}} \bigr) \ket{\psi} \right\|^2
  \end{align*}
 where $\proj{\psi}$ is a purification of $\rho$ and where the supremum ranges over all density operators $\tau$ on the purifying system.  In particular, for any pure $\rho_{A B E} = \proj{\psi}$ we have
  \begin{align*}
    H_\alpha(A|B)_{\rho} = - \sup_{\tau_E} \frac{1}{\alpha'} \log  \left\| \bigl( \rho_B^{\frac{-\alpha'}{2}} \otimes \tau_E^{\frac{\alpha'}{2}} \bigr) \ket{\psi} \right\|^2 \ .
  \end{align*}
\end{lemma}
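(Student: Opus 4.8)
The plan is to prove Lemma~\ref{lem_Halphavector} directly from the definition of the sandwiched relative R\'enyi entropy together with the duality formula for the Schatten norm (Lemma~\ref{lem_normdual}) and the operator representation $\mathrm{op}_{B\leftarrow A}$ of pure states. First I would fix a purification $\ket{\psi}$ of $\rho$ on $AE$, where $E$ denotes the purifying system, and recall that, by definition,
\[
D_\alpha(\rho\|\sigma) = \frac{1}{\alpha'}\log\bigl\|\rho^{\frac12}\sigma^{-\frac{\alpha'}{2}}\bigr\|_{2\alpha}^2 .
\]
Using~\eqref{eq_alphatwo}, rewrite the right-hand norm as $\bigl\| \sigma^{-\frac{\alpha'}{2}}\rho\,\sigma^{-\frac{\alpha'}{2}}\bigr\|_{\alpha}$, so the task reduces to evaluating the Schatten-$\alpha$ norm of the non-negative operator $\sigma^{-\frac{\alpha'}{2}}\rho\,\sigma^{-\frac{\alpha'}{2}}$.

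Next I would apply Lemma~\ref{lem_normdual} to this operator. Writing $\beta$ for the exponent conjugate to $\alpha$ (so that $\beta$ plays the role of the "$\alpha'$" in the statement of Lemma~\ref{lem_normdual}; note $\beta=\alpha'$ up to the usual relations, and one must be careful to track the sign of $\alpha'$ to decide whether it is a supremum or infimum), the lemma gives
\[
\bigl\| \sigma^{-\frac{\alpha'}{2}}\rho\,\sigma^{-\frac{\alpha'}{2}}\bigr\|_{\alpha} = \operatorname*{ext}_{\tau_E}\ \tr\bigl(\sigma^{-\frac{\alpha'}{2}}\rho\,\sigma^{-\frac{\alpha'}{2}}\,\tau_E^{\beta}\bigr),
\]
where $\operatorname*{ext}$ is a supremum when $\alpha>1$ and an infimum when $\alpha<1$, the optimisation being over density operators on the same space as $\rho$. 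The point is then to re-express the trace in terms of the purification: since $\rho = \tr_E\proj{\psi}$, we have $\tr\bigl(X\rho\bigr) = \bra{\psi} X\otimes \id_E\ket{\psi}$ for any $X$, and more generally $\tr\bigl(\sigma^{-\frac{\alpha'}{2}}\rho\,\sigma^{-\frac{\alpha'}{2}}\,\tau^{\beta}\bigr)$ can be written, after transposing $\tau^\beta$ onto the purifying system via~\eqref{eq_oppart} and~\eqref{eq_opinc}, as $\bigl\| \bigl(\sigma^{-\frac{\alpha'}{2}}\otimes (\tau^\intercal)^{\beta/2}\bigr)\ket{\psi}\bigr\|^2$. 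Re-labelling $\tau^\intercal$ as the variable $\tau$ (legitimate since transposition is a bijection on density operators) and matching exponents ($\beta/2 = \alpha'/2$) then yields exactly
\[
D_\alpha(\rho\|\sigma) = \operatorname*{ext}_{\tau}\ \frac{1}{\alpha'}\log\bigl\|\bigl(\sigma^{-\frac{\alpha'}{2}}\otimes\tau^{\frac{\alpha'}{2}}\bigr)\ket{\psi}\bigr\|^2,
\]
and one checks that the $\operatorname*{ext}$ is always a supremum once the $\frac{1}{\alpha'}$ prefactor (negative for $\alpha<1$) is absorbed, giving the stated formula. The "in particular" clause is then immediate: take $\sigma = \id_A\otimes\rho_B$, so $\sigma^{-\alpha'/2} = \id_A\otimes\rho_B^{-\alpha'/2}$, and use~\eqref{eq_condrelentropy}, $H_\alpha(A|B)_\rho = -D_\alpha(\rho_{AB}\|\id_A\otimes\rho_B)$, with $\ket{\psi}$ a purification of $\rho_{AB}$ on $E$.

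The main obstacle I anticipate is bookkeeping rather than conceptual: correctly tracking the sign of $\alpha'$ (positive for $\alpha>1$, negative for $\alpha<1$) so that the $\sup$ in the final statement is genuinely a supremum in both regimes, and being careful that the duality lemma is applied with the right conjugate exponent and to a genuinely non-negative operator. The transpose that appears when moving an operator from $A$ (or $AE$) onto the purifying register via $\mathrm{op}_{B\leftarrow A}$ must be handled consistently with respect to a fixed basis, but since we optimise over all density operators $\tau$ this transpose disappears upon relabelling. One should also note that the identity $\bigl\|(\,\cdot\,)\ket{\psi}\bigr\|^2 = \tr\bigl((\,\cdot\,)\proj{\psi}(\,\cdot\,)^\dagger\bigr)$ together with~\eqref{eq_oppart} is precisely what converts the vector-norm expression back into the trace appearing in Lemma~\ref{lem_normdual}, so the argument is essentially a chain of equalities read in the convenient direction.
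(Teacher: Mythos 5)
Your overall strategy (definition of $D_\alpha$, the norm duality of Lemma~\ref{lem_normdual}, and the $\mathrm{op}$ representation of the purification) is the same as the paper's, but the order in which you apply these tools breaks a key step. You invoke Lemma~\ref{lem_normdual} on the operator $\sigma^{-\alpha'/2}\rho\,\sigma^{-\alpha'/2}$, which lives on the system carrying $\rho$ and $\sigma$, so the dual variable $\tau$ necessarily lives on that same system. You then claim the pointwise identity
\[
\tr\bigl(\sigma^{-\frac{\alpha'}{2}}\rho\,\sigma^{-\frac{\alpha'}{2}}\,\tau^{\alpha'}\bigr) \;=\; \bigl\|\bigl(\sigma^{-\frac{\alpha'}{2}}\otimes(\tau^{\intercal})^{\frac{\alpha'}{2}}\bigr)\ket{\psi}\bigr\|^2 ,
\]
i.e.\ that $\tau$ can be ``transposed onto the purifying system''. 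This identity is false in general: the right-hand side equals $\tr\bigl(M^\dagger\sigma^{-\alpha'}M\,\tau^{\alpha'}\bigr)$ with $M=\mathrm{op}_{A\leftarrow E}(\ket{\psi})$, and the operators $\sigma^{-\alpha'/2}MM^\dagger\sigma^{-\alpha'/2}$ (acting on $A$) and $M^\dagger\sigma^{-\alpha'}M$ (acting on $E$) are isospectral but not equal, so pairing each with a \emph{fixed} $\tau$ gives different numbers whenever $\rho$ and $\sigma$ do not commute (a $2\times 2$ example with $\rho=\mathrm{diag}(3/4,1/4)$, $\sigma^{-\alpha'}=\id+\proj{+}$ and $\tau=\proj{0}$ already separates the two sides). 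The mirror relation $(X\otimes\id)\ket{\psi}=(\id\otimes X^{\intercal})\ket{\psi}$ that would justify such a move holds only for the unnormalised maximally entangled vector, not for a general purification.

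The fix is exactly the paper's one extra step: \emph{before} invoking duality, write $\rho=MM^\dagger$ via~\eqref{eq_oppart} and use the invariance $\|YY^\dagger\|_\alpha=\|Y^\dagger Y\|_\alpha$ with $Y=\sigma^{-\alpha'/2}M$ to rewrite $\bigl\|\sigma^{-\alpha'/2}\rho\,\sigma^{-\alpha'/2}\bigr\|_\alpha=\bigl\|M^\dagger\sigma^{-\alpha'}M\bigr\|_\alpha$. The operator now lives on the purifying system, Lemma~\ref{lem_normdual} produces a dual variable $\tau$ on $E$, and the resulting trace $\tr\bigl(M^\dagger\sigma^{-\alpha'}M\tau^{\alpha'}\bigr)=\bigl\|\sigma^{-\alpha'/2}M\tau^{\alpha'/2}\bigr\|_2^2$ converts exactly, via~\eqref{eq_opinc} and~\eqref{eq_opnorm} and the harmless transpose relabelling you already note, into $\bigl\|(\sigma^{-\alpha'/2}\otimes\tau^{\alpha'/2})\ket{\psi}\bigr\|^2$. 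Your bookkeeping of the sign of $\alpha'$ (so that the extremum is a supremum in both regimes) and the ``in particular'' clause are fine as stated.
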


\begin{proof}
  Let us denote by $A$ the Hilbert space on which $\rho$ and $\sigma$ act and by $E$ the purifying space, so that $\ket{\psi}$ is a vector on $A \otimes E$. Then, using~\eqref{eq_alphatwo} and~\eqref{eq_oppart}, the sandwiched R\'enyi entropy can be written as
  \begin{align*}
    D_\alpha(\rho \| \sigma)
    & = \frac{1}{\alpha'} \log \left\| \sigma^{\frac{-\alpha'}{2}} \rho \sigma^{\frac{-\alpha'}{2}} \right\|_\alpha \\
    & = \frac{1}{\alpha'} \log \left\| \sigma^{\frac{-\alpha'}{2}} \mathrm{op}_{A \leftarrow E}(\ket{\psi}) \mathrm{op}_{A \leftarrow E}(\ket{\psi})^{\dagger} \sigma^{\frac{-\alpha'}{2}} \right\|_\alpha \\
        & = \frac{1}{\alpha'} \log \left\|  \mathrm{op}_{A \leftarrow E}(\ket{\psi})^{\dagger} \sigma^{\frac{-\alpha'}{2}} \sigma^{\frac{-\alpha'}{2}} \mathrm{op}_{A \leftarrow E}(\ket{\psi})  \right\|_\alpha \ .
  \end{align*}
  Using Lemma~\ref{lem_normdual} as well as~\eqref{eq_alphatwo} and \eqref{eq_opinc} we obtain
  \begin{align*}
      D_\alpha(\rho \| \sigma)
    & = \sup_{\tau}  \frac{1}{\alpha'} \log \tr\left(   \mathrm{op}_{A \leftarrow E}(\ket{\psi})^{\dagger} \sigma^{\frac{-\alpha'}{2}} \sigma^{\frac{-\alpha'}{2}} \mathrm{op}_{A \leftarrow E}(\ket{\psi})  \tau^{\alpha'} \right) \\
    & = \sup_{\tau}  \frac{1}{\alpha'} \log  \left\|  \sigma^{\frac{-\alpha'}{2}} \mathrm{op}_{A \leftarrow E}\bigl(\ket{\psi}\bigr)  \tau^{\frac{\alpha'}{2}} \right\|_2^2 \\
    & = \sup_{\tau}  \frac{1}{\alpha'} \log  \left\| \mathrm{op}_{A \leftarrow E}\bigl(\sigma^{\frac{-\alpha'}{2}} \otimes \tau^{\frac{\alpha'}{2}}  \ket{\psi}\bigr) \right\|_{2}^2 \ ,
      \end{align*}
  where the supremum is taken over density operators $\tau$ on $E$.  The first equality of the lemma then follows by~\eqref{eq_opnorm}. Finally, the second equality is obtained via~\eqref{eq_condrelentropy}. 
\end{proof}

The next lemma concerns the conditioning on classical information. 

\begin{lemma}[Proposition 5.1 of \cite{Tom15}] \label{lem_classicalsideinformation}
  For any density operator $\rho_{A B X}$ which is classical on $X$, i.e., 
  \begin{align*}
    \rho_{A B X} = \sum_{x \in \cX} p_x \, \rho_{A B | x} \otimes \proj{x}_X \ ,
  \end{align*}
  where $\rho_{A B | x}$ are density operators on $A \otimes B$ and $\{\ket{x}\}_{x \in \cX}$ is an orthonormal basis of $X$, we have
  \begin{align*}
    H_\alpha(A | B X)_{\rho} = \frac{1}{1-\alpha} \log \sum_{x \in \cX} p_x 2^{(1-\alpha) H_\alpha(A|B)_{\rho_{|x}}} \ .
  \end{align*}
\end{lemma}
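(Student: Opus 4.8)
The plan is to use the closed-form expression for the sandwiched conditional R\'enyi entropy recorded right after Definition~\ref{def_sandwichedentropy}, namely
\begin{align*}
  H_\alpha(A|BX)_\rho = \frac{1}{1-\alpha}\log\tr\!\left(\bigl(\rho_{BX}^{\frac{1-\alpha}{2\alpha}}\,\rho_{ABX}\,\rho_{BX}^{\frac{1-\alpha}{2\alpha}}\bigr)^{\alpha}\right),
\end{align*}
and to exploit the fact that everything is block-diagonal with respect to the classical register $X$. First I would compute the marginal $\rho_{BX} = \sum_x p_x\,\rho_{B|x}\otimes\proj{x}_X$ and observe that, since the blocks $\proj{x}_X$ are mutually orthogonal, any (generalised) power acts blockwise: $\rho_{BX}^{\frac{1-\alpha}{2\alpha}} = \sum_x p_x^{\frac{1-\alpha}{2\alpha}}\,\rho_{B|x}^{\frac{1-\alpha}{2\alpha}}\otimes\proj{x}_X$. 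Here one should note that the support of $\rho_{AB|x}$ is contained in that of $\id_A\otimes\rho_{B|x}$ (the same footnote-level remark used to define $\rho_{A|B}$ in the preliminaries), so the generalised inverse causes no trouble, and terms with $p_x=0$ simply drop out.

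Next I would substitute into the sandwiched operator. Because $\rho_{ABX}$ is itself block-diagonal in $X$, multiplying on both sides keeps the block structure and one gets
\begin{align*}
  \rho_{BX}^{\frac{1-\alpha}{2\alpha}}\,\rho_{ABX}\,\rho_{BX}^{\frac{1-\alpha}{2\alpha}}
  = \sum_{x\in\cX} p_x^{\frac{1}{\alpha}}\bigl(\rho_{B|x}^{\frac{1-\alpha}{2\alpha}}\,\rho_{AB|x}\,\rho_{B|x}^{\frac{1-\alpha}{2\alpha}}\bigr)\otimes\proj{x}_X .
\end{align*}
Raising a block-diagonal positive operator to the power $\alpha$ acts blockwise, so this becomes $\sum_x p_x\bigl(\rho_{B|x}^{\frac{1-\alpha}{2\alpha}}\rho_{AB|x}\rho_{B|x}^{\frac{1-\alpha}{2\alpha}}\bigr)^{\alpha}\otimes\proj{x}_X$, and taking the trace gives $\sum_x p_x\tr\bigl((\rho_{B|x}^{\frac{1-\alpha}{2\alpha}}\rho_{AB|x}\rho_{B|x}^{\frac{1-\alpha}{2\alpha}})^{\alpha}\bigr)$. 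Recognising each summand via the same closed-form expression as $2^{(1-\alpha)H_\alpha(A|B)_{\rho_{|x}}}$, and applying $\frac{1}{1-\alpha}\log(\cdot)$, yields exactly the claimed identity.

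There is essentially no analytic obstacle here; the only points requiring a line of care are (i) justifying that the generalised inverse powers and the $\alpha$-th power commute with the direct-sum structure over $x$ — this is immediate because the $\proj{x}_X$ are orthogonal projectors onto invariant subspaces — and (ii) checking the support condition so that all operators in the expression are well defined, exactly as in the definition of $\rho_{A|B}$. For $\alpha\in(0,1)$ the exponent $\frac{1-\alpha}{2\alpha}$ is positive and the argument is even more transparent; for $\alpha>1$ one uses the generalised inverse convention $XX^{-1}X=X$ from the notation table. I would also remark in passing that the identity can alternatively be read off from Lemma~\ref{lem_classicalsideinformation}-style reasoning applied to $D_\alpha(\rho_{ABX}\|\id_{AX}\otimes\rho_{BX})$ together with \eqref{eq_condrelentropy}, but the direct block-diagonal computation above is the shortest route.
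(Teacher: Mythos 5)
Your proposal is correct and follows essentially the same route as the paper, which simply states the block-diagonal identity $\bigl(\rho_{BX}^{\frac{1-\alpha}{2\alpha}}\rho_{ABX}\rho_{BX}^{\frac{1-\alpha}{2\alpha}}\bigr)^{\alpha}=\sum_x p_x\bigl(\rho_{B|x}^{\frac{1-\alpha}{2\alpha}}\rho_{AB|x}\rho_{B|x}^{\frac{1-\alpha}{2\alpha}}\bigr)^{\alpha}\otimes\proj{x}_X$ as ``straightforward to verify'' and then takes the trace. Your bookkeeping of the powers of $p_x$ ($p_x^{1/\alpha}$ before, $p_x$ after raising to the $\alpha$) and the support/generalised-inverse remarks are exactly the details the paper leaves implicit.
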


\begin{proof}
  Using the explicit form of $\rho_{A B X}$, it is straightforward to verify that
  \begin{align*}
    \left(\rho_{B X}^{\frac{1-\alpha}{2 \alpha}} \rho_{A B X} \rho_{B X}^{\frac{1-\alpha}{2 \alpha}}\right)^\alpha
    = \sum_{x \in \cX} p_x  \left(\rho_{B|x}^{\frac{1-\alpha}{2 \alpha}} \rho_{A B|x} \rho_{B|x}^{\frac{1-\alpha}{2 \alpha}}\right)^\alpha \otimes \proj{x}_X \ .
  \end{align*}
  Taking the trace on both sides, the equality can be rewritten in terms of $\alpha$-entropies as
  \begin{align*}
    2^{(1-\alpha) H_\alpha(A | B X)_{\rho} } =  \sum_{x \in \cX} p_x  2^{(1-\alpha) H_\alpha(A | B)_{\rho_{|x}} } \ ,
  \end{align*}
  which concludes the proof. 
\end{proof}

  The following lemma can be found as Lemma 3.9 in \cite{MuellerLennertMaster}; the statement and its proof are given here for the convenience of the reader. The statement and proof can also be found in~\cite[Proposition 6.5]{Tom15}.
  
  \begin{lemma}[Lemma 3.9 from \cite{MuellerLennertMaster}, a variant of Proposition 6.2 of~\cite{Tom12}]\label{lem:dmin-vs-dalpha}
  Let $\rho \in \mathrm{D}_{\leqslant}(A)$ and $\sigma \in \Pos(A)$ with $\Supp(\rho) \subseteq \Supp(\sigma)$, and define $\varepsilon_{\max} := \sqrt{2 \tr \rho - (\tr \rho)^2}$. For $\varepsilon \in (0, \varepsilon_{\max})$ and $\alpha \in (1,2]$, we have
  \[ D_{\max}^{\varepsilon}(\rho \| \sigma) \leqslant D_{\alpha}(\rho \| \sigma) + \frac{g(\varepsilon)}{\alpha-1}, \]
  where $g(\varepsilon) = -\log\left( 1 - \sqrt{1 - \varepsilon^2} \right)$, and $D_{\max}^{\varepsilon}(\rho \| \sigma) = \inf_{\tilde{\rho}} \inf \{ \lambda : \tilde{\rho} \leqslant 2^{\lambda} \sigma \}$, with the infimum ranging over all $\tilde{\rho}$ within $\varepsilon$ of $\rho$ in purified distance.
  \end{lemma}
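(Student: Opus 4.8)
Put $\lambda := D_\alpha(\rho\|\sigma) + \tfrac{g(\varepsilon)}{\alpha-1}$, so that $2^{-(\alpha-1)(\lambda-D_\alpha(\rho\|\sigma))} = 2^{-g(\varepsilon)} = 1-\sqrt{1-\varepsilon^2}$. It suffices to exhibit a single sub-normalised operator $\tilde\rho$ with $P(\rho,\tilde\rho)\le\varepsilon$ and $\tilde\rho\le 2^\lambda\sigma$, since then $D_{\max}^\varepsilon(\rho\|\sigma)\le D_{\max}(\tilde\rho\|\sigma)\le\lambda$ by definition. I would first reduce to the full-rank case $\rho,\sigma>0$ by perturbing $\rho\leftarrow(1-\delta)\rho+\tfrac\delta d\id$ and $\sigma\leftarrow(1-\delta)\sigma+\tfrac\delta d\id$, proving the inequality there and letting $\delta\to0$ by a routine limiting argument (using continuity of $D_\alpha$ and compactness to extract a limit of the perturbed $\tilde\rho$'s).

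The candidate $\tilde\rho$ is a truncation of $\rho$ at level $2^\lambda$ relative to $\sigma$: with $W:=\rho^{-1/2}\sigma\rho^{-1/2}>0$ and $\Lambda:=\min(\id,2^\lambda W)$ (functional calculus), set $\tilde\rho:=\rho^{1/2}\Lambda\rho^{1/2}$. Then $0\le\Lambda\le\id$ gives $0\le\tilde\rho\le\rho$, and $\Lambda\le 2^\lambda W$ gives $\tilde\rho\le 2^\lambda\rho^{1/2}W\rho^{1/2}=2^\lambda\sigma$; hence $D_{\max}(\tilde\rho\|\sigma)\le\lambda$. To bound the purified distance, write $\delta:=\tr[\rho-\tilde\rho]=\tr\!\big[\rho(\id-2^\lambda W)_+\big]$. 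From $0\le\tilde\rho\le\rho$ and operator monotonicity of $x\mapsto\sqrt x$ one gets $\sqrt{\tilde\rho}\le\sqrt\rho$, hence $\|\sqrt\rho\sqrt{\tilde\rho}\|_1\ge\tr[\sqrt\rho\sqrt{\tilde\rho}]\ge\tr\tilde\rho$; plugging this into the generalised-fidelity identity $F(\rho,\tilde\rho)=\|\sqrt\rho\sqrt{\tilde\rho}\|_1+\sqrt{(1-\tr\rho)(1-\tr\tilde\rho)}$ and using $\tr\tilde\rho=\tr\rho-\delta$ yields, after a short estimate, $F(\rho,\tilde\rho)\ge1-\delta$, so $P(\rho,\tilde\rho)=\sqrt{1-F^2}\le\sqrt{\delta(2-\delta)}$. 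Since $t\mapsto t(2-t)$ is increasing on $[0,1]$ and equals $\varepsilon^2$ at $t=1-\sqrt{1-\varepsilon^2}$, it is enough to prove the Markov-type bound $\delta\le1-\sqrt{1-\varepsilon^2}$.

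For the Markov bound I would use data processing of the sandwiched R\'enyi divergence. Let $P$ be the spectral projector of $\rho-2^\lambda\sigma$ onto its non-negative part; on the range of $P$ one has $\rho\ge 2^\lambda\sigma$, so $\tr[P\rho]\ge 2^\lambda\tr[P\sigma]$, and the truncation removes at most $\tr[P\rho]$ of weight, i.e.\ $\delta\le\tr[P\rho]$. Applying the data-processing inequality for $D_\alpha$ (valid for $\alpha>1$; \cite{MDSFT13,WWY13,Beigi,FrankLieb}) to the two-outcome measurement $\mathcal M(\tau)=\tr[P\tau]\,\proj{0}+\tr[(\id-P)\tau]\,\proj{1}$, writing $p:=\tr[P\rho]$ and $q:=\tr[P\sigma]$ (if $q=0$ then $p=0$ by the support hypothesis and $\delta=0$), and dropping the non-negative ``$1$''-term of the resulting classical divergence,
\[
 2^{(\alpha-1)D_\alpha(\rho\|\sigma)}\;\ge\;p^\alpha q^{1-\alpha}+(1-p)^\alpha(1-q)^{1-\alpha}\;\ge\;p^\alpha q^{1-\alpha}\;\ge\;p^\alpha\big(2^{-\lambda}p\big)^{1-\alpha}\;=\;p\cdot 2^{(\alpha-1)\lambda},
\]
where the last step uses $q\le2^{-\lambda}p$ and $1-\alpha<0$. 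Rearranging gives $\delta\le p\le 2^{(\alpha-1)(D_\alpha(\rho\|\sigma)-\lambda)}=1-\sqrt{1-\varepsilon^2}$, as required; feeding this into the previous paragraph finishes the proof.

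The step I expect to cost the most care is reconciling the two ways the threshold $2^\lambda$ enters: the truncation $\Lambda=\min(\id,2^\lambda W)$ lives in the spectral decomposition of $W=\rho^{-1/2}\sigma\rho^{-1/2}$, whereas the clean data-processing estimate needs the spectral projector of $\rho-2^\lambda\sigma$, and these operators do not commute. Verifying the inequality ``$\delta\le\tr[P\rho]$'' — i.e.\ that the weight removed by the construction is dominated by the $\rho$-weight of the ``$\rho$ too large'' region of the test $P$ — is the delicate operator-algebraic point, and it is here, together with the full-rank reduction and the hypothesis $\alpha>1$, that the real content of the lemma sits. The hypotheses $\varepsilon<\varepsilon_{\max}$ and $\alpha\le 2$ serve only to keep $g(\varepsilon)$ well defined and the smoothing ball non-degenerate, and are not otherwise used.
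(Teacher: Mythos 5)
Your overall skeleton --- set $\lambda = D_\alpha(\rho\|\sigma)+g(\varepsilon)/(\alpha-1)$, exhibit a smoothed $\tilde\rho\leqslant 2^\lambda\sigma$, and control the removed weight via data processing of $D_\alpha$ through a binary test, dropping the second term --- has the right shape and is close in spirit to the paper's argument. The gap is precisely the step you yourself flag as delicate and leave unverified: the inequality $\delta\leqslant\tr[P\rho]$, where $\delta=\tr\bigl[\rho(\id-2^\lambda W)_+\bigr]$ is the weight removed by your truncation in the eigenbasis of $W=\rho^{-1/2}\sigma\rho^{-1/2}$ and $P$ is the positive spectral projector of $\rho-2^\lambda\sigma$. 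This inequality is false as a general operator statement. Absorbing $2^\lambda$ into $\sigma$, take
\begin{align*}
\rho=\begin{pmatrix}0.9&0\\0&0.1\end{pmatrix},\qquad \sigma=\begin{pmatrix}10.9&-0.3\\-0.3&0.01\end{pmatrix}\quad(\det\sigma=0.019>0).
\end{align*}
Then $\rho-\sigma$ has unique positive eigenvalue $\approx 0.0989$ with eigenvector $\approx(0.030,\,1.000)$, so $\tr[P\rho]\approx 0.1007$; whereas $\id-W=\bigl(\begin{smallmatrix}-11.11&1\\1&0.9\end{smallmatrix}\bigr)$ has positive eigenvalue $\approx 0.983$ with eigenvector $\approx(0.082,\,0.997)$, giving $\delta\approx 0.983\times 0.1054\approx 0.1036>0.1007$. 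The structural problem is that your truncation lives in the eigenbasis of $W$ while the test that data processing can exploit lives in the eigenbasis of $\rho-2^\lambda\sigma$; conjugation by $\rho^{1/2}$ preserves signatures (Sylvester) but not weights, and in fact one always has $\delta\geqslant\tr[(\rho-2^\lambda\sigma)_+]$ with no matching upper bound in terms of $P$.

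The paper avoids this mismatch by not building $\tilde\rho$ from $W$ at all: it invokes Lemma~6.1 of~\cite{Tom12}, which supplies a $\tilde\rho\leqslant 2^\lambda\sigma$ whose purified distance is controlled directly by $\tr[\Delta]$ with $\Delta=(\rho-2^\lambda\sigma)_+$ --- a quantity that already lives in the correct eigenbasis. Writing $r_i=\bra{e_i}\rho\ket{e_i}$ and $s_i=\bra{e_i}\sigma\ket{e_i}$ in that eigenbasis, the bound $r_i\geqslant 2^\lambda s_i$ on the positive eigenspace gives $\tr[\Delta]\leqslant 2^{-\lambda(\alpha-1)}\sum_i r_i^\alpha s_i^{1-\alpha}$, and $\frac{1}{\alpha-1}\log\sum_i r_i^\alpha s_i^{1-\alpha}=D_\alpha(\cF(\rho)\|\cF(\sigma))\leqslant D_\alpha(\rho\|\sigma)$ for the pinching $\cF$ in that basis --- essentially your final display, but applied to a quantity that genuinely dominates the smoothing distance. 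To repair your argument you would need either to adopt that construction of $\tilde\rho$ (e.g.\ $\tilde\rho=G\rho G^\dagger$ with $G=2^{\lambda/2}\sigma^{1/2}(2^\lambda\sigma+\Delta)^{-1/2}$) or to find a genuinely different bound on $\delta$; as written, the proof does not go through. (A minor further point: for sub-normalised $\rho$ and unnormalised $\sigma$ the second term of the classical divergence should read $(\tr\rho-p)^\alpha(\tr\sigma-q)^{1-\alpha}$ rather than $(1-p)^\alpha(1-q)^{1-\alpha}$, but since you discard it this is harmless.)
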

  
  \begin{proof}
      Assume without loss of generality that $\sigma$ has full support. By Lemma 6.1 in \cite{Tom12}, we can find a $\lambda$ such that $\lambda \geqslant D_{\max}^{\varepsilon}(\rho \| \sigma)$ where 
      \begin{equation}\label{eqn:mml-epsilon}
          \varepsilon = \sqrt{2 \tr[\Delta] - \tr[\Delta]^2}
      \end{equation}
      and $\Delta$ is the positive part of $\rho - 2^{\lambda} \sigma$.\footnote{The positive part of a Hermitian operator $X$ is defined as $\{X > 0\} X$, where $\{X > 0\}$ is the projector onto the span of the eigenspaces of $X$ with positive eigenvalues.} It suffices to upper-bound $\lambda$ by $D_{\alpha}(\rho \| \sigma) + g(\varepsilon)/(\alpha-1)$. Now, let $\{ \ket{e_i} \}_{i \in S}$ be an orthonormal basis consisting of eigenvectors of $\rho - 2^{\lambda} \sigma$. Let $S_+$ be the subset of $S$ corresponding to positive eigenvalues. Define the non-negative numbers $r_i = \bra{e_i} \rho \ket{e_i}$ and $s_i = \bra{e_i} \sigma \ket{e_i}$. Note that for $i \in S_{+}$, we have $r_i - 2^{\lambda} s_i = \bra{e_i} (\rho - 2^{\lambda} \sigma) \ket{e_i} \geqslant 0$ and therefore $\frac{r_i}{s_i} 2^{-\lambda} \geqslant 1$. We use this to bound
      \begin{align*}
          \tr[ \Delta ] &= \sum_{i \in S_+} (r_i - 2^{\lambda} s_i) \leqslant \sum_{i \in S_+} r_i \leqslant \sum_{i \in S_+} r_i \left( \frac{r_i}{s_i} 2^{-\lambda} \right)^{\alpha - 1}\\
          &= 2^{-\lambda(\alpha-1)} \sum_{i \in S_+} r_i^{\alpha} s_i^{1-\alpha} \leqslant 2^{-\lambda(\alpha-1)} \sum_{i \in S} r_i^{\alpha} s_i^{1-\alpha}.
      \end{align*}
      Hence,
      \[ \frac{1}{\alpha-1} \log \tr[ \Delta ] \leqslant \frac{1}{\alpha-1} \log \sum_{i \in S} r_i^{\alpha} s_i^{1-\alpha} - \lambda. \]
      Now, we solve Equation (\ref{eqn:mml-epsilon}) for $\tr[\Delta]$ and bound
      \[ \lambda \leqslant \frac{1}{\alpha-1} \log \sum_{i \in S} r_i^{\alpha} s_i^{1-\alpha} - \frac{1}{\alpha-1} \log\left( 1 - \sqrt{1-\varepsilon^2} \right). \]
      It remains to upper-bound $\frac{1}{\alpha-1} \log \sum_{i \in S} r_i^{\alpha} s_i^{1-\alpha}$ by $D_{\alpha}(\rho \| \sigma)$. To this end, we define the TPCP map $\mathcal{F}(X) = \sum_{i \in S} P_i X P_i$, where $P_i$ denotes the projector onto the subspace spanned by $e_i$. Note that
      \[ D_{\alpha}\left( \mathcal{F}(\rho) \| \mathcal{F}(\sigma) \right) = \frac{1}{\alpha-1} \log \sum_{i \in S} r_i^{\alpha} s_i^{1-\alpha}. \]
      The theorem then follows from the data processing inequality.
  \end{proof}

The following two lemmas relate the entropy conditioned on a classical value $x$ to the unconditioned entropy.

\begin{lemma}\label{lem:abx-chain-rule-opt}
    Let $\rho_{AB}$ be a quantum state of the form $\rho = \sum_x p_x {\rho}_{AB|x}$, where $\{ p_x \}$ is a probability distribution over $\mathcal{X}$. Then, for any $x \in \mathcal{X}$ and any $\alpha \in (1, \infty)$,
    \begin{align}
        \label{eqn:abx-chain-rule-gtr1-opt} H^{\uparrow}_\alpha(A|B)_{\rho} - \frac{\alpha}{\alpha - 1} \log \left( \frac{1}{p_x} \right)  &\leqslant  H^{\uparrow}_{\alpha}(A|B)_{\rho_{|x}} \ .
    \end{align}
    and for $\alpha \in (0,1)$, 
        \begin{align}
        \label{eqn:abx-chain-rule-gtr1-opt-lessthan1} H^{\uparrow}_\alpha(A|B)_{\rho} - \frac{\alpha}{\alpha - 1} \log \left( \frac{1}{p_x} \right)  &\geqslant  H^{\uparrow}_{\alpha}(A|B)_{\rho_{|x}} \ .
    \end{align}
\end{lemma}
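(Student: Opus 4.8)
The plan is to reduce both inequalities to a single operator-monotonicity statement about the quantity sitting inside $D_\alpha$, combined with the trivial scaling behaviour of $D_\alpha$ under multiplication of its first argument by a positive scalar, and then to pass to the infimum defining $H^{\uparrow}_\alpha$. First I would record two facts. (a) Since $\rho_{AB} = \sum_y p_y \rho_{AB|y}$ is a sum of positive operators, $\rho_{AB} \geqslant p_x \rho_{AB|x}$. (b) For any positive operator $Z$ and any (sub-normalised) density operator $\sigma_B$, the identity from~\eqref{eq_alphatwo} gives $\|Z^{1/2}(\id_A\otimes\sigma_B)^{-\alpha'/2}\|_{2\alpha}^2 = \|(\id_A\otimes\sigma_B)^{-\alpha'/2} Z (\id_A\otimes\sigma_B)^{-\alpha'/2}\|_\alpha$, with $\alpha' = (\alpha-1)/\alpha$; moreover $X\mapsto \|X\|_\alpha = \tr(X^\alpha)^{1/\alpha}$ is non-decreasing on positive operators for every $\alpha>0$, because $0\leqslant X\leqslant Y$ forces $\lambda_k(X)\leqslant\lambda_k(Y)$ for all eigenvalues (Weyl monotonicity) and $t\mapsto t^\alpha$ is increasing on $[0,\infty)$.

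Combining (a) and (b): conjugation by $(\id_A\otimes\sigma_B)^{-\alpha'/2}$ preserves the operator order, so $\|(p_x\rho_{AB|x})^{1/2}(\id_A\otimes\sigma_B)^{-\alpha'/2}\|_{2\alpha}^2 \leqslant \|\rho_{AB}^{1/2}(\id_A\otimes\sigma_B)^{-\alpha'/2}\|_{2\alpha}^2$ for every $\sigma_B$. Applying $\frac{1}{\alpha'}\log(\cdot)$ to both sides, for $\alpha>1$ one has $\alpha'>0$ so the inequality is preserved and $D_\alpha(p_x\rho_{AB|x}\|\id_A\otimes\sigma_B) \leqslant D_\alpha(\rho_{AB}\|\id_A\otimes\sigma_B)$, whereas for $0<\alpha<1$ one has $\alpha'<0$ and the inequality reverses, giving $D_\alpha(p_x\rho_{AB|x}\|\id_A\otimes\sigma_B) \geqslant D_\alpha(\rho_{AB}\|\id_A\otimes\sigma_B)$. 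In either case I then use the scaling relation $D_\alpha(\lambda\rho\|\sigma) = D_\alpha(\rho\|\sigma) + \frac{1}{\alpha'}\log\lambda = D_\alpha(\rho\|\sigma) - \frac{\alpha}{\alpha-1}\log(1/\lambda)$, immediate from the definition and the homogeneity of $\|\cdot\|_{2\alpha}^2$, with $\lambda = p_x$, to rewrite $D_\alpha(p_x\rho_{AB|x}\|\id_A\otimes\sigma_B)$ as $D_\alpha(\rho_{AB|x}\|\id_A\otimes\sigma_B) - \frac{\alpha}{\alpha-1}\log(1/p_x)$.

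Finally I take the infimum over $\sigma_B\in\mathrm{D}_{\leqslant}(B)$. For $\alpha>1$ the previous step yields, for every $\sigma_B$, $\inf_{\sigma_B'} D_\alpha(\rho_{AB|x}\|\id_A\otimes\sigma_B') \leqslant D_\alpha(\rho_{AB}\|\id_A\otimes\sigma_B) + \frac{\alpha}{\alpha-1}\log(1/p_x)$; minimising the right-hand side over $\sigma_B$ and then negating and invoking $H^{\uparrow}_\alpha(A|B)_\rho = -\inf_{\sigma_B} D_\alpha(\rho_{AB}\|\id_A\otimes\sigma_B)$ gives~\eqref{eqn:abx-chain-rule-gtr1-opt}. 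For $0<\alpha<1$ the same computation with the inequality reversed gives $\inf_{\sigma_B} D_\alpha(\rho_{AB|x}\|\id_A\otimes\sigma_B) \geqslant \inf_{\sigma_B} D_\alpha(\rho_{AB}\|\id_A\otimes\sigma_B) + \frac{\alpha}{\alpha-1}\log(1/p_x)$, and negating yields~\eqref{eqn:abx-chain-rule-gtr1-opt-lessthan1}. I do not expect a serious obstacle; the only points requiring a little care are (i) tracking the direction of the inequality against the sign of $\alpha'$ in the two regimes $\alpha>1$ and $\alpha<1$, and (ii) the case where $\sigma_B$ is not of full rank — for $\alpha>1$ the relevant $D_\alpha$ may then be $+\infty$, in which case the inequality is trivial, so one may restrict the infimum to full-rank $\sigma_B$, while for $\alpha<1$ the exponent $-\alpha'/2$ is positive and no inverse is involved, so no issue arises.
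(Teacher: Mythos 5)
Your proposal is correct and follows essentially the same route as the paper's proof: both rest on the operator inequality $\rho_{AB} \geqslant p_x \rho_{AB|x}$ conjugated by $\sigma_B^{-\alpha'/2}$, the monotonicity of $X \mapsto \tr(X^\alpha)$ (equivalently $\|X\|_\alpha$) on positive operators, the homogeneity of $D_\alpha$ in its first argument, and finally the infimum over $\sigma_B$, with the sign of $\frac{1}{\alpha-1}$ (your $\frac{1}{\alpha'}$) flipping the inequality between the two regimes. Your explicit appeal to Weyl monotonicity and your remarks on rank-deficient $\sigma_B$ are slightly more careful than the paper's one-line justification, but the argument is the same.
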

\begin{proof}
For any $\sigma_{B}$ and $\alpha \in (1,\infty)$, we have
\begin{align*}
D_{\alpha}(\rho_{AB} \| \id_{A} \otimes \sigma_{B}) 
&= \frac{1}{\alpha - 1} \log \tr\left( \left(\sigma_B^{\frac{1-\alpha}{2\alpha}} \rho_{AB} \sigma_B^{\frac{1-\alpha}{2\alpha}} \right)^{\alpha} \right)\\
&\geqslant \frac{1}{\alpha - 1} \log \tr\left( \left(\sigma_B^{\frac{1-\alpha}{2\alpha}} p_x \rho_{AB|x} \sigma_B^{\frac{1-\alpha}{2\alpha}} \right)^{\alpha} \right)\\
&= \frac{1}{\alpha - 1} \log p_x^{\alpha} + \frac{1}{\alpha - 1} \log \tr\left( \left(\sigma_B^{\frac{1-\alpha}{2\alpha}} \rho_{AB|x} \sigma_B^{\frac{1-\alpha}{2\alpha}} \right)^{\alpha} \right)\\
&= \frac{\alpha}{\alpha - 1} \log p_x + D_{\alpha}(\rho_{AB|x} \| \id_{A} \otimes \sigma_B) \ .
\end{align*}
For the first inequality, we used the fact that $\rho_{AB} = \sum_{x'} p_{x'} \rho_{AB|x'} \geqslant p_x \rho_{AB|x}$, which implies that $\sigma_B^{\frac{1-\alpha}{2\alpha}} \rho_{AB} \sigma_B^{\frac{1-\alpha}{2\alpha}} \geqslant \sigma_B^{\frac{1-\alpha}{2\alpha}} p_x \rho_{AB|x} \sigma_B^{\frac{1-\alpha}{2\alpha}}$. We then used the fact that $y \mapsto y^{\alpha}$ is a monotone function on $[0, \infty)$. Taking the infimum over $\sigma_B$ and then multiplying both sides by $-1$, we get the desired result. The proof is the same for $\alpha \in (0,1)$ except that the direction of the inequality is reversed.
\end{proof}

\begin{lemma}\label{lem:abx-chain-rule}
    Let $\rho_{AB}$ be a quantum state of the form $\rho = \sum_x p_x {\rho}_{AB|x}$, where $\{ p_x \}$ is a probability distribution over $\mathcal{X}$. Then, for any $x \in \mathcal{X}$ and any $\alpha \in (1, 2]$,
    \begin{align}
        \label{eqn:abx-chain-rule-less1} H_{\frac{1}{\alpha}}(A|B)_{\rho} + \frac{\alpha}{\alpha - 1} \log \left( \frac{1}{p_x} \right) &\geqslant H_{\frac{1}{\alpha}}(A|B)_{\rho_{|x}}.
    \end{align}
\end{lemma}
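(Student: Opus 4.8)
The plan is to reduce the claim to Lemma~\ref{lem_classicalsideinformation} by attaching an explicit classical register $X$ that records which conditional state $\rho_{AB|x}$ was selected, and then to discard that register using the data-processing inequality for the sandwiched relative R\'enyi entropy.

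Write $\beta = \tfrac{1}{\alpha} \in [\tfrac12, 1)$ and set $\tilde\rho_{ABX} = \sum_{x'} p_{x'}\, \rho_{AB|x'} \otimes \proj{x'}_X$, so that $\tilde\rho_{AB} = \rho_{AB}$ and hence $H_{\beta}(A|B)_{\tilde\rho} = H_{\beta}(A|B)_{\rho}$. Since $\beta \geq \tfrac12$ — this is precisely where the hypothesis $\alpha \leq 2$ enters — the relative entropy $D_{\beta}$ is monotone under the partial-trace channel $\tr_X$~\cite{MDSFT13,WWY13,Beigi,FrankLieb}. Applying this to the pair $\tilde\rho_{ABX}$ and $\id_A \otimes \tilde\rho_{BX}$, which $\tr_X$ maps to $\rho_{AB}$ and $\id_A \otimes \rho_B$ respectively, gives $D_{\beta}(\rho_{AB} \| \id_A \otimes \rho_B) \leq D_{\beta}(\tilde\rho_{ABX} \| \id_A \otimes \tilde\rho_{BX})$, i.e. $H_{\beta}(A|B)_{\rho} \geq H_{\beta}(A|BX)_{\tilde\rho}$.

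Now apply Lemma~\ref{lem_classicalsideinformation} to $\tilde\rho_{ABX}$, which yields $H_{\beta}(A|BX)_{\tilde\rho} = \tfrac{1}{1-\beta}\log \sum_{x'} p_{x'}\, 2^{(1-\beta) H_{\beta}(A|B)_{\rho_{|x'}}}$. Because $1-\beta > 0$ and all summands are non-negative, $\tfrac{1}{1-\beta}\log(\cdot)$ is monotone, so dropping every term except the one indexed by the fixed $x$ gives $H_{\beta}(A|BX)_{\tilde\rho} \geq \tfrac{1}{1-\beta}\log\!\bigl(p_x\, 2^{(1-\beta) H_{\beta}(A|B)_{\rho_{|x}}}\bigr) = \tfrac{1}{1-\beta}\log p_x + H_{\beta}(A|B)_{\rho_{|x}}$. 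Chaining the two displayed inequalities and using $\tfrac{1}{1-\beta} = \tfrac{\alpha}{\alpha-1}$ yields the claim after rearranging.

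The computation is routine once the register $X$ is in place; the only delicate point, and the source of the restriction $\alpha \in (1,2]$, is that the data-processing step needs the R\'enyi parameter $\beta = 1/\alpha$ to be at least $\tfrac12$. I note that the direct operator-inequality argument behind Lemma~\ref{lem:abx-chain-rule-opt} does not adapt, since for the non-optimised entropy one would have to substitute $\rho_{B|x}$ for $\rho_B$ inside a conjugation $P \mapsto P \rho_{AB|x} P$, which is not operator monotone; routing through $D_{\beta}$ and Lemma~\ref{lem_classicalsideinformation} sidesteps this issue.
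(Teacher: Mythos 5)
Your proof is correct and follows essentially the same route as the paper's: both attach the classical register $X$, use data processing to get $H_{1/\alpha}(A|B)_{\rho} \geqslant H_{1/\alpha}(A|BX)$, and then discard all but the term indexed by $x$ in the resulting sum. The only cosmetic difference is that you invoke Lemma~\ref{lem_classicalsideinformation} for the decomposition of $H_{1/\alpha}(A|BX)$, whereas the paper writes out that computation inline.
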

\begin{proof}
    We define the state $\rho_{ABX} = \sum_{x'} p_{x'} \rho_{AB|x'} \otimes \proj{x'}_X$. Note that it is legitimate to use the notation $\rho$ as the reduced state on $A \otimes B$ corresponds to $\rho_{AB}$. As conditioning can only decrease the entropy, we obtain
    \begin{align*}
        H_{\frac{1}{\alpha}}(A|B)_{\rho} 
        &\geqslant H_{\frac{1}{\alpha}}(A|BX)_{\rho}\\
        &= \frac{\alpha}{\alpha - 1} \log \tr\left[ \left( \rho_{BX}^{\frac{\alpha-1}{2}} \rho_{ABX} \rho_{BX}^{\frac{\alpha-1}{2}}\right)^{\frac{1}{\alpha}} \right]\\
        &= \frac{\alpha}{\alpha-1} \log \sum_{x'} p_{x'} \tr\left[ \left( \rho_{B|x'}^{\frac{\alpha-1}{2}} \rho_{AB|x'} \rho_{B|x'}^{\frac{\alpha-1}{2}}\right)^{\frac{1}{\alpha}} \right]\\
        &\geqslant \frac{\alpha}{\alpha-1} \log p_{x} \tr\left[ \left( \rho_{B|x}^{\frac{\alpha-1}{2}} \rho_{AB|x} \rho_{B|x}^{\frac{\alpha-1}{2}}\right)^{\frac{1}{\alpha}} \right]\\
        &= H_{\frac{1}{\alpha}}(A|B)_{\rho_{|x}} - \frac{\alpha}{\alpha-1} \log \left( \frac{1}{p_x} \right)
   \ .
    \end{align*}
\end{proof}

\begin{lemma}
\label{lem_det_func}
  Let $\cE$ be a TPCP map from $A \otimes B$ to $A \otimes B \otimes X$ defined by $\cE(W_{AB}) = \sum_{y,z} (\Pi_{y,A} \otimes \Pi_{z,B}) W_{AB} (\Pi_{y,A} \otimes \Pi_{z,B}) \otimes \proj{t(y,z)}_{X}$, where $t : \cY \times \cZ \to \cX$ is a (deterministic) function,  $\{\Pi_{y,A}\}_{y \in \cY}$ and $\{\Pi_{z, B}\}_{z \in \cZ}$  are mutually orthogonal projectors acting on $A$ and $B$, respectively, and $\{\ket{x}\}_{x \in \cX}$ is an orthonormal basis on $X$. Let $\rho_{ABX} = \cE(\omega_{AB})$, for an arbitrary state $\omega_{AB}$. Then for $\alpha \in [\frac12, \infty)$, we have
   \begin{align}
        H^{\uparrow}_\alpha(AX|B)_{\rho} &= H^{\uparrow}_{\alpha}(A|B)_{\rho} \ , \label{eq_det_func_uparrow} \\
        H_\alpha(AX|B)_{\rho} &= H_{\alpha}(A|B)_{\rho} \label{eq_det_func_noarrow} \ .
    \end{align}
\end{lemma}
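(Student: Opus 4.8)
The plan is to exploit the block structure that the pinching in $\cE$ induces. Write $\rho_{A B X} = \sum_{x \in \cX} \rho_{A B, x} \otimes \proj{x}_X$ and $\rho_{A B} = \sum_x \rho_{A B, x}$, where $\rho_{A B, x} = \sum_{(y,z):\, t(y,z) = x} (\Pi_{y,A} \otimes \Pi_{z,B})\, \omega_{A B}\, (\Pi_{y,A} \otimes \Pi_{z,B})$; one may assume $\sum_y \Pi_{y,A} = \id_A$ and $\sum_z \Pi_{z,B} = \id_B$, since only the corresponding corner of $\omega_{A B}$ affects $\rho$. Two observations drive the proof: (i) the $\rho_{A B, x}$ have mutually orthogonal supports, since each is supported in $V_x := \bigoplus_{(y,z):\, t(y,z)=x} \mathrm{range}(\Pi_{y,A}) \otimes \mathrm{range}(\Pi_{z,B})$ and the projectors $\Pi_{y,A} \otimes \Pi_{z,B}$ are mutually orthogonal over all $(y,z)$; and (ii) consequently $\rho_{A B}$ and $\rho_{A B X}$ are fixed by the pinching $\cP_B(W) = \sum_z \Pi_{z,B}\, W\, \Pi_{z,B}$ applied to the $B$-system, and in particular $\rho_B = \tr_A \rho_{A B}$ commutes with every $\Pi_{z,B}$.

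The core identity is that for any sub-normalised $\sigma_B$ commuting with all $\Pi_{z,B}$ and for $\alpha \in (0,1) \cup (1,\infty)$,
\[
  D_\alpha(\rho_{A B X} \,\|\, \id_{A X} \otimes \sigma_B) = D_\alpha(\rho_{A B} \,\|\, \id_{A} \otimes \sigma_B) \ .
\]
To see this, set $c = \tfrac{1-\alpha}{2\alpha}$: the operator $\id_A \otimes \sigma_B^{c}$ preserves each $V_x$ (as $\sigma_B^{c}$ commutes with the $\Pi_{z,B}$), so the operators $T_x := (\id_A \otimes \sigma_B^{c})\, \rho_{A B, x}\, (\id_A \otimes \sigma_B^{c})$ still have mutually orthogonal supports. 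Hence $\bigl( \sum_x T_x \bigr)^\alpha = \sum_x T_x^\alpha$ and $\bigl( \sum_x T_x \otimes \proj{x} \bigr)^\alpha = \sum_x T_x^\alpha \otimes \proj{x}$, so by Definition~\ref{def_sandwichedrelativeentropy} both sides above equal $\tfrac{1}{\alpha - 1} \log \sum_x \tr(T_x^\alpha)$.

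Equation~\eqref{eq_det_func_noarrow} then follows at once by taking $\sigma_B = \rho_B$ — which commutes with the $\Pi_{z,B}$ by (ii) and is the common $B$-marginal since $\tr_{AX} \rho_{A B X} = \tr_A \rho_{A B} = \rho_B$ — together with~\eqref{eq_condrelentropy}; this part in fact works for all $\alpha > 0$, with $\alpha = 1$ being the standard fact that a register determined by $AB$ carries no conditional entropy. For~\eqref{eq_det_func_uparrow} I would first reduce the infimum over $\sigma_B$ in~\eqref{eq_condrelentropy_uparrow}, for both $\rho_{AB}$ and $\rho_{ABX}$, to the $\sigma_B$ commuting with all $\Pi_{z,B}$: given an arbitrary sub-normalised $\sigma_B$, the state $\cP_B(\sigma_B)$ commutes with the $\Pi_{z,B}$, and since $\rho_{AB}$ (resp.\ $\rho_{ABX}$) is fixed by $\id \otimes \cP_B$, the data-processing inequality for $D_\alpha$ under $\id \otimes \cP_B$ gives $D_\alpha(\rho_{AB} \,\|\, \id_A \otimes \cP_B(\sigma_B)) \leq D_\alpha(\rho_{AB} \,\|\, \id_A \otimes \sigma_B)$, and likewise with the $X$ register. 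The two infima are therefore attained on the restricted set, where the core identity makes the integrands coincide, so the infima agree — this is~\eqref{eq_det_func_uparrow}. The one genuine ingredient is the data-processing inequality for $D_\alpha$, which holds precisely in the stated range $\alpha \in [\tfrac12, \infty)$; everything else is bookkeeping with the orthogonal direct sum $\bigoplus_x V_x$, and I expect the only care needed is verifying that every conjugation and pinching respects that decomposition.
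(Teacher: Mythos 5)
Your proof is correct and follows essentially the same route as the paper's: reduce the infimum over $\sigma_B$ to pinched states via the data-processing inequality and the invariance of $\rho$ under the $B$-pinching, then use the mutual orthogonality of the blocks $(\Pi_{y,A}\otimes\Pi_{z,B})\,\omega\,(\Pi_{y,A}\otimes\Pi_{z,B})$ to show that attaching the register $X$ does not change $D_\alpha$. Your observations that~\eqref{eq_det_func_noarrow} needs no data processing (so holds for all $\alpha>0$) and that the orthogonal-support argument replaces the paper's explicit $(y,z)$-sum are accurate but do not change the substance of the argument.
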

\begin{proof}
We only prove Eq.~\eqref{eq_det_func_uparrow}. Eq.~\eqref{eq_det_func_noarrow} is easier.
Let $\cM$ be the TPCP map from $B$ to $B$ defined by $\cM(W_B) = \sum_{z} \Pi_{z, B} W_{B} \Pi_{z, B}$. Using the data processing inequality and the fact that $(\cI_{AX} \otimes \cM )(\rho_{ABX}) = \rho_{ABX}$, we have
\begin{align}
H^{\uparrow}_\alpha(AX|B)_{\rho} = - \inf_{\sigma} D_{\alpha}(\rho_{ABX} \| \id_{AX} \otimes \sigma_B) = - \inf_{\sigma} D_{\alpha}(\rho_{ABX} \| \id_{AX} \otimes \cM(\sigma_B)) \ .
\end{align}
Similarly,
\begin{align}
H^{\uparrow}_\alpha(A|B)_{\rho} = - \inf_{\sigma} D_{\alpha}(\rho_{AB} \| \id_{A} \otimes \cM(\sigma_B)) \ .
\end{align}
We now show that for any state $\sigma_B$, we have $D_{\alpha}(\rho_{ABX} \| \id_{AX} \otimes \cM(\sigma_B)) = D_{\alpha}(\rho_{AB} \| \id_{A} \otimes \cM(\sigma_B))$. To make the notation lighter, we use in the following $\Pi_{z}$ for $\Pi_{z,B}$ and $\Pi_{y}$ for $\Pi_{y,A}$. The relative entropy $D_{\alpha}(\rho_{ABX} \| \id_{AX} \otimes \cM(\sigma_B))$ is defined in terms of
\begin{align*}
&\tr \left( \cM(\sigma_{B})^{\frac{1-\alpha}{2\alpha}} \rho_{ABX} \cM(\sigma_{B})^{\frac{1-\alpha}{2\alpha}} \right)^{\alpha} \\
&=  \tr \left( \sum_{y,z} (\Pi_{z} \sigma_{B} \Pi_{z})^{\frac{1-\alpha}{2\alpha}} (\Pi_{y} \otimes \Pi_{z}) \omega_{AB} (\Pi_{y} \otimes \Pi_{z}) \otimes \proj{t(y,z)} (\Pi_{z} \sigma_{B} \Pi_{z})^{\frac{1-\alpha}{2\alpha}} \right)^{\alpha} \\
&= \sum_{y,z} \tr   \left( (\Pi_{z} \sigma_{B} \Pi_{z})^{\frac{1-\alpha}{2\alpha}} (\Pi_{y} \otimes \Pi_{z}) \omega_{AB} (\Pi_{y} \otimes \Pi_{z}) \otimes \proj{t(y,z)} (\Pi_{z} \sigma_{B} \Pi_{z})^{\frac{1-\alpha}{2\alpha}} \right)^{\alpha} \\
&= \sum_{y,z} \tr \left( (\Pi_{z} \sigma_{B} \Pi_{z})^{\frac{1-\alpha}{2\alpha}} (\Pi_{y} \otimes \Pi_{z}) \omega_{AB} (\Pi_{y} \otimes \Pi_{z}) (\Pi_{z} \sigma_{B} \Pi_{z})^{\frac{1-\alpha}{2\alpha}} \right)^{\alpha} \ ,
\end{align*}
where we used multiple times the orthogonality of the family $\{\Pi_{z}\}$ and of the family $\{\Pi_{y}\}$. Similarly, $D_{\alpha}(\rho_{AB} \| \id_{A} \otimes \cM(\sigma_B))$ is defined in terms of
\begin{align*}
&\tr \left( \cM(\sigma_{B})^{\frac{1-\alpha}{2\alpha}} \rho_{AB} \cM(\sigma_{B})^{\frac{1-\alpha}{2\alpha}} \right)^{\alpha} \\
&= \sum_{y,z} \tr \left( (\Pi_{z} \sigma_{B} \Pi_{z})^{\frac{1-\alpha}{2\alpha}} (\Pi_{y} \otimes \Pi_{z}) \omega_{AB} (\Pi_{y} \otimes \Pi_{z}) (\Pi_{z} \sigma_{B} \Pi_{z})^{\frac{1-\alpha}{2\alpha}} \right)^{\alpha} \ .
\end{align*}
And this concludes the proof.
\end{proof}

In the subsequent arguments we will use the quantity
    \begin{align*}
        D'_\alpha(\rho\|\sigma) = \frac{1}{\alpha-1} \log\tr(\rho^\alpha \sigma^{1-\alpha}) \ .
  \end{align*}
  which is defined for any non-negative operators $\rho$ and $\sigma$ on the same space and for any $\alpha \in [0, 1) \cup (1, \infty)$. As observed in~\cite{WWY13,Datta,FrankLieb}, it follows from the Araki-Lieb-Thirring inequality~\cite{LiebThirring,Araki} that
  \begin{align} \label{eq_DDprimerelation}
    D_\alpha(\rho \| \sigma) \leqslant D'_{\alpha}(\rho \| \sigma) \ .
  \end{align}
  Furthermore, we can define a conditional entropy based on this quantity:
  \begin{equation}\label{eqn:halpha-duality}
      H'_{\alpha}(A|B)_{\rho} := - \inf_{\sigma_B} D'(\rho_{AB} \| \ident_A \otimes \sigma_B).
  \end{equation}
  In \cite[Theorem 2]{TomamichelBertaHayashi}, it is shown that $H'$ and $H$ are duals of each other, in the sense that
  \begin{align} \label{eq_Hduality}
   H_{\alpha}(A|B)_{\rho} = -H'_{\frac{1}{\alpha}}(A|C)_{\rho} 
  \end{align}
  for any pure state $\rho_{ABC}$.

The following lemma is another variant of Lemma~8 of~\cite{TCR09} (see also Lemma~6.3 of~\cite{Tom12}). 

\begin{lemma} \label{lem_DalphaDbound}
  Let $\rho$ be a density operator and $\sigma$ a non-negative operator, let $\eta = \max(4, 2^{D'_2(\rho \| \sigma)} + 2^{-D'_0(\rho \| \sigma)} + 1)$, and let $\alpha \in (1, 1 + 1/ \log \eta) $. Then
  \begin{align*}
    D'_{\alpha}(\rho \| \sigma) < D(\rho \| \sigma) + (\alpha-1) (\log \eta)^2 \ .
  \end{align*}
\end{lemma}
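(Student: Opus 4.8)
The plan is to Taylor-expand the function
$g(\alpha) := \log\tr(\rho^\alpha \sigma^{1-\alpha}) = (\alpha-1)\,D'_\alpha(\rho\|\sigma)$
about $\alpha = 1$ and control the second-order remainder. I may assume $\Supp(\rho)\subseteq\Supp(\sigma)$ (otherwise $D(\rho\|\sigma)=+\infty$), and work on $\Supp(\sigma)$, where $\sigma$ is invertible and $g$ is analytic near $[0,2]$. The key structural observation is that $g$ is, up to the change of base, a scalar cumulant generating function: diagonalising $\rho = \sum_j r_j\proj{j}$ and $\sigma = \sum_k s_k\proj{k}$ and setting $p_{jk} = |\braket{j}{k}|^2 r_j$ (a probability distribution, once the sums are restricted to $r_j,s_k>0$) and $\Lambda_{jk} = \ln(r_j/s_k)$, one has $\tr(\rho^\alpha\sigma^{1-\alpha}) = \sum_{j,k}p_{jk}e^{(\alpha-1)\Lambda_{jk}} =: \mathbb{E}_p[e^{(\alpha-1)\Lambda}]$. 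Hence $g$ is convex in $\alpha$, with $g(1) = 0$, $g(2) = D'_2(\rho\|\sigma)$, $g(0) = -D'_0(\rho\|\sigma)$, and, by a direct computation of the derivative at $\alpha=1$, $g'(1) = D(\rho\|\sigma)$.

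By Taylor's theorem with Lagrange remainder, $D'_\alpha(\rho\|\sigma) = g(\alpha)/(\alpha-1) = g'(1) + \tfrac{\alpha-1}{2}g''(\xi) = D(\rho\|\sigma) + \tfrac{\alpha-1}{2}g''(\xi)$ for some $\xi\in(1,\alpha)$, so it suffices to show $g''(\xi)\le 2(\log\eta)^2$ for every $\xi$ in the admissible range $(1, 1 + 1/\log\eta)$. Now $g''(\xi) = \tfrac{1}{\ln 2}\operatorname{Var}_{\mu_\xi}[\Lambda]\le\tfrac{1}{\ln 2}\mathbb{E}_{\mu_\xi}[\Lambda^2]$, where $\mu_\xi(j,k)\propto p_{jk}e^{(\xi-1)\Lambda_{jk}}$ is the tilted law. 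Using the elementary inequality $(\ln u)^2\le (u^{2\delta} + u^{-2\delta})/(\delta^2 e^2)$ (valid for all $u>0$ and $\delta>0$) with $u = e^{\Lambda_{jk}}$ gives $\mathbb{E}_{\mu_\xi}[\Lambda^2]\le \frac{1}{\delta^2 e^2}\bigl(\tfrac{Q_{\xi+2\delta}}{Q_\xi} + \tfrac{Q_{\xi-2\delta}}{Q_\xi}\bigr)$ with $Q_\beta := \tr(\rho^\beta\sigma^{1-\beta}) = 2^{g(\beta)}$. The ratios are controlled using convexity of $g$ on $[0,2]$ together with $g(1)=0$: these give $g(1+u)\le |u|\max(D'_2,-D'_0,0)$ and $g(1+u)\ge -|u|\,|D|$ for $|u|\le 1$, and since $\eta > 2^{D'_2} + 2^{-D'_0}$ forces $D'_2$, $-D'_0$ and $|D|$ (note $D'_0\le D\le D'_2$) to all be $<\log\eta$, one gets $Q_{1+u}\le\eta^{|u|}$ for $|u|\le 1$ and $Q_\xi > 1/2$ on the admissible range. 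Choosing $\delta$ of order $1/\log\eta$ keeps $\xi\pm 2\delta$ inside $[0,2]$ and makes each ratio a bounded constant, yielding $g''(\xi)\le C(\log\eta)^2$.

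The real work is quantitative rather than conceptual: I must pick $\delta$ and organise this last step tightly enough that $C\le 2$. This is exactly where the precise definition of $\eta$ enters — in particular $\eta\ge 4$, hence $\log\eta\ge 2$ and $(\log\eta)^2\ge 4$, and the additive $1$ in $\eta$ supply the slack — and it may be necessary to keep the variance rather than the second moment, to use a scalar bound on $(\ln u)^2$ that is sharper near $u=1$, or to tilt asymmetrically (a small $\delta$ toward $+\infty$, a larger one toward $0$), in order to absorb the remaining constant factor. A modest amount of extra bookkeeping disposes of the degenerate cases (restricting to $\Supp(\sigma)$ as above, and handling $\Supp(\rho)\not\subseteq\Supp(\sigma)$ directly).
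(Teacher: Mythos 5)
Your skeleton (write $g(\alpha)=\log\tr(\rho^\alpha\sigma^{1-\alpha})$ as a cumulant generating function, Taylor with Lagrange remainder, bound $g''(\xi)=\tfrac{1}{\ln 2}\mathrm{Var}_{\mu_\xi}[\Lambda]$ by tilted moments) is sound and would yield $D'_\alpha < D + C(\alpha-1)(\log\eta)^2$ for \emph{some} constant $C$, but it does not reach $C=1$ as stated, and the shortfall is not small. Concretely: with $\delta=c/\log\eta$, the range constraint $\xi+2\delta\leqslant 2$ (needed so that the convexity bound $Q_{1+u}\leqslant\eta^{|u|}$ applies) together with $\xi<1+1/\log\eta$ forces $c\leqslant 1/2$; your second-moment bound then gives $\tfrac{1}{\ln 2}\mathbb{E}_{\mu_\xi}[\Lambda^2]\leqslant \tfrac{2(2^{1+2c}+2^{\max(1,2c)})}{c^2 e^2\ln 2}(\log\eta)^2$, whose prefactor is minimised at $c=1/2$ where it equals $48/(e^2\ln 2)\approx 9.4$, against the required $2$. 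So the argument as organised proves roughly $D'_\alpha<D+5(\alpha-1)(\log\eta)^2$. You flag this yourself, but none of the three remedies you list is carried out, and none obviously closes a factor of nearly $5$: dropping to the variance gains nothing in the worst case ($\mathbb{E}_{\mu_\xi}[\Lambda]=0$); the scalar bound $x^2\leqslant e^{-2}\delta^{-2}e^{2\delta|x|}$ already has the optimal constant for a single exponential majorant; and asymmetric tilting is blocked by the same range/growth constraints on $Q_{\xi\pm 2\delta}$. This is therefore a genuine gap, not bookkeeping: the lemma's constant feeds directly into $V^2/4$ in Proposition~\ref{prop:accumulation-alpha} and the $c\sqrt{n}$ of Theorem~\ref{thm:entropyaccumulationext}.

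The ingredient your route is missing is a concentration-at-the-mean step. The paper (following Lemma~8 of~\cite{TCR09}) bounds $D'_\alpha - D \leqslant \tfrac{1}{\beta\ln 2}\bra{\phi}r_\beta(X)\ket{\phi}$ with $r_\beta(t)=t^\beta-\beta\ln t-1$, $X=\rho\otimes(\sigma^{-1})^{\intercal}$, then majorises $r_\beta\leqslant s_\beta$ with $s_\beta(t)=2(\cosh(\beta\ln t)-1)$, uses $s_\beta(t)\leqslant s_\beta(t+t^{-1}+1)$, and exploits that $s_\beta$ is \emph{concave} on $[3,\infty)$ for $\beta\leqslant 1/2$ to apply Jensen and collapse the entire spectral average onto the single point $\eta\geqslant\bra{\phi}X+X^{-1}+\id\ket{\phi}$. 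At that point $\beta\ln\eta<\ln 2$, so the quadratic approximation $2(\cosh x-1)\leqslant x^2\cosh x$ is nearly tight and the constant $1$ drops out. In your language, this replaces the fixed exponential tilt $e^{\pm 2\delta\Lambda}$ (which pays the worst-case ratio $Q_{\xi\pm2\delta}/Q_\xi$) by an evaluation of a concave interpolant at the \emph{mean} of $e^{\Lambda}+e^{-\Lambda}+1$. To repair your proof you would need to import exactly this Jensen step, at which point you have essentially reproduced the paper's argument rather than given an alternative one.
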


\begin{proof}[Proof sketch.]
  The proof proceeds in the same way as the proof of Lemma~8 of~\cite{TCR09}. The idea is to consider, for any $\beta > 0$, the functions $r_{\beta}$ and $s_{\beta}$ from $\mathbb{R}^+$ to $\mathbb{R}^+$ defined by
  \begin{align*}
    r_{\beta}(t) & = t^\beta - \beta \ln t - 1 \\
    s_{\beta}(t) & = 2 (\cosh(\beta \ln t) - 1)  \ .
  \end{align*}
  It can be readily verified that $r_\beta(t) \leqslant s_\beta(t)$ for all $t > 0$, that $s_{\beta}(t) = s_{\beta}(1/t)$, that $s_{\beta}(t)$ is monotonically increasing for $t > 1$, and that $s_{\beta}(t)$ is concave for $\beta < 1/2$ and $t \geqslant 3$. 
  It is then shown that
  \begin{align*}
    D'_{\alpha}(\rho \| \sigma) \leqslant D(\rho \| \sigma) + \frac{1}{\beta \ln 2} \bra{\phi} r_{\beta}(X) \ket{\phi} \ , 
  \end{align*}
  where $\beta = \alpha - 1$, $X = \rho \otimes \mathrm{\sigma^{-1}}^T$, and $\ket{\phi} = (\sqrt{\rho} \otimes \id) \ket{\gamma}$ with  $\ket{\gamma} = \sum_i \ket{i} \otimes \ket{i}$. 
  
  From there, we proceed in a slightly different way, noting that
  \begin{align*}
    s_{\beta}(t) = s_{\beta}(\max(t, \frac{1}{t})) \leqslant s_{\beta}(\max(t, \frac{1}{t})+1) \leqslant s_{\beta}(t+\frac{1}{t} +1) \ .
  \end{align*}
  Using this as well as Lemma~11 of~\cite{TCR09}, we obtain
  \begin{align*}
    \bra{\phi} r_{\beta}(X) \ket{\phi} \leqslant \bra{\phi} s_{\beta}(X) \ket{\phi} 
    \leqslant  \bra{\phi} s_{\beta}(X + \frac{1}{X} + \mathbb{I}) \ket{\phi} 
    \leqslant  s_{\beta}(\bra{\phi} X + \frac{1}{X} + \mathbb{I}\ket{\phi})  \ ,
  \end{align*}
  which holds because $s_\beta$ is concave for $\beta \leqslant \frac{1}{\log \eta} \leqslant \frac{1}{2}$, and because the eigenvalues of $X+\frac{1}{X} + \mathbb{I}$ lie in the interval $[3, \infty)$. Using that 
  \begin{align*}
    \bra{\phi} X+\frac{1}{X} + \mathbb{I} \ket{\phi} = 2^{D'_2(\rho \| \sigma)} + 2^{-D'_0(\rho \| \sigma)} + 1 \leqslant \eta
  \end{align*}
  and combining the inequalities above, we find
  \begin{align*}
      D'_{\alpha}(\rho \| \sigma) \leqslant D(\rho \| \sigma) + \frac{1}{\beta \ln 2} s_{\beta}(\eta) \ .
  \end{align*}
  Applying Taylor's theorem to an expansion around $\beta = 0$ gives
  \begin{align*}
    \frac{1}{\beta \ln 2} s_\beta(\eta) \leqslant \frac{(\beta \ln \eta)^2 \cosh(\beta \ln \eta)}{\beta \ln 2} \leqslant \beta (\log \eta)^2 \ln 2 \cosh(\beta \ln \eta) < \beta (\log \eta)^2 \ ,
  \end{align*}
  where for the last inequality, we use the fact that $\ln 2 \cosh(\ln 2) < 1$.
\end{proof}

The following lemma is a generalisation of Proposition~3.10 of~\cite{MuellerLennertMaster}. In~\cite[Section 6.4.2]{Tom15}, using a Taylor approximation, the factor in front of $(\alpha - 1)$ can be improved although at the price of having the error term containing non-explicit constants.

\begin{lemma} \label{lem_HalphaH}
   For any density operator $\rho_{A B}$ and $1 < \alpha < 1+ 1/\log(1 + 2 d_A)$
   \begin{align*}
       H(A|B)_{\rho} -   (\alpha - 1) \log^2(1+2 d_A) &<  H_\alpha(A|B)_{\rho} \\
        &\leqslant H_{\frac{1}{\alpha}}(A|B)_{\rho} <  H(A|B)_{\rho} +  (\alpha - 1) \log^2(1 + 2 d_A) \ ,
  \end{align*}
  where $d_A = \dim A$. 
  \end{lemma}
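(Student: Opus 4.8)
The statement bounds $H_\alpha$ and $H_{1/\alpha}$ (which sandwich $H_1 = H$ by monotonicity in $\alpha$) between $H \pm (\alpha-1)\log^2(1+2d_A)$. The natural route is to pass to the relative-entropy formulation and to the auxiliary quantity $D'_\alpha$. First I would recall the duality $H_\alpha(A|B)_\rho = -H'_{1/\alpha}(A|C)_\rho$ for a purification $\rho_{ABC}$, together with $H_\alpha \leqslant H_{1/\alpha}$ (monotonicity of sandwiched R\'enyi entropies in the order), so that it suffices to prove the two outer inequalities: a lower bound on $H_\alpha(A|B)_\rho$ and an upper bound on $H_{1/\alpha}(A|B)_\rho$. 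By the above duality applied to a purification, the upper bound on $H_{1/\alpha}(A|B)_\rho$ translates into a lower bound on $H'_{\alpha}(A|C)_\rho$, i.e.\ into an upper bound on $\inf_{\sigma_C} D'_\alpha(\rho_{AC}\|\id_A\otimes\sigma_C)$; and similarly the lower bound on $H_\alpha(A|B)_\rho$ is an upper bound on $\inf_{\sigma_B} D_\alpha(\rho_{AB}\|\id_A\otimes\sigma_B) \leqslant \inf_{\sigma_B} D'_\alpha(\rho_{AB}\|\id_A\otimes\sigma_B)$, using $D_\alpha \leqslant D'_\alpha$. So both directions reduce to controlling $D'_\alpha$ from above by $D$ plus an $(\alpha-1)$-order term.

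For that I would invoke Lemma~\ref{lem_DalphaDbound}: choosing $\sigma_B = \rho_B$ (resp.\ $\sigma_C = \rho_C$), it gives $D'_\alpha(\rho_{AB}\|\id_A\otimes\rho_B) < D(\rho_{AB}\|\id_A\otimes\rho_B) + (\alpha-1)(\log\eta)^2$ with $\eta = \max(4,\, 2^{D'_2(\rho_{AB}\|\id_A\otimes\rho_B)} + 2^{-D'_0(\rho_{AB}\|\id_A\otimes\rho_B)} + 1)$, valid for $1 < \alpha < 1 + 1/\log\eta$. Since $D(\rho_{AB}\|\id_A\otimes\rho_B) = -H(A|B)_\rho$, this yields $-H_\alpha(A|B)_\rho \leqslant \inf_\sigma D'_\alpha(\rho_{AB}\|\id_A\otimes\sigma_B) \leqslant -H(A|B)_\rho + (\alpha-1)(\log\eta)^2$, i.e.\ $H_\alpha(A|B)_\rho > H(A|B)_\rho - (\alpha-1)(\log\eta)^2$. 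The remaining task is to bound $\eta$ by $1 + 2d_A$. Here one uses that $D'_2(\rho_{AB}\|\id_A\otimes\rho_B) = H_2'$-type quantity is at most $\log d_A$ (the collision entropy $H_2(A|B) \geqslant -\log d_A$, so $D'_2 \leqslant \log d_A$), while $D'_0(\rho_{AB}\|\id_A\otimes\rho_B) \geqslant -\log d_A$ so that $2^{-D'_0} \leqslant d_A$; hence $2^{D'_2} + 2^{-D'_0} + 1 \leqslant 2d_A + 1$, and since $2d_A+1 \geqslant 4$ when $d_A\geqslant 2$ (the case $d_A=1$ being trivial), $\eta \leqslant 1+2d_A$. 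The condition $1<\alpha<1+1/\log(1+2d_A)$ in the lemma statement then matches the hypothesis. The dual argument with $C$ in place of $B$ proves the upper bound on $H_{1/\alpha}(A|B)_\rho$: $H_{1/\alpha}(A|B)_\rho = -H'_\alpha(A|C)_\rho \leqslant -(-D'_\alpha(\rho_{AC}\|\id_A\otimes\rho_C)) = D'_\alpha(\rho_{AC}\|\id_A\otimes\rho_C)$ (with the inner infimum realized, up to the bound, at $\rho_C$), and $D(\rho_{AC}\|\id_A\otimes\rho_C) = -H(A|C)_\rho = H(A|B)_\rho$ for a pure tripartite state, plus the same $(\alpha-1)\log^2(1+2d_A)$ error; the bound on $\eta$ is identical since it only depends on $d_A$.

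The main obstacle is bookkeeping rather than conceptual: one must carefully track which entropy is being dualized to $C$ versus kept on $B$, verify that the infimum over $\sigma$ in the $H^\uparrow$/$H'$ definitions can be replaced by the marginal $\rho_B$ (resp.\ $\rho_C$) while only incurring an upper bound on the relevant $D'_\alpha$ (this is fine because we only need the inequality direction), and confirm that Lemma~\ref{lem_DalphaDbound}'s hypothesis $\alpha < 1 + 1/\log\eta$ is implied by $\alpha < 1 + 1/\log(1+2d_A)$ via $\eta \leqslant 1+2d_A$. I would also double-check the edge case $d_A = 1$, where $A$ is trivial, all conditional entropies vanish, and the claimed inequalities hold trivially. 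No genuinely hard estimate is needed beyond Lemma~\ref{lem_DalphaDbound}, which is quoted.
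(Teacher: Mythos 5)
Your proposal is correct and follows essentially the same route as the paper's proof: the lower bound on $H_\alpha$ via $D_\alpha \leqslant D'_\alpha$ and Lemma~\ref{lem_DalphaDbound} applied at $\sigma_B = \rho_B$, the middle inequality by monotonicity in $\alpha$, and the upper bound on $H_{\frac{1}{\alpha}}$ via the duality to $H'_\alpha(A|C)$ and the same lemma applied at $\rho_C$. Your explicit justification that $\eta \leqslant 1+2d_A$ (via $D'_2 \leqslant \log d_A$ and $2^{-D'_0} \leqslant d_A$) and your handling of the trivial case $d_A=1$ fill in details the paper leaves implicit.
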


\begin{proof}
 We start with the proof of the first inequality. Lemma~\ref{lem_DalphaDbound} implies that
  \begin{align*}
      H(A|B)_{\rho} - (\alpha - 1)(\log (1+2 d_A))^2 <  -D'_\alpha(\rho_{A B} \| \id_A \otimes \rho_B) 
  \end{align*}
  holds for all $1 < \alpha < 1 + \frac{1}{\log(1 + 2 d_A)}$, where we have used that
   \begin{align*}
   1 + 2 d_A \geqslant 1 + 2^{-D'_{0}(\rho_{A B}\| \id_A \otimes \rho_B)} + 2^{D'_{2}(\rho_{A B}\| \id_A \otimes \rho_B)} \ . 
  \end{align*}
  Furthermore, because of~\eqref{eq_DDprimerelation} we have
  \begin{align*}
    - D'_{\alpha}(\rho_{A B} \| \id_A \otimes \rho_B) 
    \leqslant -D_\alpha(\rho_{A B} \| \id_A \otimes \rho_B)
    = H_\alpha(A|B)_{\rho}  \ .
  \end{align*}
  Combining this with the above concludes the proof of the first inequality. 
  
  The second inequality follows directly from the monotonicity of the relative R\'enyi entropy in $\alpha$~\cite{Beigi,MDSFT13}. 
  
%
  
    To prove the last inequality, we again use the duality relation (\ref{eqn:halpha-duality}):
    \begin{align*}
        H_{\frac{1}{\alpha}}(A | B)_{\rho} &= -H'_{\alpha}(A|C)_{\rho}\\
        &= \inf_{\sigma} D'_{\alpha}(\rho_{A C} \| \id_A \otimes \sigma_C)\\
        &\leqslant D'_{\alpha}(\rho_{A C} \| \id_A \otimes  \rho_C) \ .
  \end{align*}
  We may now again use  Lemma~\ref{lem_DalphaDbound} to obtain
  \begin{align*}
    D'_{\alpha}(\rho_{A C} \| \id_A \otimes \rho_C) < - H(A|C)_{\rho} + (\alpha - 1)  (\log(1 + 2 d_A))^2 \ .
  \end{align*}
  Combining the two inequalities with the fact that $-H(A|C) = H(A|B)$ concludes the proof.   
  
\end{proof}

The following lemma generalises a classical result originally proposed in~\cite{RennerWolf}. It follows rather directly from similar statements proved in~\cite{TCR09,Tom12,MuellerLennertMaster,MS14a}.

\begin{lemma} \label{lem_Hepsalpha}
  For any density operator $\rho$,  any non-negative operator $\sigma$, any $\alpha \in (1, 2]$, and any $0 < \eps < 1$, 
  \begin{align*}
    H_{\min}^{\eps}(A|B)_{\rho} & \geqslant H^{\uparrow}_{\alpha}(A|B)_{\rho} -  \frac{g(\eps)}{\alpha - 1}  \\
    H_{\max}^{\eps}(A|B)_{\rho} & \leqslant H_{\frac{1}{\alpha}}(A|B)_{\rho} +  \frac{g(\eps)}{\alpha - 1} 
   \ ,
  \end{align*}
  where $g(\eps) = - \log (1 - \sqrt{1-\eps^2}) < \log(2/\eps^2)$. 
\end{lemma}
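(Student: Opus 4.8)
The plan is to prove the two inequalities separately, and in each case to factor the argument through the $\max$/$\min$-divergence versions. For the first inequality, recall that $H_{\min}^{\eps}(A|B)_{\rho} = -\inf_{\sigma_B} D_{\max}^{\eps}(\rho_{AB} \,\|\, \id_A \otimes \sigma_B)$, where the smoothing is in purified distance; this is the standard dual formulation of the smooth min-entropy (see~\cite{TCR10,Tom12}). So it suffices to upper-bound $D_{\max}^{\eps}(\rho_{AB} \,\|\, \id_A \otimes \sigma_B)$ by $D_{\alpha}(\rho_{AB} \,\|\, \id_A \otimes \sigma_B) + g(\eps)/(\alpha-1)$ for each fixed $\sigma_B$, and then take the infimum over $\sigma_B$ on both sides (the infimum over $\sigma_B$ on the right-hand side reproduces $-H^{\uparrow}_{\alpha}(A|B)_{\rho}$ by Definition~\eqref{eq_condrelentropy_uparrow}). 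But that pointwise bound is exactly Lemma~\ref{lem:dmin-vs-dalpha}, applied with $\rho = \rho_{AB}$ normalised (so $\varepsilon_{\max} = \sqrt{2 - 1} = 1 > \eps$) and $\sigma = \id_A \otimes \sigma_B$, whose support contains that of $\rho_{AB}$ as soon as $\sigma_B$ has support containing $\rho_B$ (states $\sigma_B$ not meeting this condition give $D_\alpha = +\infty$ and are irrelevant to the infimum). This closes the first inequality, provided $\alpha \in (1,2]$, which is assumed.

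For the second inequality, the plan is to dualise. Let $\rho_{ABC}$ be a purification of $\rho_{AB}$. Then $H_{\max}^{\eps}(A|B)_{\rho} = -H_{\min}^{\eps}(A|C)_{\rho}$ by the duality of smooth entropies~\cite{TCR10,Tom12}, and $H_{\frac{1}{\alpha}}(A|B)_{\rho} = -H'_{\alpha}(A|C)_{\rho}$ by the duality relation~\eqref{eq_Hduality}. So the claimed bound $H_{\max}^{\eps}(A|B)_{\rho} \leqslant H_{\frac{1}{\alpha}}(A|B)_{\rho} + g(\eps)/(\alpha-1)$ is equivalent to $H_{\min}^{\eps}(A|C)_{\rho} \geqslant H'_{\alpha}(A|C)_{\rho} - g(\eps)/(\alpha-1)$. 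Now $H'_{\alpha}(A|C)_{\rho} = -\inf_{\sigma_C} D'_{\alpha}(\rho_{AC} \,\|\, \id_A \otimes \sigma_C)$ by~\eqref{eqn:halpha-duality}, so — arguing as before via $H_{\min}^{\eps}(A|C)_{\rho} = -\inf_{\sigma_C} D_{\max}^{\eps}(\rho_{AC}\,\|\,\id_A \otimes \sigma_C)$ — it suffices to show the pointwise bound $D_{\max}^{\eps}(\rho_{AC}\,\|\,\id_A \otimes \sigma_C) \leqslant D'_{\alpha}(\rho_{AC}\,\|\,\id_A \otimes \sigma_C) + g(\eps)/(\alpha-1)$ for each $\sigma_C$. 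This is precisely the inequality established inside the proof of Lemma~\ref{lem:dmin-vs-dalpha}: there one in fact bounds $\lambda \geqslant D_{\max}^{\eps}(\rho\,\|\,\sigma)$ by $\frac{1}{\alpha-1}\log\sum_i r_i^{\alpha} s_i^{1-\alpha} + g(\eps)/(\alpha-1)$, and by the data-processing step in that proof $\frac{1}{\alpha-1}\log\sum_i r_i^{\alpha}s_i^{1-\alpha} = D'_{\alpha}(\mathcal{F}(\rho)\,\|\,\mathcal{F}(\sigma)) \leqslant D'_{\alpha}(\rho\,\|\,\sigma)$, where the last inequality uses monotonicity of $D'_{\alpha}$ under TPCP maps (valid for $\alpha \in (1,2]$ by the Araki--Lieb--Thirring argument, cf.~\eqref{eq_DDprimerelation} and~\cite{FrankLieb}). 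So the same computation yields the $D'_{\alpha}$-version directly; alternatively one can cite $D_\alpha \leqslant D'_{\alpha}$ from~\eqref{eq_DDprimerelation} together with the already-proved $D_{\max}^{\eps} \leqslant D_\alpha + g(\eps)/(\alpha-1)$.

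Finally, the elementary estimate $g(\eps) = -\log(1-\sqrt{1-\eps^2}) < \log(2/\eps^2)$ follows from $1 - \sqrt{1-\eps^2} = \frac{\eps^2}{1+\sqrt{1-\eps^2}} \geqslant \frac{\eps^2}{2}$, hence $-\log(1-\sqrt{1-\eps^2}) \leqslant -\log(\eps^2/2) = \log(2/\eps^2)$. I do not expect a genuine obstacle here: the entire content is already packaged in Lemma~\ref{lem:dmin-vs-dalpha} and the various duality statements, so the only care needed is (i) getting the smoothing conventions (purified distance) and the direction of each duality right, and (ii) checking that restricting the infimum over $\sigma_B$ (resp.\ $\sigma_C$) to operators with the right support — the only ones with finite divergence — is harmless. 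The mildly delicate point is matching the $\varepsilon$ ranges: Lemma~\ref{lem:dmin-vs-dalpha} requires $\eps < \varepsilon_{\max}$, and for a normalised state $\varepsilon_{\max} = 1$, so every $\eps \in (0,1)$ is admissible, exactly as claimed.
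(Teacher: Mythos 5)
Your proposal is correct and follows essentially the same route as the paper: the first inequality is obtained by applying Lemma~\ref{lem:dmin-vs-dalpha} pointwise in $\sigma_B$ and optimising, and the second by dualising via $H_{\max}^{\eps}(A|B)_{\rho}=-H_{\min}^{\eps}(A|C)_{\rho}$ and $H_{\frac{1}{\alpha}}(A|B)_{\rho}=-H'_{\alpha}(A|C)_{\rho}$. Your way of sourcing the $D'_{\alpha}$ bound (from $D_{\alpha}\leqslant D'_{\alpha}$ together with the first part, rather than citing Proposition~6.2 of~\cite{Tom12}) is exactly the alternative the paper itself mentions in a footnote, so there is no substantive difference.
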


\begin{proof}
    For the first inequality, we use Lemma~\ref{lem:dmin-vs-dalpha}, which directly implies that
 \begin{align*}
     H_{\min}^\eps(A | B)_{\rho} \geqslant \sup_{\sigma_B} \big( - D_\alpha(\rho_{A B} \| \id_A \otimes \sigma_B) \big) -   \frac{g(\eps)}{\alpha - 1} \ . 
 \end{align*}
 The desired inequality then follows because
 \begin{align*}
     \sup_{\sigma_B} \big( - D_\alpha(\rho_{A B} \| \id_A \otimes \sigma_B) \big)  = H^{\uparrow}_\alpha(A | B)_{\rho} \ .
 \end{align*}
 
 To prove the second inequality we use the duality between smooth min- and max-entropy~\cite{TCR10}, which asserts that
\begin{align*}
  H_{\max}^\eps(A|B)_{\rho}
   = - H_{\min}^\eps(A|C)_{\rho} 
\end{align*}
hols for any purification $\rho_{A B C}$ of $\rho_{A B}$.  We can then employ Proposition~6.2 of~\cite{Tom12},\footnote{Alternatively, one may again use Lemma~3.9 of~\cite{MuellerLennertMaster}, which asserts that the relation is also true for $D_\alpha$ instead of $D'_\alpha$, together with~\eqref{eq_DDprimerelation}.}
 \begin{align*}
   - H_{\min}^\eps(A|C)_{\rho}  \leqslant \inf_{\sigma_C} D'_{\alpha}(\rho_{A C}\|\id_A \otimes \sigma_C) + \frac{g(\eps)}{\alpha - 1} \ .
 \end{align*}
 The claim then follows from (\ref{eqn:halpha-duality}):
 \begin{align*}
     \inf_{\sigma_C} D'_{\alpha}(\rho_{A C} \| \id_A \otimes \sigma_C) &= -H'_{\alpha}(A|C)_{\rho}\\
     &= H_{\frac{1}{\alpha}}(A|B)_{\rho} \ .
\end{align*}
\end{proof}

\section{Necessity of the Markov chain conditions}  \label{app_Markov}

The aim of this section is to illustrate that the Markov chain conditions in Theorem~\ref{thm:entropyaccumulationext} are important, in the sense that dropping them completely would render the statement invalid.

We first recall that a tri-partite density operator $\rho_{A B C}$ has the Markov state property $A \leftrightarrow B \leftrightarrow C$  if and only if the mutual information between $A$ and $C$ conditioned on $B$ equals zero, i.e., ${I(A : C | B)_{\rho}} = 0$~(see~\cite{Pet88,HJPW04}, as well as~\cite{FR2015} for a robust version). Using the properties of the conditional mutual information, one can easily derive the following claims:
\begin{itemize}
  \item Symmetry: $A \leftrightarrow B \leftrightarrow C$ implies $C \leftrightarrow B \leftrightarrow A$
  \item Local processing of endpoints: $A A' \leftrightarrow B \leftrightarrow C$ implies $A \leftrightarrow B \leftrightarrow C$
  \item Centering of information: $A A' \leftrightarrow B \leftrightarrow C$ implies $A \leftrightarrow A' B \leftrightarrow C$
  \item Composition: $A \leftrightarrow B \leftrightarrow C$ and $A' \leftrightarrow A B \leftrightarrow C$  imply $A A' \leftrightarrow B \leftrightarrow C$.
\end{itemize}

By the standard properties of Markov chains described above, it is straightforward to show that the set of Markov chain conditions~\eqref{eq_Markovgen} for a trivial $E$ system is equivalent to the set of conditions
\begin{equation}\label{eqn:Markovcond}
  A_1^{i} \leftrightarrow B_1^{i} \leftrightarrow B_{i+1}^{n} \qquad (i = 1, \ldots, n-1).
\end{equation}
Similarly, using the composition property, one can show that this set of conditions is equivalent to the set of conditions
\begin{align*}
  A_i \leftrightarrow A_1^{i-1} B_1^i \leftrightarrow B_{i+1}^n \qquad (i = 1, \ldots, n-1).
\end{align*}
The latter can be expressed in terms of the entropy equalities
\begin{align*}
   H(A_i | A_1^{i-1} B_1^i)_{\rho} = H(A_i | A_1^{i-1} B_1^n)_{\rho} \ .
\end{align*}
The entropy accumulation statement for the smooth min-entropy in the simplified form~\eqref{eq_entropyaccumulation} can thus be rewritten as
\begin{align} \label{eq_entropyaccumulationre}
     H_{\min}^\eps(A_1^n | B_1^n)_{\rho} > \sum_{i =1}^n \inf_{\omega} H(A_i | A_1^{i-1} B_1^{n})_{\cM_i(\omega)} - c_\eps \sqrt{n} \ .
\end{align}
Note that, if one replaces the  smooth min-entropy on the left hand side by the von Neumann entropy then this expression looks similar to the usual chain rule for von Neumann entropies, 
\begin{align*}
    H(A_1^n | B_1^n)_\rho = \sum_{i =1}^n H(A_i | A_1^{i-1} B_1^{n})_\rho  \ ,
\end{align*}
which holds for arbitrary $\rho_{A_1^n B_1^n}$. One may therefore wonder whether~\eqref{eq_entropyaccumulationre} may also hold without the Markov conditions~\eqref{eqn:Markovcond}. This is however not the case, as we are going to show with a specific example. 

The example is classical, in the sense that $A_1, \ldots, A_n$ and $B_1, \ldots B_n$ correspond to random variables and the map $\cM_i$ takes $a_1^{i-1} b_1^{i-1}$ as input and outputs $a_1^{i-1} b_1^{i-1}$ as is, together with $A_i B_i$ generated from the conditional distribution $\rho_{A_i B_i|A_1^{i-1} = a_1^{i-1}, B_1^{i-1} = b_1^{i-1}}$. With this setup, \eqref{eq_entropyaccumulationre} can be written as
  \begin{align}  \label{eq_maininequalitygeneralisedclassical}
     H_{\min}^\eps(A_1^n | B_1^n)_{\rho} > \ \sum_i \inf_{a_1^{i-1}, b_1^{i-1}, b_{i+1}^n} H(A_i|B_i)_{\rho_{|a_1^{i-1} b_1^{i-1} b_{i+1}^n}} -  c_\eps \sqrt{n}  \ .
\end{align}
Actually, \eqref{eq_maininequalitygeneralisedclassical} is even weaker than what would follow from~\eqref{eq_entropyaccumulationre} as we are taking the infimum also over $b_{i+1}^{n}$ but we will see that even this weaker inequality is false. In fact, consider the following construction, let $B_1, B_2, \ldots, B_n$ be $n$ mutually independent and uniformly distributed  $n$-bit strings. Furthermore let $C$ be a uniform random bit and let $A = A_1^n$ be an $n$-bit-string defined by
  \begin{align*}
    A = \begin{cases} B_1 \oplus B_2 \oplus \cdots \oplus B_n & \text{if $C=0$} \\ \text{uniform and independent} & \text{if $C=1$,} \end{cases} 
  \end{align*}
  where $\oplus$ denotes the bit-wise addition modulo~$2$. 
  In other words, with probability $1/2$,  the string $A$ is fully determined by $B_1, \dots, B_n$, and with probability $1/2$, $A$ is completely random. We then have, for $\eps \ll 1$,
  \begin{align*}
    H_{\min}^\eps(A_1^n | B_1^n)_{\rho} = H_{\min}^\eps(A_1^n | B_1 \oplus B_2 \oplus \cdots \oplus B_n)_{\rho}  \approx 1 \ .
  \end{align*}
  Furthermore, for any $i$ and for any fixed $a_1^{i-1}$, $b_1^{i-1}$, and $b_{i+1}^n$, we have 
  \begin{align*}
    H(A_i|B_i)_{\rho_{|a_1^{i-1} b_1^{i-1} b_{i+1}^{n}}} \geqslant \frac{1}{2} \ ,
  \end{align*}
  for the bit $A_i$ is random with probability $1/2$.  Since there are $n$ such terms in the sum on the right hand side of~\eqref{eq_maininequalitygeneralisedclassical}, it scales linearly in $n$. But we have just seen that the left hand side is roughly equal to~$1$. This shows that this inequality, and hence also the quantum version~\eqref{eq_entropyaccumulationre}, cannot hold in general if we drop the Markov chain conditions~\eqref{eqn:Markovcond}. 

\end{appendices}

\section*{Acknowledgments}
The authors would like to thank Martin M\"uller-Lennert, for allowing us to recopy Lemma~3.9 from his Master's thesis into this paper (as Lemma~\ref{lem:dmin-vs-dalpha}).  We also thank Rotem Arnon-Friedman and Thomas Vidick for useful discussions about applications of the entropy accumulation theorem, Carl Miller for discussions on randomness expansion and Marco Toma\-michel as well as the anonymous reviewers for comments on the manuscript. FD acknowledges the financial support of the Czech Science Foundation (GA {\v C}R) project no GA16-22211S and of the European Commission FP7 Project RAQUEL (grant No.\ 323970). OF acknowledges support from the French National Research Agency
via Project No. ANR-18-CE47-0011 (ACOM) and from LABEX MILYON (ANR-10-LABX-0070) of Université de Lyon, within the program ``Investissements d'Avenir" (ANR-11-IDEX-0007).
 RR acknowledges funding from the RAQUEL project, from the Swiss National Science Foundation (via grant No.\ 200020-135048 and the National Centre of Competence in Research ``Quantum Science and Technology''), from the European Research Council (grant No.\ 258932), and from the US Air Force Office of Scientific Research (grant Nos.\ FA9550-16-1-0245 and~FA9550-19-1-0202). 

\bibliographystyle{abbrv}
\bibliography{big2,biblio}

\end{document}